\newenvironment{proofof}[1]{\begin{proof}[Proof of {#1}]}{\end{proof}}
\newcommand{\eps}{\ensuremath{\varepsilon}\xspace}
\newcommand{\Tester}{\ensuremath{\mathcal{T}}\xspace} 
\newcommand{\property}{\ensuremath{\mathcal{P}}\xspace} 
\newcommand{\eqdef}{\stackrel{\rm def}{=}}
\newcommand{\accept}{\textsf{ACCEPT}\xspace}
\newcommand{\reject}{\textsf{REJECT}\xspace}
\newcommand{\half}{\frac{1}{2}}
\newcommand{\domain}{\ensuremath{[n]}\xspace} 
\newcommand{\yes}{\textsf{yes}\xspace}
\newcommand{\no}{\textsf{no}\xspace}
\newcommand{\dyes}{\ensuremath{\cal Y}\xspace}
\newcommand{\dno}{\ensuremath{\cal N}\xspace}
\newcommand{\littleO}[1]{{o\mleft({#1}\mright)}}
\newcommand{\bigO}[1]{{O\mleft({#1}\mright)}}
\newcommand{\bigTheta}[1]{{\Theta\mleft({#1}\mright)}}
\newcommand{\bigOmega}[1]{{\Omega\mleft({#1}\mright)}}
\newcommand{\tildeO}[1]{\tilde{O}\mleft({#1}\mright)}
\providecommand{\poly}{\operatorname*{poly}}
\newcommand{\setOfSuchThat}[2]{ \!\left\{\; #1 \;\colon\; #2\; \right\} } 			
\newcommand{\dtv}{\operatorname{d_{\rm TV}}}
\newcommand{\totalvardist}[2]{{\dtv\mleft({#1, #2}\mright)}}
\newcommand{\proba}{\Pr}
\newcommand{\probaOf}[1]{\proba\!\left[\, #1\, \right]}
\newcommand{\probaDistrOf}[2]{\proba_{#1}\!\left[\, #2\, \right]}
\newcommand{\supp}[1]{\operatorname{supp}\!\left(#1\right)}
\newcommand{\expect}[1]{\mathbb{E}\!\left[#1\right]}
\newcommand{\shortexpect}{\mathbb{E}}
\newcommand{\uniform}{\ensuremath{\mathcal{U}}}
\newcommand{\uniformOn}[1]{\ensuremath{\uniform_{ #1 } }}
\newcommand{\norm}[1]{\lVert#1{\rVert}}
\newcommand{\normone}[1]{{\norm{#1}}_1}
\newcommand{\abs}[1]{\left\lvert #1 \right\rvert}
\newcommand{\clg}[1]{\!\left\lceil #1 \right\rceil}
\newcommand{\ICOND}{{\sf INTCOND}\xspace}
\newcommand{\SAMP}{{\sf SAMP}\xspace}
\newcommand{\COND}{{\sf COND}\xspace}
\newcommand{\PCOND}{{\sf PAIRCOND}\xspace}
\newcommand{\D}{\ensuremath{D}}
\newcommand{\atom}{A}
\newcommand{\newset}{k_i^A}
\newcommand{\iha}[1]{\bm{E_1}(#1)}
\newcommand{\ihb}[1]{\bm{E_2}(#1)}
\newcommand{\ihc}[1]{\bm{E_3}(#1)}
\newcommand{\ie}{i.\,e.} 
\newcommand{\eg}{e.\,g.}
\newcommand{\expref}[2]{\texorpdfstring{\hyperref[#2]{#1~\ref{#2}}}{#1~\ref{#2}}} 
\newcommand{\expeqref}[2]{\texorpdfstring{\hyperref[#2]{#1~\eqref{#2}}}{#1~\eqref{#2}}}
\newcommand{\bpair}{bucket-pair\xspace}
\newcommand{\bpairs}{bucket-pairs\xspace}
\theoremstyle{plain}
\newtheorem{theorem}{Theorem}[section]
\newtheorem{lemma}[theorem]{Lemma}
\newtheorem{claim}[theorem]{Claim}
\newtheorem{proposition}[theorem]{Proposition}
\newtheorem{fact}[theorem]{Fact}
\theoremstyle{definition}
\newtheorem{definition}[theorem]{Definition}
\title{A Chasm Between Identity and Equivalence Testing with Conditional Queries}
\author {
Jayadev Acharya\thanks{Cornell University. {\tt acharya@cornell.edu}. Supported by a Cornell University Start Up, and NSF CRII-CIF-1657471. Part of this work was done when the author was a postdoctoral researcher at MIT.}
\and
Cl\'ement L.\ Canonne\thanks{Stanford University. {\tt ccanonne@cs.stanford.edu}. Supported by a Motwani Postdoctoral Fellowship. Part of this work was performed while the author was a graduate student at Columbia University, and supported by NSF grants CCF-1115703 and NSF CCF-1319788.}
\and
Gautam Kamath\thanks{Simons Institute for the Theory of Computing. {\tt g@csail.mit.edu}. Supported as a Microsoft Research Fellow, as part of the Simons-Berkeley Research Fellowship program. This work was supported by ONR N00014-12-1-0999, and NSF grants CCF-0953960 (CAREER) and CCF-1101491.}
}
\begin{document}
\maketitle

\begin{abstract}
A recent model for property testing of probability
distributions (Chakraborty et al., ITCS 2013, Canonne et al., SICOMP 2015) 
enables tremendous savings in the sample complexity of testing algorithms,
by allowing them to condition the sampling on subsets of the domain. 
In particular, Canonne, Ron, and Servedio 
(SICOMP~2015)   
showed that, in this setting, testing identity of an unknown distribution $\D$ 
(\ie, whether $\D=\D^\ast$ for an explicitly known $\D^\ast$) can be done
with a \emph{constant} number of queries (\ie, samples), independent
of the support size $n$~--~in contrast to the required $\Omega(\sqrt{n})$
in the standard sampling model. However, it was unclear whether the same
stark contrast exists 
for the case of testing equivalence, 
where \emph{both} distributions are unknown. 
Indeed, while
Canonne et al.   
established a $\poly(\log n)$-query upper bound for equivalence testing,
very recently brought down to 
$\tildeO{\log\log n}$ by Falahatgar et al. 
(COLT~2015),   
whether a dependence on the domain size $n$ is necessary was still open,
and explicitly posed by Fischer at the Bertinoro Workshop on
Sublinear Algorithms~(2014).    
In this
article,   
we answer the question in the affirmative, showing that
any testing algorithm for equivalence must make
$\bigOmega{\sqrt{\log\log n}}$ queries in the conditional sampling model.
Interestingly, this demonstrates
a   
gap between
identity and equivalence testing, absent in the standard sampling model
(where both problems have sampling complexity $n^{\Theta(1)}$).

We also obtain results on the query complexity of uniformity testing and
support-size estimation with conditional samples. In particular, we answer
a question of Chakraborty et al. 
(ITCS 2013)   
showing that \emph{non-adaptive} uniformity testing indeed requires
$\bigOmega{\log n}$
queries in the conditional model. This is an exponential improvement
on their previous lower bound of $\bigOmega{\log\log n}$, and matches
up to polynomial factors their $\poly(\log n)$ upper bound. For the
problem of support-size estimation, we provide both adaptive and
non-adaptive algorithms, with query
complexities   
$\poly(\log\log n)$ and $\poly(\log n)$, 
respectively,   
and complement them with a
qualitatively 
tight lower bound of
$\bigOmega{\log n}$ conditional queries for non-adaptive algorithms.
\end{abstract}
\newpage
\section{Introduction}\label{sec:intro}

\hfill\blockquote{\em{}No, Virginia, there is no constant-query tester.}\footnote{The curious reader is referred to~\cite{wiki:yesvirginia}.}\medskip

Understanding properties and characteristics of an unknown probability distribution is a fundamental problem in statistics, and one that has been thoroughly studied. However, it is only since the work of Goldreich and Ron~\citep{GRexp:00} and Batu et al.~\citep{BFRSW:00} that the problem has been considered through the lens of theoretical computer science, more particularly in the setting of \emph{property testing}. In this framework, an unknown ``huge object''~--~here a probability distribution over a humongous domain~--~can be accessed only by making a few local inspections, and the goal is to decide whether the object satisfies some prespecified property. While most of the literature focuses on the large sample regime and studies error exponents and rates of convergence, more recent testing algorithms look at these problems in the small-sample regime focusing on the probability of errors and sample complexity. (We refer the reader to~\citep{Fischer,Ron:08testlearn,Ron:10FNTTCS,PropertyTestingICS} for an introduction and surveys on the general field of property testing. Moreover, due to the specificities of our model we will interchangeably use the terms sample and query complexity, referring to conditional samples as ``queries.'')

Over the
subsequent  
decade, a flurry of
work explored
this new area, resulting in a better and often complete understanding of a number of questions in property testing of distributions, or distribution testing (see, 
\eg~\citep{GRexp:00, BFFKRW:01,BKR:04,Paninski:08,RS:09,AcharyaDJOP:11, BFRV:11, Rubinfeld:12:Taming, ILR:12, AcharyaDJOPS:12, CDVV:14,ValiantValiant:14} or \citep{clement:survey:distributions} for a survey). In many cases, these culminated in provably sample-optimal
algorithms,    
all of which    
required         
at least an $n^{\Omega(1)}$ dependence on the domain size $n$ in the sample complexity~--~a dependence which, while sublinear, can still be prohibitively large. However, the standard setting of distribution testing, where one only obtains independent samples from an unknown distribution $\D$, does not encompass \emph{all} scenarios one may encounter. In recent years, alternative models have thus been proposed to capture more specific situations~\citep{GMV:06,CFGM:13,CRS:12,LRR:13,CR:14}. Among these is the \emph{conditional oracle model}~\citep{CFGM:13,CRS:12} which will be the focus of our work. In this setting, the testing algorithm is given the ability to sample from conditional distributions: that is, to specify a subset $S$ of the domain and obtain samples from $\D_S$, the distribution induced by $\D$ on $S$ (the formal definition of the model can be found in \expref{Definition}{def:conditional:oracle}). The hope is that allowing a richer set of queries to the unknown underlying distributions might significantly reduce the number of samples the algorithms need, thereby sidestepping the strong lower bounds that hold in the standard sampling model.

\subsection{Motivation for the conditional model} 

A recent trend in testing and learning circumvents these impossibility results by providing a more flexible type of queries than independent samples. One example can be found in the recent paradigm of \emph{active learning}~\citep{Dasgupta:05,ActiveLearning:09} (and its testing counterpart, \emph{active testing}~\citep{balcan2012active}), which modifies and generalizes the usual unsupervised learning paradigm. In this setting, the algorithm is provided with unlabeled examples only, but can then adaptively \emph{request} the label of any of these examples. While not directly comparable to these two frameworks (which are not applicable to the study of probability distributions), the conditional sampling model we shall work on shares some similarity in spirit. Specifically, it provides the algorithm with some additional power, and the ability to perform (some type of) more powerful queries to the object to be learned or tested. 

The setting of conditional sampling is related to that of \emph{group testing}, where the objective is to identify a set of defective individuals among a large population, by querying whether suitably chosen subsets contain at least one defective individual. Group testing has been a field of interest since the 40's and has remained an active area of research since
(see, 
\eg~\citep{Dorfman:43, DH:00, HD:00, ChanJSA:12}). The type of queries allowed in this framework is reminiscent of conditional sampling, where one obtains samples conditioned on a subset. This connection between group testing and conditional sampling is explored in greater detail in~\citep{ACK:15-isit}.

Of course, a crucial aspect of designing and studying these new models of learning and testing is to understand how justified and natural they are, and argue that they do indeed capture natural situations. In the case of distribution testing, the conditional access model does meet these criteria, as discussed in~\citep{CFGM:13} and~\citep{CRS:12}. Namely, besides the purely theoretical aspect of helping understand the key aspects and limitations of the underlying statistical problems, this framework is characteristic of situations which arise in natural and social sciences. At a very high-level, \emph{any} scenario where an experimenter or practitioner is able to restrict the set of possible outcomes of an experiment or poll~--~\eg, in chemistry, where one might control some factor such as the acidity of a solution; or in sociology by performing stratified polling~--~provides this experimenter with the sort of access granted by the conditional model. 

\subsection{Background and previous work}\label{ssec:intro:background}

We focus in this paper on proving lower bounds for testing two extremely natural properties of distributions, namely \emph{equivalence testing} (``are these two distributions identical?'') and \emph{support-size estimation} (``how many different outcomes can actually be observed?''). Along the way, we use some of the techniques we develop to obtain an upper bound on the query complexity of the latter. We state below an informal definition of these two problems, along with closely related ones (uniformity and {identity} testing). Hereafter, ``oracle access'' to a distribution $\D$ over $[n]=\{1,\dots,n\}$  means access to samples generated independently from  $\D$, and ``far'' is with respect to the total variation distance $\totalvardist{\D_1}{\D_2} = \sup_{S\subseteq \domain} (\D_1(S)-\D_2(S))$ between probability distributions. Moreover, as in usual in property testing, in all the problems below we allow the algorithms to be randomized. In the description of the results below, we will often consider the distance parameter $\eps$ as a constant (and focus on the domain size $n$ as the key parameter).
\begin{description}
  \item[Uniformity testing:] granted oracle access to $\D$, decide whether $\D$ equals $\uniformOn{\domain}$ (the uniform distribution on $[n]$) or is far from it;
  \item[Identity testing:] granted oracle access to $\D$ and the full description of a fixed $\D^\ast$, decide whether $\D$ equals $\D^\ast$ or is far from it; 
  \item[Equivalence (closeness) testing:] granted independent oracle accesses to $\D_1$, $\D_2$ (both unknown), decide whether $\D_1$ and $\D_2$ are equal or far from each other. 
  \item[Support-size estimation:] granted oracle access to $\D$, return a multiplicative approximation of the size of the support\footnotemark{} $\supp{\D}=\setOfSuchThat{x}{\D(x) > 0}$. \footnotetext{For this problem, it is typically assumed that all points in the support have probability mass at least $\bigOmega{1}/n$, as without such guarantee it becomes impossible to give any non-trivial estimate (consider for instance a distribution $\D$ such that $\D(i)\propto 1/2^{in}$).}
\end{description}
It is not difficult to see that each of the second and third problems generalizes the previous, and therefore has query complexity at least as big. All of these tasks are known to require sample complexity $n^{\bigOmega{1}}$ in the standard sampling model (\SAMP); yet, as prior work~\citep{CFGM:13,CRS:12} shows, their complexity decreases tremendously when one allows the more flexible type of access to the distribution(s) provided by a conditional sampling oracle (\COND). For the problems of uniformity testing and identity testing, the sample complexity even becomes a constant provided the testing algorithm is allowed to be \emph{adaptive} (\ie, when the next queries it makes can depend on the samples it previously obtained). 

\paragraph{Testing uniformity and identity.} The identity testing question is a generalization of uniformity testing, where $\D^\ast$ is taken to be the uniform distribution over~$\domain$. The complexity of both tasks is well-understood in the sampling model; in particular, it is known that for both uniformity and identity testing $\bigTheta{\sqrt{n}/{\eps^2}}$ samples are necessary and sufficient (see~\citep{GRexp:00,BFRSW:10,Paninski:08,ValiantValiant:14} for the tight bounds on these problems). 

The uniformity testing problem exemplifies the savings granted by conditional sampling -- as Canonne, Ron, and Servedio~\citep{CRS:12} showed, in this setting only $\tildeO{1/\eps^2}$ adaptive queries\footnote{Here and throughout, we use the notation $\tildeO{f}$ to hide polylogarithmic dependencies on the argument, \ie, for expressions of the form $\bigO{f \log^c f}$ (for some absolute constant $c$).}{} are sufficient (and this is optimal, up to logarithmic factors). 
They further prove that identity testing has constant sample complexity as well, namely $\tildeO{1/\eps^4}$~--~very recently improved to a near-optimal $\tildeO{1/\eps^2}$ by Falahatgar et al.~\citep{UCSD:15}. The power of the \COND model is evident from the fact that a task requiring polynomially many samples in the standard model can now be achieved with a number of samples \emph{independent of the domain size $n$}. 

The aforementioned algorithms crucially leverage the ability to make adaptive conditional queries to the probability distributions. Restricting the study to non-adaptive algorithms, Chakraborty et al.~\citep{CFGM:13} describe a $\poly(\log n, 1/\eps)$-query non-adaptive tester for uniformity, showing that even without the full power of conditional queries one can still get an exponential improvement over the standard sampling setting. They also obtain an $\bigOmega{\log\log n}$ lower bound for this problem, and leave open the possibility of improving this lower bound to a logarithmic dependence. We answer this question, establishing in~\expref{Theorem}{theo:uniform:na:lb} that any non-adaptive uniformity tester must perform $\bigOmega{\log n}$ conditional queries.

\paragraph{Testing equivalence.} The \emph{equivalence testing} problem has been extensively studied over the past decade, and its sample complexity is now known to be $\Theta(\max( {n}^{2/3}/{\eps^{4/3}}, \sqrt{n}/\eps^2))$ in the sampling model~\citep{BFRSW:10, Valiant:11, CDVV:14}. 

In the \COND setting, Canonne, Ron, and Servedio showed that equivalence testing is possible with only  $\poly(\log n,1/\eps)$ queries. 
Concurrent to our work, Falahatgar et al.~\citep{UCSD:15} brought this upper bound down to $\tildeO{(\log\log n)/\eps^5}$, a \emph{doubly exponential} improvement over the $n^{\Omega(1)}$ samples needed in the standard sampling model. However, these results still left open the possibility of a constant query complexity -- given that both uniformity and identity testing admit constant-query testers, it is natural to wonder where equivalence testing lies.\footnote{It is worth noting that an $\bigOmega{\log^c n}$ lower bound was known for equivalence testing in a weaker version of the conditional oracle, \PCOND (where the tester's queries are restricted to being either $[n]$ or subsets of size $2$~\citep{CRS:12}).} 
This question was posed by Fischer at the Bertinoro Workshop on Sublinear Algorithms 2014~\citepalias[Problem 66]{Sublinear:info:open:66}. We make decisive progress in answering it, ruling out the possibility of any constant-query tester for equivalence. Along with the upper bound of Falahatgar et al.~\citep{UCSD:15}, our results nearly settle the dependence on the domain size, showing that $(\log\log n)^{\Theta(1)}$ samples are both necessary and sufficient.

\paragraph{Support-size estimation.} Raskhodnikova et al.~\citep{RRSS:09} showed that obtaining \emph{additive} estimates of the support size requires sample complexity almost linear in $n$. Subsequent work by Valiant and Valiant~\citep{ValiantValiant:11,ValiantValiant:10lb} settles the question, establishing that $\bigTheta{n/\log n}$ samples are both necessary and sufficient. Note that the proof of the Valiants' lower bound translates to multiplicative approximations as well, as it hinges on the hardness of distinguishing a distribution with support $s\leq n$ from a distribution with support $s+\eps n\geq(1+\eps)s$. 
To the best of our knowledge, the question of getting a multiplicative-factor estimate of the support size of a distribution given conditional sampling access has not been previously considered. 
We provide upper and lower bounds for both the adaptive and non-adaptive versions of this problem.

\subsection{Our results}\label{ssec:intro:results}

We make significant progress in each of the problems introduced in the previous section, yielding a better understanding of their 
query complexities. We prove four results pertaining to the sample complexity of equivalence testing, support-size estimation, and uniformity testing in the \COND framework.

Our main result
gives a lower bound on  
the sample complexity of testing equivalence with adaptive queries under the \COND model, resolving in the negative the question of whether constant-query complexity was achievable \citepalias[Problem 66]{Sublinear:info:open:66}. 
\begin{restatable}[Testing Equivalence]{theorem}{thmmainlbequiv}\label{theo:main:lb:equiv}
Any \emph{adaptive} algorithm which, given $\COND$ access to unknown distributions $\D_1,\D_2$ on $\domain$, distinguishes with probability at least $2/3$ between \textsf{(a)} $\D_1 = \D_2$ and \textsf{(b)} $\totalvardist{\D_1}{\D_2} \geq \frac14$, must have query complexity $\bigOmega{\sqrt{\log\log n}}$.
\end{restatable}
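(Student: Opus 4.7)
The proof proceeds by Yao's minimax principle: I will exhibit two distributions $\dyes$ and $\dno$ over pairs $(\D_1,\D_2)$ of distributions on $\domain$ such that \textsf{(i)} $\D_1=\D_2$ identically under $\dyes$; \textsf{(ii)} $\totalvardist{\D_1}{\D_2}\geq 1/4$ with probability at least $2/3$ under $\dno$; and \textsf{(iii)} for any deterministic $q$-query adaptive \COND tester, the induced distributions over transcripts are at total variation distance $o(1)$ whenever $q=o(\sqrt{\log\log n})$. Deriving the theorem from such a construction is standard.

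The construction is hierarchical. Partition $\domain$ into $K=\Theta(\log\log n)$ ``buckets'' $\bktn[1],\dots,\bktn[K]$ whose sizes grow doubly-exponentially, $|\bktn[k+1]| \geq |\bktn[k]|^{C}$ for a large absolute constant $C$. Each bucket carries total mass $1/K$, spread uniformly. Under $\dyes$, set $\D_1=\D_2$ equal to this uniform-per-bucket base distribution. Under $\dno$, bisect each $\bktn[k]$ into halves $\bktn[k]^+,\bktn[k]^-$ and draw independent random signs $\sigma_k\in\{\pm 1\}$; then $\D_1$ places mass $(1+\sigma_k\eta)/(2K)$ on $\bktn[k]^+$ and $(1-\sigma_k\eta)/(2K)$ on $\bktn[k]^-$, while $\D_2$ reverses these shifts. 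For a constant $\eta$, the mass discrepancy integrates (via a routine anti-concentration argument) to $\totalvardist{\D_1}{\D_2}=\Omega(1)$ with high probability over the signs.

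The core technical step is bounding the transcript distance. A \COND query on $S$ to $\D_i$ yields a sample whose landing bucket is drawn with probability proportional to $|\bktn[k]\cap S|/|\bktn[k]|$, and---conditioned on falling in $\bktn[k]$---is almost uniform between the two halves with a bias $\pm\eta$ governed by $\sigma_k$. Because of the doubly-exponential spacing, for any fixed $S$ the distribution over ``which bucket is hit'' is highly peaked: only $O(1)$ buckets contribute comparably, and all other buckets are effectively invisible to a single sample. Moreover, information about $\sigma_k$ is useful for distinguishing $\dyes$ from $\dno$ only when one has paired observations in the same bucket across queries to $\D_1$ \emph{and} $\D_2$: a lone sample to $\D_1$ looks unconditionally uniform on $\bktn[k]^+\cup\bktn[k]^-$ once one marginalizes over the sign. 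Consequently, the signal available to the tester scales as $\binom{q}{2}$ rather than $q$. A chain-rule/Hellinger decomposition along the query sequence, bounding the per-pair Hellinger contribution by $O(\eta^2/K)$, yields an overall transcript total variation of $O(\eta^2 q^2/K)$, which is $o(1)$ when $q=o(\sqrt{\log\log n})$.

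The main obstacle will be handling adaptivity cleanly. An adaptive tester might try to ``focus'' later queries onto specific buckets suggested by earlier samples, or construct pathological sets $S$ that put $\omega(1)$ buckets in roughly equal footing. The doubly-exponential size hierarchy is precisely what blocks this: for any set $S$, the mass contributions $|\bktn[k]\cap S|/|\bktn[k]|$ form a sequence that can have only $O(1)$ near-maximum entries, regardless of how cleverly $S$ is crafted. Formalizing this will require a filtration-based argument that tracks the posterior over $(\sigma_1,\dots,\sigma_K)$ after each adaptive query, showing that the posterior factorizes over buckets and is updated by an independent bounded factor only for the $O(1)$ buckets effectively probed at each step, so that after $q$ queries only $O(q^2)$ bucket-sign correlations can have been extracted in aligned fashion across the two distributions. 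This combinatorial bottleneck, together with the $\binom{q}{2}$ scaling above, is the crux of the proof and the reason the right exponent is $\sqrt{\log\log n}$ rather than $\log\log n$.
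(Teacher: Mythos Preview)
Your proposal has genuine gaps, both in the construction and in the claimed mechanism.

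First, a structural problem with the instances: under $\dyes$ your $\D_1$ is the deterministic uniform-per-bucket distribution, while under $\dno$ it carries random $\pm\eta$ biases in each bucket. The marginal law of $\D_1$ alone therefore differs between the two priors, and a tester querying only $\D_1$ can distinguish them: find two samples in the same bucket (detectable via a pair query, since different buckets have wildly different densities) and test whether they have equal mass---under $\dyes$ they always do, under $\dno$ they differ with probability roughly $1/2$. This refutes your claim that useful information requires paired observations across $\D_1$ \emph{and} $\D_2$. You could patch this by giving $\D_1$ the same random biases under $\dyes$ as well, but then your heuristic for the $\binom{q}{2}$ scaling evaporates.

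Second, your central structural claim---that ``for any fixed $S$ the distribution over which bucket is hit is highly peaked: only $O(1)$ buckets contribute comparably''---is false. The landing probability in bucket $k$ is proportional to $|\bktn[k]\cap S|/|\bktn[k]|$; for $S=[n]$ this equals $1$ for every $k$, so all $K$ buckets contribute equally (each with mass $1/K$). More generally, for a uniformly random $S$ of size $m$ one has $|\bktn[k]\cap S|/|\bktn[k]|\approx m/n$ for every bucket with $|\bktn[k]|\gg n/m$, which can be $\Theta(K)$ buckets simultaneously. The doubly-exponential spacing does not help: the density $1/(K|\bktn[k]|)$ and the expected intersection size $m|\bktn[k]|/n$ cancel. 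Since this claim is the lynchpin of both your Hellinger bound and your adaptivity argument, the proof does not go through.

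The paper's route is quite different. It uses $2r=\Theta(\sqrt{\log n})$ buckets with sizes growing as $\rho^i$ (for $\rho=2^{\sqrt{\log n}}$), and---crucially---randomizes the \emph{effective support size} by an unknown factor $b=2^{k_b}$ with $k_b$ uniform on $\{0,\dots,\frac{1}{2}\log n\}$. This random scale is the entire source of hardness: without it the instance is testable in $O(1)$ queries. Adaptivity is handled via the core-adaptive-tester reduction of~\cite{CFGM:13}, after which the proof union-bounds over the $2^{O(q^2)}$ nodes of the decision tree (this count coming from the $O(q^2)$-bit configuration descriptions), showing that with high probability over $b$ every node's query sizes are ``well-behaved'' relative to the unknown bucket scales. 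The exponent $\sqrt{\log\log n}$ arises because the union bound needs $2^{O(q^2)}\cdot\operatorname{poly}(q)\cdot\sqrt{\log n}=o(\log n)$---not from any paired-observation count. Your sketch supplies no analogue of the random scale, and without it I do not see how to obtain the bound.
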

\noindent Combined with the recent $\tildeO{\log\log n}$ upper bound of Falahatgar et al.~\citep{UCSD:15}, this almost settles the sample complexity of this question. 
Furthermore, as the related task of identity testing \emph{can} be performed with a constant number of queries in the conditional sampling model, this demonstrates an intriguing
 and intrinsic 
difference between the two problems.
Our result also shows an interesting contrast with    
the usual sampling model, where
both identity and equivalence testing have polynomial sample complexity.
\medskip
Next, we establish a logarithmic lower bound on \emph{non-adaptive} support-size estimation, for any (large enough) constant factor. 
  This improves on the result of Chakraborty et al.~\citep{CFGM:13}, which gave a doubly logarithmic lower bound for constant factor support-size estimation.
\begin{restatable}[Non-Adaptive Support-Size Estimation]{theorem}{thmsuppnalb}\label{theo:support:na:lb:beta}
  Any non-adaptive algorithm which, given $\COND$ access to an unknown distribution $\D$ on $\domain$, estimates the size of its support up to a factor $\gamma \geq \sqrt{2}$ must have query complexity $\bigOmega{\frac{\log n}{\log^2 \gamma}}$.
\end{restatable}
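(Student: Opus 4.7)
My plan is to invoke Yao's minimax principle with a family of hard distributions indexed by ``support-size scale'', and to reduce the estimation problem to distinguishing adjacent scales via Bernoulli fingerprints of the query responses. Let $L := \lfloor \log n / (3\log\gamma) \rfloor = \bigTheta{\log n / \log\gamma}$, and for each $\ell \in \{1,\dots,L\}$ let $\mathcal{D}^{(\ell)}$ be the uniform distribution on a uniformly random subset $T_\ell \subseteq \domain$ of size $m_\ell := \lceil \gamma^{3\ell}\rceil$. The hard input distribution first draws $\ell$ uniformly from $\{1,\dots,L\}$ and then $T_\ell$; since $m_{\ell+1}/m_\ell = \gamma^3 > \gamma^2$, any $\gamma$-factor support-size estimator must correctly identify $\ell$ with probability at least $2/3$, which in particular forces the total variation distance between the response vectors under $\mathcal{D}^{(\ell)}$ and $\mathcal{D}^{(\ell+1)}$ to be at least $1/3$ for every adjacent pair.

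\textbf{Reduction to fingerprints.} Fix any deterministic non-adaptive algorithm making $q$ queries $S_1,\dots,S_q$ of sizes $k_i := |S_i|$. The response $X_i$ to query $S_i$ is the failure symbol $\perp$ when $S_i \cap T_\ell = \emptyset$ and is otherwise a uniformly random element of $S_i \cap T_\ell$. Because $T_\ell$ is uniformly random, a symmetry argument yields that conditional on $X_i \neq \perp$ the value $X_i$ is uniformly distributed on $S_i$, \emph{independently of $\ell$}. Therefore all level-dependent information is carried by the fingerprint vector $\vec F := (F_1,\dots,F_q)$ where $F_i := \indic{X_i = \perp}$, whose marginal under $\mathcal{D}^{(\ell)}$ is $\bernoulli{\mu_{i,\ell}}$ with $\mu_{i,\ell} := \binom{n-k_i}{m_\ell}/\binom{n}{m_\ell} \approx \exp(-k_i m_\ell/n)$.

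\textbf{Telescoping budget.} Since $\ell \mapsto \mu_{i,\ell}$ is monotone (decreasing as $m_\ell$ grows) and confined to $[0,1]$, telescoping gives $\sum_{\ell=1}^{L-1} |\mu_{i,\ell+1} - \mu_{i,\ell}| \leq 1$ for every $i$; summing over $i$ yields $\sum_{\ell=1}^{L-1} \sum_{i=1}^{q} |\mu_{i,\ell+1} - \mu_{i,\ell}| \leq q$. If one can establish the subadditivity bound $\totalvardist{\vec F\mid \ell}{\vec F \mid \ell+1} \leq \sum_i |\mu_{i,\ell+1} - \mu_{i,\ell}|$ for each adjacent pair, combining with the $1/3$ lower bound per pair immediately yields $L/3 \leq q$, i.e., $q = \bigOmega{\log n / \log\gamma}$.

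\textbf{Main obstacle.} The subadditivity step is the crux, since the $F_i$'s are \emph{not} independent under $\mathcal{D}^{(\ell)}$: they all depend on the single shared random set $T_\ell$. I would handle this by first Poissonizing the support (replacing $T_\ell$ by $\{j : Y_j = 1\}$ with $Y_j$ independent $\bernoulli{m_\ell/n}$), which introduces only $o(1)$ additional TV by standard binomial-versus-hypergeometric concentration, and then coupling $T_\ell \subseteq T_{\ell+1}$ as a monotone nested pair. Under this coupling, the fingerprints on levels $\ell$ and $\ell+1$ disagree only on coordinates $i$ where $S_i \cap T_\ell = \emptyset$ but $S_i$ meets the new elements $T_{\ell+1} \setminus T_\ell$, and a careful union bound over queries recovers exactly the telescoped per-query sum. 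Pushing through this coupling calculation—and verifying that the approximation $\mu_{i,\ell} \approx \exp(-k_i m_\ell/n)$ is tight enough in the relevant regimes $k_i m_\ell \in [1, n\log L]$—is the principal technical hurdle, but yields the claimed $\bigOmega{\log n / \log\gamma}$ lower bound.
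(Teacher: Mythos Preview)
Your reduction to fingerprints has a genuine gap. You correctly observe that the \emph{marginal} law of $X_i$ conditioned on $X_i \neq \perp$ is uniform on $S_i$ regardless of $\ell$. But the conclusion ``all level-dependent information is carried by $\vec F$'' requires the much stronger statement that the \emph{joint} law of $\vec X$ given $\vec F$ is $\ell$-independent, and this is false. Take $S_1 = S_2 = S$ with $\abs{S} = 2$: conditioned on $S \cap T_\ell \neq \emptyset$, the collision probability $\Pr[X_1 = X_2]$ equals $\shortexpect\!\left[1/\abs{S \cap T_\ell}\right]$, which is close to $1$ when $m_\ell$ is tiny (the intersection is almost surely a single point) and equals $1/2$ when $m_\ell = n$. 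More generally, the conditional joint distribution of the responses depends on the sizes $\abs{S_i \cap T_\ell}$, whose laws are governed by $m_\ell$. Your nested coupling $T_\ell \subseteq T_{\ell+1}$ does not repair this: on the event that \emph{both} levels hit $S_i$, the two samples are uniform on the distinct sets $S_i \cap T_\ell \subsetneq S_i \cap T_{\ell+1}$ and cannot be coupled to coincide, so the union-bound term $\sum_i \abs{\mu_{i,\ell+1} - \mu_{i,\ell}}$ does not control the total variation of the actual response vectors $\vec X$. (As a side remark, in the paper's $\COND$ model there is no $\perp$ symbol; queries that miss the support return a uniform element of the query set. Working in the stronger $\perp$ model is fine for lower bounds, but you should be explicit about it.)

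The paper's argument confronts exactly this difficulty by partitioning the queries, relative to the random scale $s$, into ``small'' ones (expected intersection with the support below $1$) and ``big'' ones (above $1$). A combinatorial Hitting Lemma shows that with high probability over the random choice of $s$ no query size $a_i$ lands near the transition, and in fact the expected intersections decay geometrically away from it. Consequently small queries miss the support entirely with high probability, while for big queries $\abs{A_i \cap S}$ concentrates sharply around its (large) mean, forcing the tuple of responses on big queries to be $o(1)$-close in total variation to the product of uniforms on $A_1 \times \cdots \times A_{q'}$ irrespective of the scale. This concentration step is precisely what neutralizes the intersection-size effects your fingerprint reduction glosses over, and it is the substantive content of the proof; the telescoping idea alone is not enough.
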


\noindent Moreover, the approach used to prove this theorem also implies an analogous lower bound on \emph{non-adaptive} uniformity testing in the conditional model, answering a conjecture of Chakraborty et al.~\citep{CFGM:13}.
\begin{restatable}[Non-Adaptive Uniformity Testing]{theorem}{thmunifnalb}\label{theo:uniform:na:lb}
  Any non-adaptive algorithm which, given $\COND$ access to an unknown distribution $\D$ on $\domain$ and parameter $\eps\in(0,1/4)$, distinguishes with probability at least $2/3$ between \textsf{(a)}~$\D = \uniformOn{\domain}$ and \textsf{(b)}~$\dtv({\D},\uniformOn{\domain}) \geq \eps$, must have query complexity $\bigOmega{(\log n)/\eps}$.
\end{restatable}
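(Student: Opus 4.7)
The plan is to derive this lower bound as a direct consequence of (the proof of) Theorem~\ref{theo:support:na:lb:beta}, by exhibiting a family of hard instances that simultaneously serves as a yes/no pair for both non-adaptive support-size estimation (at factor $\gamma=\sqrt{2}$) and non-adaptive uniformity testing (at distance $1/4$).

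Concretely, I would take as yes-instance the uniform distribution $\uniform$ on $\domain$ (support size $n$), and as random no-instance the distribution $\D_T$ of the form $\D_T(i) = (2/n)\indic{i\in T}$, where $T\subseteq\domain$ is a uniformly random subset of size $n/2$ (support size $n/2$). The two support sizes differ by a factor of $2 > \sqrt{2}$, and a direct computation yields $\totalvardist{\D_T}{\uniform} = 1/2 \geq 1/4$ for every such $T$. Hence any non-adaptive tester that decides $\D=\uniform$ versus $\totalvardist{\D}{\uniform}\geq 1/4$ automatically distinguishes these two families, and therefore also solves support-size estimation to within a multiplicative factor $\sqrt{2}$ on this restricted input set. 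By Theorem~\ref{theo:support:na:lb:beta} applied with $\gamma=\sqrt{2}$, this requires $\bigOmega{\log n/\log\sqrt{2}}=\bigOmega{\log n}$ non-adaptive \COND queries, as claimed.

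The main obstacle I anticipate is making sure that the hard-instance distribution underlying Theorem~\ref{theo:support:na:lb:beta} can indeed be taken (or adapted) to be this uniform-on-random-$n/2$-subset family at $\gamma=\sqrt{2}$, rather than a more intricate construction. The typical Yao-style argument one would invoke fixes non-adaptive queries $S_1,\dots,S_q$ with $q=o(\log n)$ and bounds the total variation distance between the joint laws of the oracle responses $(X_1,\dots,X_q)$ under $\uniform$ and under the random no-instance $\D_T$, using only basic concentration of $\lvert S_i\cap T\rvert$ around $\lvert S_i\rvert/2$ and a chi-squared or coupling bound per query. If the construction of Theorem~\ref{theo:support:na:lb:beta} already uses instances of this form (as is standard for support-size lower bounds), the reduction is immediate; otherwise, I would re-run that argument on the simpler family above, which should be at least as favorable for the lower bound since it maximizes separation both in support size and in total variation distance from uniform.
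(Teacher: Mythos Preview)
Your proposal has a genuine gap. The black-box invocation of Theorem~\ref{theo:support:na:lb:beta} does not work: that theorem asserts that \emph{some} family of instances is hard for support-size estimation, not that \emph{your} particular pair (uniform on $[n]$ versus uniform on a random $n/2$-subset) is hard. And in fact your pair is distinguishable with $O(1)$ non-adaptive \COND queries. Fix any two elements $a,b\in[n]$ and query the set $\{a,b\}$ some constant number $m$ of times. Under $\uniform$, the $m$ samples are i.i.d.\ uniform on $\{a,b\}$, so they all coincide with probability $2^{1-m}$. Under $\D_T$, with probability roughly $1/2$ exactly one of $a,b$ lies in $T$, in which case all $m$ samples equal that element; hence they all coincide with probability at least $1/2$. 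Repeating with a few disjoint pairs amplifies this to a tester with success probability $2/3$ and $O(1)$ queries. Your intuition that ``maximizing separation in support size and in TV distance should only help the lower bound'' is exactly backwards here: a support covering a constant fraction of the domain is easy to probe with small query sets, so your fallback of ``re-running the argument on the simpler family'' cannot succeed.

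The paper's construction is quite different, and the randomness in the support \emph{size} is essential. The no-instances are uniform on a random subset whose size $s$ is itself drawn uniformly from a geometric progression $\{n^{1/4},\beta n^{1/4},\dots,n^{3/4}\}$ of length $\Theta(\log n)$. The point of this randomization is that for any fixed query set $A$, the quantity $|A|s/n$ is (with high probability over the choice of $s$) either very small---so $A$ misses the support entirely and the oracle returns a uniform sample from $A$---or very large---so $|A\cap\supp{\D}|$ is well concentrated and a sample from $\D_A$ is nearly uniform on $A$. In either regime the response is indistinguishable from the response under $\uniform_n$; this is exactly the content of Lemmas~\ref{lem:supp:na:queries:small} and~\ref{lem:supp:na:queries:big}. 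The uniformity lower bound is then not obtained by a reduction to support-size estimation, but falls out of the same analysis: the indistinguishability of the transcript from that of $\uniform_n$ \emph{is} the statement that no $o(\log n)$-query non-adaptive tester separates these (far-from-uniform, since $s\leq n^{3/4}$) instances from $\uniform_n$.
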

\noindent We note that these results complement $\poly\!\log(n)$-query upper bounds on non-adaptive support-size estimation and uniformity testing, the former of which we sketch in this paper, and the latter obtained by Chakraborty et al.~\citep{CFGM:13}. 
This shows that both of these problems have query complexity $\log^{\Theta(1)} n$ in the non-adaptive case.\medskip

Finally, we conclude with an upper bound for \emph{adaptive} support-size estimation.
Specifically, we provide a $\tildeO{\log \log n}$-query algorithm for support-size estimation.
This shows that the question becomes \emph{double exponentially} easier when conditional samples are allowed.

\begin{restatable}[Adaptive Support-Size Estimation]{theorem}{thmsuppsize}\label{theo:supp:size}
Let $\tau > 0$ be any constant. There exists an adaptive algorithm which, given $\COND$ access to an unknown distribution $\D$ on $\domain$ (guaranteed to have probability mass at least $\tau/n$ on every element of its support) and accuracy parameter $\eps\in(0,1)$, makes $\tildeO{(\log\log n)/\eps^3}$ queries to the oracle\footnote{We remark that the constant in the $\tilde{O}$ depends polynomially on $1/\tau$.} and returns a value $\tilde{\omega}$ such that the following holds. With probability at least $2/3$, $\tilde{\omega}\in[\frac{1}{1+\eps}\cdot\omega, (1+\eps)\cdot\omega]$, where $\omega=\abs{\supp{\D}}$.
\end{restatable}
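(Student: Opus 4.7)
The plan is to adapt the support-size estimator of Ron and Tsur~\cite{RonTsur:14}, originally designed for the stronger SAMP+EVAL model, to the weaker conditional sampling setting. The missing ingredient in our model is pmf-evaluation, which we will simulate with \COND queries up to multiplicative error and then plug into a bucketed importance-sampling estimator. The analysis must verify that the Ron-Tsur $\tildeO{\log\log n}$ guarantee survives this simulation.

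\textbf{Step 1: approximate \EVAL via \COND.} For any $x \in \domain$ and precision $\eta$, we describe a subroutine returning $\tilde D(x) \in [(1-\eta)D(x), (1+\eta)D(x)]$. We pick a reference set $T\subseteq\domain$ of already-estimated mass $\tilde p\approx D(T)$ and query $\COND(T\cup\{x\})$; a conditional sample hits $x$ with probability $D(x)/(D(x)+D(T))$, so averaging over $s=\tildeO{1/\eta^2}$ draws and solving for $D(x)$ gives the estimate. To locate a reference set at the right scale (so that $D(T)=\Theta(D(x))$, yielding bounded variance), we do a doubling search on $|T|$ using uniform random subsets of sizes $2^j$ for $j=0,\dots,\log n$. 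The assumption $D(x)\geq\tau/n$ ensures that this search halts in at most $O(\log\log n)$ rounds per element, and a schedule of reference sets shared across all calls will make the amortized cost much smaller.

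\textbf{Step 2: bucketed estimator.} We partition $\supp{\D}$ geometrically into $k=O((\log n)/\eps)$ buckets $B_j := \setOfSuchThat{x}{D(x)\in((1+\eps)^{-j-1},(1+\eps)^{-j}]}$, so that $|B_j|=(1\pm\eps)\cdot(1+\eps)^j D(B_j)$ and $\omega=\sum_j|B_j|$. Draw $m=\tildeO{(\log\log n)/\eps^3}$ samples $x_1,\dots,x_m$ from $\COND(\domain)$, classify each via the simulated \EVAL from Step 1, and output
\[
  \tilde\omega \;=\; \sum_{j=0}^k (1+\eps)^j \cdot \frac{\abs{\setOfSuchThat{i}{x_i\in B_j}}}{m}.
\]
Following Ron-Tsur, we argue that this is a $(1\pm O(\eps))$-multiplicative estimator by (i) restricting the sum to "heavy" buckets, those with $D(B_j)=\tildeOmega{1/\log n}$, whose counts concentrate by Chernoff with $m$ samples, and (ii) showing that the remaining "light" buckets contribute a total of at most $\eps\cdot\omega$ in expectation, via a truncation argument that exploits the minimum-mass hypothesis $D(x)\geq\tau/n$.

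\textbf{Main obstacle.} The delicate point is amortizing the \EVAL simulation cost: a black-box use of Step 1 multiplies the $\tildeO{(\log\log n)/\eps^3}$ sample count by a $\poly(1/\eps,\log\log n)$ per-call overhead, overshooting the target bound. Two observations salvage the budget. First, the doubling schedule of reference subsets can be precomputed once and shared across every simulated \EVAL call, so its $O(\log\log n)$ cost is paid only once rather than $m$ times. Second, and more importantly, most sampled elements require only a \emph{constant-factor} estimate of $D(x)$ in order to be assigned to the correct bucket --- the $(1+\eps)$-precision is needed only for elements lying within $\eps$ of a bucket boundary, and the expected count of such "boundary" elements is controlled by the geometric spacing. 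Combining these two savings folds the simulation overhead inside the $\tildeO{\,\cdot\,}$ notation and recovers the claimed query bound.
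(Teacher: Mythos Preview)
Your proposal has a genuine gap in Step~2, and it is not a matter of amortization. The truncation claim (ii) --- that ``light'' buckets (those with $\D(B_j)=\tildeO{1/\log n}$) contribute at most $\eps\omega$ to the support size --- is false in general, and without it the importance-sampling estimator has unbounded variance.

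Consider $\D$ supported on $\{1,\dots,\sqrt{n}\}$ with $\D(1)=1-1/\log^{10}n$ and $\D(i)=\frac{1}{(\sqrt{n}-1)\log^{10}n}$ for $2\leq i\leq\sqrt{n}$. The minimum-mass hypothesis holds (each nonzero mass is $\gg\tau/n$), and $\omega=\sqrt{n}$. There are only two nonempty buckets: the one containing element~$1$ is heavy, while the bucket containing elements $2,\dots,\sqrt{n}$ has total mass $1/\log^{10}n$ and is light by your threshold. Yet this light bucket accounts for $\sqrt{n}-1=(1-o(1))\omega$ of the support, not $\eps\omega$. With $m=\tildeO{(\log\log n)/\eps^3}$ samples from $\D$, you will see no element from the light bucket except with probability $o(1)$, and your estimator outputs $\tilde\omega\approx 1$. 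The Ron--Tsur analysis you invoke does not face this issue because their model assumes $\D$ is \emph{uniform on its support}, so there is only one bucket; the bucketed extension to arbitrary $\D$ does not inherit their guarantee.

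There is also a circularity in Step~1: you need $\D(T)$ to estimate $\D(x)$, but a uniform random subset $T$ has unknown mass, and you give no mechanism to bootstrap a first reference. And a doubling search over $|T|\in\{1,2,4,\dots,n\}$ is $O(\log n)$ rounds, not $O(\log\log n)$.

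The paper takes a completely different route that sidesteps probability-value estimation. It first obtains a single \emph{non-support} reference point $r$ (possible once one knows the support is not too dense). Then for a candidate $\sigma$, it draws a random set $S$ by including each domain element independently with probability $1/\sigma$, samples $x\sim\D_S$, and uses \textsc{Compare}$(\{x\},\{r\})$ to test whether $S$ hit the support at all. The probability of a miss is $(1-1/\sigma)^\omega$, which differs by $\Theta(\eps)$ between $\sigma\leq\omega$ and $\sigma>(1+\eps)\omega$; hence $\tildeO{1/\eps^2}$ repetitions suffice to decide. A double-exponential search over $\sigma=(1+\eps)^{(1+\eps)^j}$ followed by a binary search then locates $\omega$ up to $(1+\eps)$ in $\tildeO{(\log\log n)/\eps^3}$ total queries. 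The key is that the algorithm only ever asks ``is this element in the support?'' via comparison to $r$, never ``what is its mass?'', which is precisely what makes it robust to the kind of instance above.
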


\newcommand{\pb}[2]{\parbox[c][][c]{#1}{\strut#2\strut}}
\begin{center}
\begin{table}[ht]\centering \renewcommand{\arraystretch}{1.25}
    \begin{adjustwidth}{-.5in}{-.5in}\centering
      \begin{tabular}{| l | c | c |}
      \hline
      \pb{0.25\textwidth}{\centerline{\bf Problem}} & {\bf \COND model} & {\bf Standard model} \\ \hline
      \multirow{2}{*}{\pb{0.30\textwidth}{\centering Testing equivalence}} & $\tildeO{ \frac{\log \log n}{\eps^5} }$ \cite{UCSD:15} & 
        \multirow{2}{*}{$\bigTheta{
            \max\left(\frac{n^{2/3}}{\eps^{4/3}},
              \frac{n^{1/2}}{\eps^2}\right) }$ \cite{CDVV:14} }\\
      & $\bigOmega{ \sqrt{\log \log n} }$ (\expref{Theorem}{theo:main:lb:equiv}) & \\ \hline
      \multirow{2}{*}{\pb{0.30\textwidth}{\centering Estimating support size (adaptive)}} 
      & $\tildeO{ \frac{\log \log n}{\eps^3} }$ (\expref{Theorem}{theo:supp:size}) & \multirow{4}{*}{$\bigTheta{ \frac{n}{\log n} }$ \cite{ValiantValiant:10lb} }\\
      & $\bigOmega{ \sqrt{\log \log n} }$ \cite{CFGM:13} $(\dagger)$ & \\ \cline{1-2} 
      \multirow{2}{*}{\pb{0.30\textwidth}{\centering Estimating support size (non-adaptive)}}  
      & $\bigO{\poly( \log n, 1/\eps )}$ (\expref{Section}{ssec:nonadaptive:support:size:ub}) &  \\
      & $\bigOmega{ \log n }$ (\expref{Theorem}{theo:support:na:lb:beta}) & \\ \hline
      \multirow{2}{*}{\pb{0.30\textwidth}{\centering Testing uniformity (non-adaptive)}} & $\tildeO{\frac{\log^5 n}{\eps^6} }$ \cite{CFGM:13}  
       & \multirow{2}{*}{$\bigTheta{ \frac{\sqrt{n}}{\eps^2} }$ \cite{Paninski:08}}\\
       & $\bigOmega{ \frac{\log n}{\eps} }$ (\expref{Theorem}{theo:uniform:na:lb}) & \\
      \hline
      \end{tabular}
    \end{adjustwidth}
    \caption{\label{fig:ballerass:table}Summary of results. Note that the lower bound $(\dagger)$ can also be easily derived from our lower bound on testing equivalence.}
\end{table}
\end{center}

\subsubsection{Relation to the Ron-Tsur model}
  Recent work of Ron and Tsur~\citep{RonTsur:14} studies a model which is slightly different~--~and more favorable to the algorithm~--~than ours.
  In their setting, the algorithm still performs queries consisting of a subset of the domain, as in our case.
  However, the algorithm is also given the promise that the distribution is uniform on a subset of the domain, and whenever a query set contains $0$ probability mass the oracle explicitly indicates this is the case. 
  Their paper provides a number of results for support-size estimation in this model.

  We point out two connections between our work and theirs.
  First, our $\bigOmega{\log n}$ lower bound for non-adaptive support-size estimation (\expref{Theorem}{theo:support:na:lb:beta}) holds in the model of Ron and Tsur. 
  Although lower bounds in the conditional sampling setting do not apply directly to their model, our construction and analysis do carry over, and provide a nearly tight answer to a question left unanswered in their paper.
  Also, our $\tildeO{\log \log n}$-query algorithm for adaptive support-size estimation (\expref{Theorem}{theo:supp:size}) can be seen as generalizing their result to the weaker conditional sampling model (most significantly, when we are not given the promise that the distribution be uniform).

\subsection{Techniques and proof ideas}\label{ssec:intro:techniques}

\paragraph{Lower bound on adaptive equivalence testing.} In order to prove our main lower bound,~\expref{Theorem}{theo:main:lb:equiv}, we have to deal with one main conceptual issue: \emph{adaptivity}. 
While the standard sampling model does not, by definition, allow any choice on what the next query to the oracle should be, this is no longer the case for \COND algorithms.
Quantifying the power that this grants an algorithm makes things much more difficult. To handle this point, we follow the approach of Chakraborty et al. \citep{CFGM:13} and focus on a restricted class of algorithms they introduce, called ``core adaptive testers'' (see \expref{Section}{ssec:prelim:coretesters} for a formal definition). They show that this class of testers is equivalent to general algorithms for the purpose of testing a broad class of properties, namely those which are invariant to any permutation of the domain. Using this characterization, it remains for us to show that none of these structurally much simpler core testers can distinguish whether they are given conditional access to \textsf{(a)} a pair of random identical distributions $(\D_1,\D_1)$, or \textsf{(b)} two distributions $(\D_1,\D_2)$ drawn according to a similar process, which are far apart. 

At a high level, our lower bound works by designing instances where the property can be tested if and only if the support size is known to the algorithm.
Our construction randomizes the support size by embedding the instance into a polynomially larger domain.
Since the algorithm is only allowed a small number of queries, Yao's Minimax Principle allows us to argue that, with high probability, a deterministic algorithm is unable to ``guess'' the support size.
This separates queries into several cases.
First, in a sense we make precise, it is somehow ``predictable'' whether or not a query will return an element previously observed.
If this occurs, it is similarly predictable \emph{which} element the query will return.
On the other hand, if a fresh element is observed, the query set is either ``too small'' or ``too large.''
In the former case, the query will entirely miss the support, and the sampling process is identical for both types of instance.
In the latter case, the query will hit a large portion of the support, and the amount of information gleaned from a single sample is minimal.

At a lower level, this process itself is reminiscent of the ``hard'' instances underlying the lower bound of Canonne, Ron, and Servedio~\citep{CRS:12} for testing identity (with a \PCOND oracle), with one pivotal twist. As in their work, both $\D_1$ and $\D_2$ are uniform within each of $\omega(1)$ ``buckets'' whose size grows exponentially and are grouped into ``\bpairs.'' Then, $\D_2$ is obtained from $\D_1$ by internally redistributing the probability mass of each pair of buckets, so that the total mass of each pair is preserved but each particular bucket has mass going up or down by a constant factor (see \expref{Section}{ssec:main:lb:construction} for details of the construction). However, we now add a final step, where in both $\D_1$ and $\D_2$ the resulting distribution's support is \emph{scaled by a random factor}, effectively reducing it to a (randomly) negligible fraction of the domain. Intuitively, this last modification has the role of ``blinding'' the testing algorithm. We argue that unless its queries are on sets whose size somehow match (in a sense formalized in \expref{Section}{ssec:main:lb:analysis}) this random size of the support, the sequences of samples it will obtain under $\D_1$ and $\D_2$ are almost identically distributed. The above discussion crucially hides many significant aspects and technical difficulties which we address in \expref{Section}{sec:lb-equiv}. Moreover, we observe that the lower bound we obtain seems to be optimal with regard to our proofs techniques (specifically, to the decision tree approach), and not an artifact of our lower bound instances. Namely, there appear to be conceptual barriers to strengthening our result, which would require new ideas.

\paragraph{Lower bound on non-adaptive support-size estimation.} 
Turning to the (non-adaptive) lower bound of \expref{Theorem}{theo:support:na:lb:beta}, we define two families of distributions $\mathcal{\D}_1$ and $\mathcal{\D}_2$, where an instance is either a draw $(\D_1,\D_2)$ from $\mathcal{\D}_1\times\mathcal{\D}_2$, or simply $(\D_1,\D_1)$. Any distribution in $\mathcal{\D}_2$ has support size $\gamma$ times that of its corresponding distribution in $\mathcal{\D}_1$. 
Yet, we argue that no non-adaptive \emph{deterministic} tester making too few queries can distinguish between these two cases, as the tuple of samples it will obtain from $\D_1$ or (the corresponding) $\D_2$ is almost identically distributed (where the randomness is over the choice of the instance itself). To show this last point, we analyze separately the case of ``small'' queries (conditioning on sets which turn out to be much smaller than the actual support size, and thus with high probability will not even intersect it) and the ``large'' ones (where the query set $S$ is so big compared to the support $T$ that a uniform sample from $S\cap T$ is essentially indistinguishable from a uniform sample from $S$). We conclude the proof by invoking Yao's Principle, carrying the lower bound back to the setting of non-adaptive \emph{randomized} testers.

Interestingly, this argument essentially gives us~\expref{Theorem}{theo:uniform:na:lb} ``for free.'' Indeed, the big-query-set case above is handled by proving that the distribution of samples returned on those queries is indistinguishable, both for $\mathcal{\D}_1$ and $\mathcal{\D}_2$, from samples obtained from the \emph{actual} uniform distribution. Considering again the small-query-set case separately, this allows us to argue that a random distribution from (say) $\mathcal{\D}_1$ is indistinguishable from uniform.

\paragraph{Upper bound on support-size estimation.} Our algorithm for estimating the support size within a constant factor (\expref{Theorem}{theo:supp:size}) is simple in spirit, and follows a guess-and-check strategy. In more detail, it first obtains a ``reference point'' \emph{outside} the support, to check whether subsequent samples it may consider belong to the support. Then, it attempts to find a \emph{rough upper bound} on the size of the support, of the form $2^{2^j}$ (so that only $\log\log n$ many options have to be considered); by using its reference point to check if a uniform random subset of this size contains, as it should, at least one point from the support.
Once such an upper bound has been obtained using this double-exponential strategy, a refined bound is then obtained via a binary search on the new range of values for the exponent, $\{2^{j-1},\dots, 2^{j}\}$. Not surprisingly, our algorithm draws on similar ideas as in~\citep{RonTsur:14,Stock:85}, with some additional machinery to supplement the differences in the models.
Interestingly, as a side-effect, this upper bound shows our analysis of \expref{Theorem}{theo:main:lb:equiv} to be tight up to a quadratic improvement. Indeed, the lower bound construction we consider (see \expref{Section}{ssec:main:lb:construction}) can be easily ``defeated'' if an estimate of the support size is known, and therefore cannot yield better than a $\bigOmega{\log\log n}$ lower bound.
Similarly, this further shows that the adaptive lower bound for support-size estimation of Chakraborty et al.~\citep{CFGM:13} is also tight up to a quadratic improvement.

\paragraph{Organization.} The rest of the paper describes details and proofs of the results mentioned in the above discussion.
 In \expref{Section}{sec:prelim}, we introduce the necessary definitions and some of the tools we shall use. 
 \expref{Section}{sec:lb-equiv} covers our main result on adaptive equivalence testing, \expref{Theorem}{theo:main:lb:equiv}.
 In~\expref{Section}{sec:lb-uniform} we prove our lower bounds for support-size estimation and uniformity testing, and~\expref{Section}{sec:ub-supp-size} details our upper bounds for support-size estimation. 
 The corresponding sections may be read independently.

\section{Preliminaries}\label{sec:prelim}

\subsection{Notation and sampling models}
All throughout this paper, we denote by $[n]$ the set $\{1,\dots,n\}$, and by $\log$ the logarithm in base $2$. A probability distribution over a (countable) domain $\domain$ is a non-negative function $\D\colon\domain\to[0,1]$ such that $\sum_{x\in\domain} \D(x) = 1$. We denote by $\uniformOn{S}$ the uniform distribution on a set $S$. Given a distribution $\D$ over $\domain$ and a set $S\subseteq\domain$, we write $\D(S)$ for the total probability mass $\sum_{x\in S} \D(x)$ assigned to $S$ by $\D$. Finally, for $S \subseteq \domain$ such that $\D(S)>0$, we denote by $\D_S$ the conditional distribution of $\D$ restricted to $S$, that is $\D_S(x) = \D(x)/\D(S)$ for $x \in S$ and $\D_S(x)=0$ otherwise.\medskip

As is usual in distribution testing, in this work the distance between two distributions $\D_1, \D_2$ on $\domain$ will be the \emph{total variation distance}.
\begin{equation}\label{def:distance:tv}
\totalvardist{\D_1}{\D_2 } \eqdef \frac{1}{2} \normone{\D_1 - \D_2} = \frac{1}{2} \sum_{x \in \domain}\abs{\D_1(i)-\D_2(i)} = \max_{S \subseteq \domain} (\D_1(S)-\D_2(S))
\end{equation}
which takes value in $[0,1]$.\medskip

In this work, we focus on the setting of \emph{conditional access} to the distribution, as introduced and studied in \cite{CFGM:13,CRS:12}. We reproduce below the corresponding definition of a conditional oracle, henceforth referred to as $\COND$.

\begin{definition}[Conditional access model]\label{def:conditional:oracle}
Fix a distribution $\D$ over $\domain$.  A \emph{\COND oracle for $\D$}, denoted
$\COND_\D$, is defined as follows:
 the oracle takes as input a \emph{query set}
 $S \subseteq \domain$, chosen by the algorithm,  that has $\D(S) > 0$. The oracle returns an element $i \in S$, where
 the probability that element $i$ is returned is $\D_S(i) = \D(i)/\D(S),$
 independently of all previous calls to the oracle.
\end{definition}
Note that as described above the behavior of $\COND_\D(S)$ is undefined if $\D(S)=0$, \ie, the set $S$ has zero probability under $\D$.  Various definitional choices could be made to deal with this. These choices do not make
a  
significant difference in most situations, as
adaptive\footnotetext{Recall that a non-adaptive tester is an algorithm whose queries do not depend on the answers obtained from previous ones, but only on its internal randomness. Equivalently, it is a tester that can commit ``upfront'' to all the queries it will make to the oracle.} algorithms can
include in their next queries a sample previously obtained;
 while our lower bounds can be thought of as putting exponentially small probability mass of elements outside the support. For this reason, and for convenience, we shall hereafter assume, following Chakraborty et al., that the oracle returns in this case a sample uniformly distributed in $S$.
Furthermore, as in~\cite{CRS:12,CFGM:13} we do not take
the \emph{complexity} of specifying the set $S$ to the oracle into account,
and indeed allow arbitrary sets as queries.\footnote{We further observe that,
besides the general $\COND_\D$ oracle which allows these arbitrary query
sets, the authors of~\cite{CRS:12}   
introduce two weaker variants of the conditional model
(the ``pair-cond'' $\PCOND_\D$ and ``interval-cond'' $\ICOND_\D$ oracles)
which restrict algorithms to ``simple'' queries.}
 
 \medskip

Finally, recall that a \emph{property} $\property$ of distributions over $\domain$ is a set consisting of all distributions that have the property. The distance from $\D$ to a property $\property$, denoted $\totalvardist{\D}{\property}$, is then defined as $\inf_{\D^\prime \in \property} \totalvardist{\D}{\property}$.
We use the standard definition of testing algorithms for properties of distributions over $\domain$, tailored for the setting of conditional access to an unknown distribution. 
\begin{definition}[Property tester]\label{def:testing}
  Let $\property$ be a property of distributions over $\domain$. A \emph{$t$-query $\COND$ testing algorithm for $\property$} is a randomized algorithm $\Tester$ which takes as input $n$, $\eps\in(0,1]$, as well as access to $\COND_\D$.  After making at most $t(\eps,n)$ calls to the oracle, \Tester either returns \accept or \reject, such that the following holds:
  \begin{itemize}
  \item if $\D \in \property$, \Tester returns \accept with probability at least $2/3$;
  \item if $\totalvardist{\D}{\property} \geq \eps$, \Tester returns \reject with probability at least $2/3$.
  \end{itemize}
\end{definition}

We observe that the above definitions can be straightforwardly extended to the more general setting of \emph{pairs} of distributions, where given independent access to two oracles $\COND_{\D_1}$,  $\COND_{\D_2}$ the goal is to test whether $(\D_1,\D_2)$ satisfies a property (now a set of pairs of distributions). This will be the case in \expref{Section}{sec:lb-equiv}, where we will consider equivalence testing, that is the property $\property_{\rm{}eq}=\setOfSuchThat{(\D_1,\D_2)}{\D_1=\D_2}$.

\subsection{Adaptive Core Testers}\label{ssec:prelim:coretesters}

In order to deal with adaptivity in our lower bounds, we will use ideas introduced by Chakraborty et al.~\cite{CFGM:13}. These ideas, for the case of \emph{label-invariant} properties\footnote{Recall that a property is label-invariant (or \emph{symmetric}) if it is closed under relabeling of the elements of the support. More precisely, a property of distributions (resp. pairs of distributions) \property is label-invariant if for any distribution $\D\in\property$ (resp. $(\D_1,\D_2)\in\property$) and permutation $\sigma$ of \domain, one has $\D\circ\sigma\in\property$ (resp. $(\D_1\circ\sigma,\D_2\circ\sigma)\in\property$).} allow one to narrow down the range of possible testers and focus on a restricted class of such algorithms called \emph{adaptive core testers}. These core testers do not have access to the full information of the samples they draw, but instead only get to see the relations (inclusions, equalities) between the queries they make and the samples they get. Yet, Chakraborty et al.~\cite{CFGM:13} show that any tester for a label-invariant property can be converted into a core tester with same query complexity; thus, it is enough to prove lower bounds against this -- seemingly -- weaker class of algorithms.

We here rephrase the definitions of a core tester and  the view they have of the interaction with the oracle (the \emph{configuration} of the samples), tailored to our setting.

\begin{definition}[Atoms and partitions]\label{def:atoms}
  Given a family $\mathcal{A}=(A_1,\dots,A_t)\subseteq[n]^t$, the \emph{atoms} generated by $\mathcal{A}$ are the (at most) $2^t$ distinct sets of the form $\bigcap_{r=1}^t C_r$, where $C_r\in\{A_r, [n]\setminus A_r\}$. The family of all
atoms, denoted $\operatorname{At}(\mathcal{A})$, is the \emph{partition} generated by $\mathcal{A}$.
\end{definition}

This definition essentially captures ``all sets
(besides the $A_i$)  
about which something can be
learned   
from querying the oracle on the sets of $\mathcal{A}$.''
Now, given such a sequence of queries $\mathcal{A}=(A_1,\dots,A_t)$ and pairs of samples $\mathbf{s}=((s^{(1)}_1,s^{(2)}_1),\dots,(s^{(1)}_t,s^{(2)}_t))\in A_1^2\times\dots\times A_t^2$, we would like to summarize ``all the label-invariant information available to an algorithm that obtains $((s^{(1)}_1,s^{(2)}_1),\dots,(s^{(1)}_t,s^{(2)}_t))$ upon querying $A_1,\dots,A_t$ for $\D_1$ and $\D_2$.'' This calls for the following definition.
\begin{definition}[$t$-configuration]\label{def:configuration}
Given $\mathcal{A}=(A_1,\dots,A_t)$ and $\mathbf{s}=((s^{(1)}_j,s^{(2)}_j))_{1\leq j\leq t}$ as above, the \emph{$t$-configuration of $\mathbf{s}$} consists of the $6t^2$ bits indicating, for all $1\leq i,j\leq t$, whether
\begin{itemize}
  \item $s^{(k)}_i = s^{(\ell)}_j$, for $k,\ell\in\{1,2\}$; and \hfill (relations between samples)
  \item $s^{(k)}_i \in A_j$, for $k\in\{1,2\}$. \hfill (relations between samples and query sets)
\end{itemize}
In other terms, it summarizes which is the unique atom $S_i\in \operatorname{At}(\mathcal{A})$ that contains $s^{(k)}_i$, and what collisions between samples have been observed.
\end{definition}

As aforementioned, the key idea is to argue that, without loss of generality, one can restrict one's attention to algorithms that only have access to $t$-configurations, and generate their queries in a specific (albeit adaptive) fashion.
\begin{definition}[Core adaptive tester]\label{def:core:tester}
  A \emph{core adaptive distribution tester} for pairs of distributions is an algorithm \Tester that acts as follows.
  \begin{itemize}
    \item In the $i$-th phase, based only on its own internal randomness and the configuration of the previous queries $A_1,\dots,A_{i-1}$ and samples obtained $(s^{(1)}_1,s^{(2)}_1),\dots,(s^{(1)}_{i-1}, s^{(2)}_{i-1})$ -- whose labels it does not actually know, \Tester provides:
      \begin{itemize}
        \item a number $\newset$ for each $A\in\operatorname{At}( A_1,\dots,A_{i-1} )$, between $0$ and $\abs{A\setminus \{s^{(1)}_j, s^{(2)}_j\}_{1\leq j \leq i-1}}$ (How many \emph{fresh, not-already-seen} elements of each particular atom $A$ should be included in the next query.)
        \item sets $K_i^{(1)}, K_i^{(2)}\subseteq \{1,\dots, i-1\}$ (Which of the samples $s^{(k)}_1,\dots,s^{(k)}_{i-1}$ will be included in the next query. The labels of these samples are unknown, but are indexed by the index of the query which returned them.)
      \end{itemize} 
    \item based on these specifications, the next query $A_i$ is drawn (but not revealed to \Tester) by
      \begin{itemize}
        \item drawing uniformly at random a set $\Lambda_i$ in 
        \begin{equation}
        \setOfSuchThat{ \Lambda\subseteq [n]\setminus \{s^{(1)}_j, s^{(2)}_j\}_{1\leq j \leq i-1} }{ \forall A\in\operatorname{At}( A_1,\dots,A_{i-1} ),\ \abs{\Lambda\cap A}=\newset }\;.
        \end{equation}
        That is, among all sets, containing only ``fresh elements,'' whose intersection with each atom contains as many elements as \Tester requires.
      \item adding the selected previous samples to this set:
        \begin{equation}
          \Gamma_i \eqdef \setOfSuchThat{ s^{(1)}_j }{  j \in K_i^{(1)} }\cup \setOfSuchThat{ s^{(2)}_j }{  j \in K_i^{(2)} }\ ;\qquad 
          A_i \eqdef \Lambda_i\cup \Gamma_i\ .
        \end{equation}
      \end{itemize}
      This results in a set $A_i$, not fully known to \Tester besides the samples it already got and decided to query again; in which the \emph{labels} of the fresh elements are unknown, but the \emph{proportions} of elements belonging to each atom are known.
    \item samples $s^{(1)}_i\sim(\D_1)_{A_i}$ and $s^{(2)}_i\sim(\D_2)_{A_i}$ are drawn (but not disclosed to \Tester). This defines the $i$-configuration of $A_1,\dots,A_{i}$ and $(s^{(1)}_1,s^{(2)}_1),\dots,(s^{(1)}_{i}, s^{(2)}_{i})$, which is revealed to $\Tester$. Put differently, the algorithm only learns \textsf{(i)} to which of
the $A_\ell$   
the new sample belongs, and \textsf{(ii)} if it is one of the previous samples, in which stage(s) and for which of $\D_1,\D_2$ it has already seen it.
  \end{itemize}
After $t=t(\eps,n)$ such stages, \Tester returns either \accept or \reject, based only on the configuration of $A_1,\dots,A_{t}$ and $(s^{(1)}_1,s^{(2)}_1),\dots,(s^{(1)}_{t}, s^{(2)}_{t})$ (which is all the information it ever had access to).
\end{definition}
Note that in particular, \Tester does not know the labels of samples it got, nor the actual queries it makes: it knows all about their sizes and sizes of their intersections, but not the actual ``identity'' of the elements they contain. 

\subsection{On the use of Yao's Principle in our lower bounds}\label{ssec:prelim:yao}

We recall Yao's Principle (\eg, see Chapter 2.2 of~\cite{MR:95}), a technique which is ubiquitous in the analysis of randomized algorithms.
Consider a set $S$ of instances of some problem: what this principle states is that the worst-case expected cost of a randomized algorithm on instances in $S$ is lower-bounded by the expected cost of the best deterministic algorithm on an instance drawn randomly from $S$.

As an example, we apply it in a standard way in \expref{Section}{sec:lb-uniform}: instead of considering a randomized algorithm working on a fixed instance, we instead analyze a \emph{deterministic} algorithm working on a \emph{random} instance. (We note that, importantly, the randomness in the samples returned by the $\COND$ oracle is ``external'' to this argument, and these samples behave identically in an application of Yao's Principle.)

On the other hand, our application in \expref{Section}{sec:lb-equiv} is slightly different, due to our use of adaptive core testers.
Once again, we focus on deterministic algorithms working on random instances, and the randomness in the samples is external and therefore unaffected by Yao's Principle.
However, we stress that the randomness in the choice of the set $\Lambda_i$ is also external to the argument, and therefore unaffected -- similar to the randomness in the samples, the algorithm has no control here.
Another way of thinking about this randomness is via another step in the distribution over instances: after an instance (which is a pair of distributions) is randomly chosen, we permute the labels on the elements of the distribution's domain uniformly at random.
We note that since the property in question is label-invariant, this does not affect its value.
We can then use the model as stated in \expref{Section}{ssec:prelim:coretesters} for ease of analysis, observing that this can be considered an application of the principle of deferred decisions (as in Chapter 3.5 of \cite{MR:95}).

\subsection{Chernoff bounds for Binomials and Hypergeometrics}\label{ssec:chernoff}
We will make extensive use of Chernoff-style bounds in this work.
Recall that the $\mathrm{Binomial}(n,p)$ distribution describes the distribution of the number of successes when we run $n$ independent Bernoulli trials, each with success probability $p$.

\begin{lemma}[Chernoff Bound for Binomials]\label{lem:chernoff-bin}
Let $X \sim \mathrm{Binomial}(n,p)$ and $\mu = \expect{X} = np$.
Then
\[
  \forall\delta\in(0,1),\qquad \probaOf{|X - \mu| \geq \delta \mu} \leq 2\exp\left(-\frac{\delta^2\mu}{3}\right).
\]
\end{lemma}

We will also need a similar Chernoff-style bound for the \emph{hypergeometric} distribution.
The $\mathrm{Hypergeometric}(n, K, N)$ distribution describes the distribution of the number of successes when we draw $n$ times without replacement from a population of size $N$, in which $K$ objects have the pertinent feature (and thus count as successes).
Note that if the drawing were done with replacement, and $K/N = p$, then this would be equivalent to $\mathrm{Binomial}(n,p)$.
Sampling without replacement introduces \emph{negative correlation} between the probability of each draw being successful.
This type of negative correlation generally ``helps'' with concentration, allowing one to prove similar concentration bounds (see, \eg, \cite{Chvatal79, DubhashiR96}, Theorem 1.17 of~\cite{HyperGeom:Concentration}).

\begin{lemma}[Chernoff Bound for Hypergeometrics]\label{lem:chernoff-hyp}
Let $X \sim \mathrm{Hypergeometric}(n, K, N)$ and $\mu = \expect{X} = nK/N$.
Then,
\[
  \forall\delta\in(0,1),\qquad \probaOf{|X - \mu| \geq \delta \mu} \leq 2\exp\left(-\frac{\delta^2\mu}{3}\right).
\]
\end{lemma}

\section{A Lower Bound for Equivalence Testing}\label{sec:lb-equiv}
We prove our main lower bound on the sample complexity of testing
equivalence between unknown distributions. 
We construct two priors \dyes and \dno over \emph{pairs} of
distributions $(\D_1,\D_2)$ over \domain. \dyes is a distribution over
pairs of distributions of the form $(\D,\D)$, namely the case when the
distributions are identical. Similarly, \dno is a distribution over
$(\D_1,\D_2)$ with $\totalvardist{\D_1}{\D_2}\ge\frac14$. 
We then show that no algorithm \Tester making
$\bigO{\sqrt{\log\log n}}$ queries to
$\COND^{\D_1},\COND^{\D_2}$ can distinguish between a draw from \dyes
and \dno with constant probability (over the choice of $(\D_1,\D_2)$,
the randomness in the samples it obtains, and its internal
randomness).

We describe the construction of $\dyes$ and $\dno$
in~\expref{Section}{ssec:main:lb:construction}, and provide a detailed
analysis in~\expref{Section}{ssec:main:lb:analysis}.  

\subsection{Construction}\label{ssec:main:lb:construction}

We now summarize how a pair of distribution is constructed under $\dyes$ and $\dno$. (Each specific step will be described in more detail in the subsequent paragraphs.)\smallskip
\begin{enumerate}
\item{\bf Effective Support} 
\begin{enumerate}
\item
Pick $k_b$ from the set
  $\{0,1,\ldots,\half\log n\}$ at random. 
\item
Let $b=2^{k_b}$ and $m\eqdef b\cdot n^{1/4}$.
\end{enumerate}
\item{\bf Buckets}
\begin{enumerate}
\item
Choose $\rho $ and $r$ such that $\sum_{i=1}^{2r}\rho^i=n^{1/4}$.
\item
Divide $\{1,\ldots, m\}$ into intervals $B_1, \ldots, B_{2r}$ with $|B_i|=b\cdot
\rho^{i}$. 
\end{enumerate}
\item {\bf Distributions}
\begin{enumerate}
\item
For each $i \in [2r]$, assign probability mass $\frac1{2r}$ uniformly over $B_i$ to generate distribution
$\D_1$. 
\item For each $i \in [r]$ independently, pick $\pi_i$ to be a Bernoulli
trial  
with $\Pr(\pi_i=0)=\half$; if $\pi_i = 0$
then assign probability mass $\frac1{4r}$ and $\frac3{4r}$ over
$B_{2i-1}$ and $B_{2i}$, 
respectively, else $\frac3{4r}$ and $\frac1{4r}$, respectively.
This generates a distribution $\D_2$.
\end{enumerate}
\item {\bf Support relabeling}
\begin{enumerate}
\item
Pick a permutation $\sigma\in S_n$ of the \emph{total} support $n$.
\item
Relabel the symbols of $D_1$ and $D_2$ according to $\sigma$. 
\end{enumerate}
\item{\bf Output:} Generate $(\D_1,\D_1)$ for \dyes, and $(\D_1,\D_2)$
  otherwise. 
\end{enumerate}

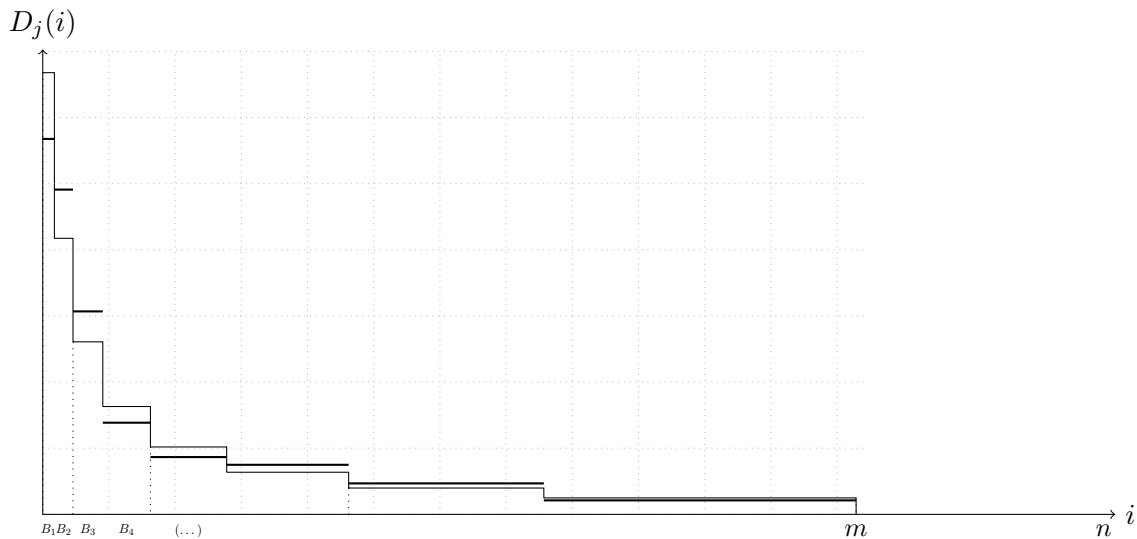
\begin{figure}[!ht]\centering
  \begin{tikzpicture}[x=5pt, y=2pt,scale=0.88]
  \pgfmathsetmacro{\KFACT}{1.6}
  \pgfmathsetmacro{\rbuckets}{8}   \pgfmathsetmacro{\rbucketsminus}{{4*floor((\rbuckets-1)/4)}}
  \pgfmathsetmacro{\rbucketshalf}{{floor(\rbuckets/2)-1}}
  \pgfmathsetmacro{\xmax}{ {(pow(\KFACT,\rbuckets)-1)/(\KFACT-1) + 1} }
  \pgfmathsetmacro{\ymax}{100}
  
  \pgfmathsetmacro{\DFACT}{0.85}
   
         \draw[dotted] \foreach \i in {0,2,...,\rbuckets}{
       ({(pow(\KFACT,\i)-1)/(\KFACT-1)},{(\ymax-5)*pow(1/\KFACT, \i)}) -- ({(pow(\KFACT,\i)-1)/(\KFACT-1)},0)
     };
         \foreach \i in {1,...,4}{
      \node[below] at ({((pow(\KFACT,\i-1)-1)+(pow(\KFACT,\i)-1))/(2*(\KFACT-1))},0) {\scalebox{0.45}{$B_{\i}$}};
    };
    \foreach \i in {5}{
      \node[below] at ({((pow(\KFACT,\i-1)-1)+(pow(\KFACT,\i)-1))/(2*(\KFACT-1))},0) {\scalebox{0.45}{$(\dots)$}};
    };
        \foreach \i in {0,4,...,\rbucketsminus}{
    };
    
    \draw[-] \foreach \i in {1,...,\rbuckets}{
      ({(pow(\KFACT,\i-1)-1)/(\KFACT-1)},{(\ymax-5)*pow(1/\KFACT, \i-1)})
        -- ({(pow(\KFACT,\i)-1)/(\KFACT-1)},{(\ymax-5)*pow(1/\KFACT, \i-1)})
        -- ({(pow(\KFACT,\i)-1)/(\KFACT-1)},{(\ymax-5)*pow(1/\KFACT, \i)})
    };
     \draw[thin] ({(pow(\KFACT,\rbuckets)-1)/(\KFACT-1)},{(\ymax-5)*pow(1/\KFACT, \rbuckets)}) -- ({(pow(\KFACT,\rbuckets)-1)/(\KFACT-1)},0);
    
        \pgfmathsetseed{100}     \foreach \i in {0,...,\rbucketshalf}{
      \pgfmathsetmacro{\cointoss}{round(random)};
      \pgfmathsetmacro{\fstHigh}{{\cointoss*\DFACT + (1-\cointoss)/\DFACT}};
      \pgfmathsetmacro{\sndHigh}{{\cointoss/\DFACT + (1-\cointoss)*\DFACT}};
      
      \pgfmathtruncatemacro{\iIsEven}{int(mod(\i, 2))}
      \ifnum\iIsEven=0
        \xdef\linestyle{thick};
      \else
        \xdef\linestyle{thick};
      \fi
      
      \draw[thin,\linestyle]  ({(pow(\KFACT,2*\i)-1)/(\KFACT-1)},{\fstHigh*(\ymax-5)*pow(1/\KFACT, 2*\i)})
        -- ({(pow(\KFACT,2*\i+1)-1)/(\KFACT-1)},{(\fstHigh*(\ymax-5)*pow(1/\KFACT, 2*\i)});
      \draw[thin,\linestyle]  ({(pow(\KFACT,2*\i+1)-1)/(\KFACT-1)},{\sndHigh*(\ymax-5)*pow(1/\KFACT, 2*\i+1)})
        -- ({(pow(\KFACT,2*(\i+1))-1)/(\KFACT-1)},{(\sndHigh*(\ymax-5)*pow(1/\KFACT, 2*\i+1)});
    };
    
        \draw[ultra thin, help lines, dotted] (0,0) grid (\xmax,\ymax);
    \draw [<->] (0,\ymax) node[above] {$\D_j(i)$} -- (0,0) -- ({1.3*\xmax},0)  node[right] {$i$};
    
    \node[below] at ({\xmax-1},0) {\small $m$};
    \node[below] at ({1.3*\xmax-1},0) {\small $n$};
    
  \end{tikzpicture}\caption{\label{fig:constr:no:inst}A \no-instance $(\D_1,\D_2)$ (before permutation).}
\end{figure}

\noindent We now describe the various steps of the construction in greater
detail.
\paragraph*{Effective support.}
Both $\D_1$ and $\D_2$, albeit distributions on $\domain$, will have
(common) \emph{sparse} support. 
The support size is taken to be $m \eqdef b\cdot n^{1/4}$. 
Note that, from the
above definition, $m$ is chosen uniformly at random from products of
$n^{1/4}$ with powers of $2$, resulting in values in $[n^{1/4},
n^{3/4}]$.\medskip 

In this step $b$ will act as a random scaling factor. The objective of
this random scaling is to induce uncertainty in the algorithm's
knowledge of the true support size of the distributions, and to
prevent it from leveraging this information to test equivalence. 
In fact one can verify that the class of distributions induced for a
single value of $b$, namely all distributions have the same value of
$m$, then 
one can distinguish the \dyes and \dno cases with only $\bigO{1}$
conditional queries. 
The test would (roughly) go as follows.
Since $|B_i|$ is known, one can choose a random subset $S$ of the domain which (with high probability) has no intersection with $B_i$ for $i \leq 2r - 2$, a $O(1)$ size intersection with $B_{2r-1}$, and a $O(\rho)$ size intersection with $B_{2r}$.
Perform $O(1)$ conditional queries over the set $S$, for both distributions.
Given these queries, we can then identify which elements of $S$ belong to $B_{2r-1}$ or $B_{2r}$ -- namely, those which occur at most once belong to $B_{2r}$, and those which occur at least twice belong to $B_{2r-1}$.
In a $\dyes$ instance, then in both distributions, a $1/2$ fraction of queries will belong to $B_{2r-1}$, whereas in a $\dno$ instance, one distribution will have either a $1/4$ or $3/4$ fraction of queries in $B_{2r-1}$, allowing us to distinguish the two cases.

\paragraph*{Buckets.}
Our construction is inspired by the lower bound of Canonne, Ron, and Servedio~\cite[Theorem
8]{CRS:12} for the more restrictive \PCOND access model. 
We partition the support into $2r$ consecutive intervals (henceforth
referred to as \emph{buckets}) $B_1,\dots, B_{2r}$, where the size of
the $i$-th bucket is $b\rho^i$. 
We note that $r$ and $\rho$ will be chosen such that $\sum_{i=1}^{2r} b\rho^i = bn^{1/4}$,
\ie, the buckets fill the effective support.

\paragraph*{Distributions.}
We output a pair of distributions $(\D_1,\D_2)$. Each distribution
that we construct is uniform within any particular bucket $B_i$. In
particular, the first distribution assigns the same mass $1/2r$
to each bucket. Therefore, points within $B_i$ have the same
probability mass $1/2rb \rho^i$. For the $\dyes$ case,
the second distribution is identical to the first. For the $\dno$
case, we pair buckets in $r$ consecutive \emph{\bpairs} $\Pi_1,\dots,\Pi_r$,
with $\Pi_i= B_{2i-1}\cup B_{2i}$. 
For the second distribution $\D_2$, we consider the same buckets as
$\D_1$, but repartition the mass $1/r$ \emph{within} each
$\Pi_i$. More precisely, in each pair, one of the buckets gets now 
total probability mass $1/4r$ while the other gets
$3/4r$ (so that the probability of every point is either
decreased by a factor $1/2$ or increased by
$3/2$). The choice of which goes up and which goes down is
done uniformly and independently at random for each \bpair 
determined by the random choices of
the $\pi_i$.  

\paragraph*{Random relabeling.}
The final step of the construction randomly relabels the symbols,
namely is a random injective map from $[m]$ to $[n]$. This is done to
ensure that no information about the individual symbol labels can be
used by the algorithm for testing. For example, without this the
algorithm can consider a few symbols from the first bucket and
distinguish the \dyes and \dno cases. 
As mentioned in~\expref{Section}{ssec:prelim:yao}, for ease of analysis,
  the randomness in the choice of the permutation is, in some sense,
  deferred to the randomness in the choice of $\Lambda_i$ during
  the algorithm's execution.

\paragraph*{Summary.}
A \no-instance $(\D_1,\D_2)$ is thus defined by the following
parameters: the support size $m$, the vector
$(\pi_1,\dots,\pi_r)\in\{0,1\}^r$
(which only impacts $\D_2$), and the final permutation $\sigma$ of the
domain. A
\yes-instance $(\D_1,\D_1)$ follows an identical process,
however, $\vec \pi$ has no influence on the final outcome.
See~\expref{Figure}{fig:constr:no:inst} for an illustration of such a
$(\D_1,\D_2)$ when $\sigma$ is the identity permutation and thus the
distribution is supported over the first $m$ natural numbers. 

\paragraph*{Values for $\rho$ and $r$.}
  By setting $r={\log n}/{(8\log \rho)} + \bigO{1}$, we have as
  desired $\sum_{i=1}^{2r}\abs{B_i} = m$ 
  and there is a factor $(1+\littleO{1})n^{1/4}$ between the height of
  the first bucket $B_1$ and the one of the last, $B_{2r}$. It remains
  to choose the parameter $\rho$ itself; we shall take it to be
  $2^{\sqrt{\log n}}$, resulting in $r=\frac{1}{8}\sqrt{\log n}
    + \bigO{1}$. (Note that for the sake of the exposition, we ignore
  technical details such as the rounding of parameters, \eg, bucket
  sizes; these can be easily taken care of at the price of cumbersome
  case analyses, and do not bring much to the argument.)

\subsection{Analysis}\label{ssec:main:lb:analysis}

We now prove our main lower bound, by analyzing the behavior of core adaptive testers (as per~\expref{Definition}{def:core:tester}) on the families \dyes and \dno from the previous section.
In~\expref{Section}{sssec:ban}, we argue that, with high probability, the sizes of the queries performed by the algorithm satisfy some specific properties.
Conditioned upon this event, in~\expref{Section}{sssec:induction}, we show that the algorithm will get similar information from each query, whether it is running on a $\yes$-instance or a $\no$-instance.

Before moving to the heart of the argument, we state the following fact, straightforward from the construction of our \no-instances.
\begin{fact}\label{fact:main:lb:dist:d1:d2}
  For any $(\D_1,\D_2)$ drawn from $\dno$, one has $\totalvardist{\D_1}{\D_2} = 1/4$.
\end{fact}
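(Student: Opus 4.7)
The plan is to compute $\normone{\D_1 - \D_2}$ directly by exploiting the bucket structure of the construction, then divide by two. Since the random relabeling $\sigma$ is a bijection on the domain, it preserves total variation distance, so it suffices to analyze the unrelabeled pair.

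First, I would observe that both $\D_1$ and $\D_2$ are uniform on each bucket $B_j$ (this is immediate from steps 3(a) and 3(b) of the construction). Consequently, for every $x \in B_j$,
\[
  \abs{\D_1(x) - \D_2(x)} \;=\; \frac{\abs{\D_1(B_j) - \D_2(B_j)}}{\abs{B_j}},
\]
and therefore $\sum_{x \in B_j} \abs{\D_1(x) - \D_2(x)} = \abs{\D_1(B_j) - \D_2(B_j)}$.

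Next, I would evaluate these bucket-level differences. By construction, $\D_1(B_j) = \frac{1}{2r}$ for every $j$, while for each \bpair $\Pi_i = B_{2i-1} \cup B_{2i}$, regardless of the value of $\pi_i$, one of $\D_2(B_{2i-1}), \D_2(B_{2i})$ equals $\frac{1}{4r}$ and the other equals $\frac{3}{4r}$. In either case,
\[
  \abs{\D_1(B_{2i-1}) - \D_2(B_{2i-1})} \;=\; \abs{\D_1(B_{2i}) - \D_2(B_{2i})} \;=\; \frac{1}{4r}.
\]
Summing over all $2r$ buckets yields
\[
  \normone{\D_1 - \D_2} \;=\; \sum_{j=1}^{2r} \abs{\D_1(B_j) - \D_2(B_j)} \;=\; 2r \cdot \frac{1}{4r} \;=\; \frac{1}{2},
\]
and dividing by $2$ gives $\totalvardist{\D_1}{\D_2} = 1/4$, as claimed. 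The only mild subtlety is handling the rounding of bucket sizes $b\rho^i$, which (as noted in the construction) can be absorbed without affecting the final value; no step of the computation depends on the precise sizes of the buckets, only on the per-bucket masses $\D_1(B_j)$ and $\D_2(B_j)$.
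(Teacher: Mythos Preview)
Your argument is correct and is exactly the straightforward computation the paper has in mind; in fact the paper does not spell out a proof at all, merely stating the fact as ``straightforward from the construction of our \no-instances.'' Your bucket-by-bucket accounting of $\normone{\D_1-\D_2}$ is the natural way to verify it.
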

\noindent Moreover, as allowing more queries can only increase the probability of success, we hereafter focus on a core adaptive tester that performs exactly $q=\frac{1}{10}\sqrt{\log\log n}$ (adaptive) queries; and will show that it can only distinguish between \yes and \no-instances with probability $o(1)$.

\subsubsection{Banning ``bad queries''}\label{sssec:ban}
As mentioned in~\expref{Section}{ssec:main:lb:construction}, the draw of a \yes or \no-instance involves a random scaling of the size of the support of the distributions, meant to ``blind'' the testing algorithm.
Recall that a testing algorithm is specified by a decision tree, which at step $i$, specifies how many unseen elements from each atom to include in the query ($\{\newset\}$) and which previously seen elements to include in the query (sets $K^{(1)}_i, K^{(2)}_i$, as defined in~\expref{Section}{ssec:prelim:coretesters}), where the algorithm's choice depends on the observed configuration at that time.
Note that, using Yao's Principle (as discussed in~\expref{Section}{ssec:prelim:yao}), these choices are deterministic for a given configuration -- in particular, we can think of all $\{\newset\}$ and $K^{(1)}_i, K^{(2)}_i$ in the decision tree as being fixed.
In this section, we show that all $\newset$ values satisfy with high probability some particular conditions with respect to the choice of distribution, where the randomness is over the choice of the support size.\medskip

First, we recall an observation from \cite{CFGM:13}, though we modify it slightly to apply to configurations on pairs of distributions and we apply a slightly tighter analysis.
This essentially limits the number of states an algorithm could be in by a function of how many queries it makes.
\begin{proposition}\label{prop:tree-nodes}
  The number of nodes in a decision tree corresponding to a $q$-sample algorithm is at most $2^{6q^2 + 1}$.
\end{proposition}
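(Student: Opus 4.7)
The proof is essentially a counting argument that leverages the explicit $6q^2$-bit description of configurations from Definition~\ref{def:configuration}. My plan is to bound the number of nodes of the decision tree level by level, and then sum the resulting geometric-type series.

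The first observation is that in the core adaptive tester model, the entire state of a decision tree node at depth $i$ is determined solely by the $i$-configuration seen thus far: by definition of a core adaptive tester, the tester's next move (the specifications $\newset$ and $K_i^{(1)}, K_i^{(2)}$) is a function only of its internal randomness and the current configuration, and by Yao's Principle (as discussed in~\autoref{ssec:prelim:yao}) we may take this function to be deterministic. Since by Definition~\ref{def:configuration} an $i$-configuration is specified by exactly $6i^2$ bits, there are at most $2^{6i^2}$ distinct configurations that can arise at step $i$, and hence at most $2^{6i^2}$ nodes at depth $i$ of the decision tree.

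Summing over all depths $i \in \{0, 1, \ldots, q\}$, the total number of nodes in the tree is at most
\[
  \sum_{i=0}^{q} 2^{6i^2}.
\]
To see that this sum is at most $2^{6q^2+1}$, I would observe that consecutive terms satisfy $2^{6i^2}/2^{6(i-1)^2} = 2^{12i-6}$, which grows very quickly in $i$; in particular, for $q\geq 1$ we have $\sum_{i=0}^{q-1} 2^{6i^2} \leq q\cdot 2^{6(q-1)^2} \leq 2^{6q^2}$, since $q \leq 2^{12q-6}$ for all $q\geq 1$. Combining this with the last term yields $\sum_{i=0}^{q} 2^{6i^2} \leq 2\cdot 2^{6q^2} = 2^{6q^2+1}$, completing the proof.

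There is no real obstacle here: once one has the configuration-based description of nodes, the counting is routine. The only subtle point is the appeal to Yao's Principle (or to treating the algorithm as deterministic on a random instance) to justify that two nodes with the same observed configuration may be identified, so that the number of nodes at a given depth is bounded by the number of configurations rather than by the size of the transcript space.
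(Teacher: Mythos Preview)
Your proof is correct and follows essentially the same approach as the paper's: bound the number of nodes at depth $i$ by the number of $i$-configurations ($2^{6i^2}$), then sum over $i=0,\dots,q$. You provide more detail on the final summation and add a remark about Yao's Principle for justifying determinism, but the core argument is identical.
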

\begin{proof}
  As mentioned in~\expref{Definition}{def:configuration}, an $i$-configuration can be described using $6i^2$ bits, resulting in at most $2^{6i^2}$ $i$-configurations.
  Since each $i$-configuration leads us to some node on the $i$-th level of the decision tree, the total number of nodes can be upper bounded by summing over the number of $i$-configurations for $i$ ranging from $0$ to $q$, giving us the desired bound.
\end{proof}

For the sake of the argument, we will introduce a few notions applying to the \emph{sizes} of query sets: namely, the notions of a number being \emph{small}, \emph{large}, or \emph{stable}, and of a vector being \emph{incomparable}.
Roughly speaking, a number is small if a uniformly random set of this size does not, in expectation, hit the largest bucket $B_{2r}$ -- in other words, the set is likely to be disjoint from the support.
On the other hand, it is large if we expect such a set to intersect many \bpairs (\ie, a significant fraction of the support). 

The definition of stable numbers is slightly more quantitative: a number $\beta$ is stable if a random set of size $\beta$, for each bucket $B_i$, is either disjoint from $B_i$ or has an intersection with $B_i$ of size close to the expected value.
In the latter case, we say the set \emph{concentrates} over $B_i$. 
Finally, a vector of values $(\beta_j)$ is incomparable if the union of random sets $S_1,\dots,S_m$ of sizes $\beta_1,\dots,\beta_m$ contains (with high probability) an amount of mass $\D\left(\bigcup_j S_j\right)$ which is either much smaller or much larger than the probability $\D(s)$ of any single element $s$.

\noindent We formalize these concepts in the definitions below. To motivate them, it will be useful to bear in mind that, from the construction described in~\expref{Section}{ssec:main:lb:construction}, the expected intersection of a uniform random set of size $\beta$ with a bucket $B_i$ is of size $\beta b\rho^i/n$; while the expected probability mass from $B_i$ it contains (under either $\D_1$ or $\D_2$) is $\beta/2rn$.
\begin{definition}
  Let $q$ be an integer, and let $\varphi=\Theta(q^{5/2})$.  A number $\beta$ is said to be \emph{small} if $\beta < \frac{n}{b \rho^{2r}}$; it is \emph{large} (with relation to some integer $q$) if $\beta \geq \frac{n}{b \rho^{2r-2\varphi}}$.
\end{definition}
Note that the latter condition equivalently means that, in expectation, a set of large size will intersect at least $\varphi+1$ \bpairs (as it hits an expected $2\varphi+1$ buckets, since $\beta \abs{ B_{2r-2\varphi} }/n \geq 1$).
From the above definitions we get that, with high probability, a random set of any fixed size will in expectation either hit many or no buckets.
\begin{proposition}\label{prop:small-large}
  A number is either small or large with probability $1 - \bigO{ \frac{\varphi \log \rho}{\log n} }$.
\end{proposition}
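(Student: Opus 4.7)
The plan is to reduce the claim to a simple interval-length estimate, using the fact that $b = 2^{k_b}$ where $k_b$ is uniform on $\{0, 1, \ldots, \tfrac{1}{2}\log n\}$.

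First, I would rewrite the two conditions purely in terms of $b$: $\beta$ is small iff $b < n/(\beta\rho^{2r})$, and $\beta$ is large iff $b \geq n/(\beta\rho^{2r - 2\varphi})$. Consequently, a fixed number $\beta$ fails to be either small or large precisely when
\[
\frac{n}{\beta\rho^{2r}} \;\leq\; b \;<\; \frac{n}{\beta\rho^{2r - 2\varphi}}.
\]
Taking base-$2$ logarithms, this corresponds to $k_b$ lying in a half-open interval of length exactly $2\varphi \log \rho$ (independent of $\beta$).

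Next, since $k_b$ is drawn uniformly from a set of $\tfrac{1}{2}\log n + 1$ consecutive integers, the probability that it lands in any interval of length $2\varphi \log \rho$ is at most $\frac{2\varphi\log\rho}{\frac{1}{2}\log n + 1} = O\!\left(\frac{\varphi \log\rho}{\log n}\right)$. Taking the complement yields the claimed bound $1 - O\!\left(\frac{\varphi \log\rho}{\log n}\right)$.

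There is no real obstacle here — the statement is essentially a consequence of the logarithmic spacing of admissible values of $b$, chosen precisely so that the ``intermediate'' window (which contains at most $2\varphi$ buckets) is narrow in log-scale compared to the total range $\tfrac{1}{2}\log n$ from which $k_b$ is sampled. The only thing to be a bit careful about is that the argument is for a \emph{single} number $\beta$; when applying this later to the (at most $2^{6q^2+1}$) possible values of $\newset$ appearing in the decision tree of \autoref{prop:tree-nodes}, one will need to union-bound and verify that $\varphi = \Theta(q^{5/2})$ is chosen large enough so that the union bound still yields $o(1)$ failure probability.
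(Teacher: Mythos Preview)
Your proposal is correct and follows essentially the same argument as the paper: both rewrite the ``neither small nor large'' condition as $b$ lying in an interval whose endpoints have ratio $\rho^{2\varphi}$, observe that this corresponds to at most $2\varphi\log\rho$ admissible integer values of $k_b$, and divide by the $\Theta(\log n)$ total values. Your forward-looking remark about the union bound over the decision tree is also accurate and corresponds exactly to how the paper uses this proposition in \autoref{lem:banned}.
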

\begin{proof}
  A number $\beta$ is neither large nor small if $\frac{\rho^{2\varphi}n}{\beta\rho^{2r}} \leq b \leq \frac{n}{\beta\rho^{2r}}$.
  The ratio of the endpoints of the interval is $\rho^{2\varphi}$.
  Since $b = 2^{k_b}$, this implies that at most $\log \rho^{2\varphi} = 2\varphi \log \rho$ values of $k_b$ could result in a fixed number falling in this range.
  As there are $\bigTheta{\log n}$ values for $k_b$, the proposition follows.
\end{proof}

The next definition characterizes the sizes of query sets for which the expected intersection with any bucket is either close to 0 (less than $1/\alpha$, for some threshold $\alpha$), or very big (more than $\alpha$). (It will be helpful to keep in mind that we will eventually use this definition with $\alpha=\poly(q)$.)
\begin{definition}
  A number $\beta$ is said to be \emph{$\alpha$-stable} (for $\alpha \geq 1$) if, for each $j \in [2r]$, $\beta \notin\left[ \frac{n}{\alpha b \rho^j}, \frac{\alpha n}{b \rho^j} \right]$. 
  A vector of numbers is said to be $\alpha$-stable if all numbers it contains are $\alpha$-stable.
\end{definition}
\begin{proposition}\label{prop:stable}
  A number is $\alpha$-stable with probability $1 - \bigO{ \frac{r \log \alpha}{\log n} }$.
\end{proposition}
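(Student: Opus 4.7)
The plan is to mimic the proof of \autoref{prop:small-large}, replacing the single ``bad interval'' of values of $b$ by a union of $2r$ such intervals (one for each bucket index $j$), and then using a union bound.

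More precisely, fix $\beta$ and $j \in [2r]$. The condition $\beta \in \left[ \frac{n}{\alpha b \rho^j}, \frac{\alpha n}{b \rho^j} \right]$ is, by elementary algebra, equivalent to
\[
b \in \left[ \frac{n}{\alpha \beta \rho^j}, \frac{\alpha n}{\beta \rho^j} \right].
\]
The ratio of the right endpoint to the left endpoint of this interval is $\alpha^2$, independent of $\beta$, $\rho$, and $j$. Since $b = 2^{k_b}$, the number of integer values of $k_b \in \{0, 1, \ldots, \frac{1}{2}\log n\}$ for which $b$ lies in this interval is at most $\log \alpha^2 + 1 = 2\log \alpha + 1$.

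Now I would take a union bound over all $j \in [2r]$: the number of values of $k_b$ for which $\beta$ fails to be $\alpha$-stable is at most $2r(2\log \alpha + 1) = O(r \log \alpha)$ (using $\alpha \geq 1$). Since $k_b$ is drawn uniformly from a set of size $\Theta(\log n)$, the probability that $\beta$ is not $\alpha$-stable is bounded by $O(r \log \alpha / \log n)$, giving the claim. No serious obstacle is expected here: this is a direct generalization of \autoref{prop:small-large}, with the only minor bookkeeping being the extra factor of $2r$ coming from the union bound over buckets.
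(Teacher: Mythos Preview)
Your proposal is correct and follows essentially the same approach as the paper: rewrite the failure condition for each fixed $j$ as an interval constraint on $b$, count the at most $O(\log\alpha)$ values of $k_b$ falling in that interval, and take a union bound over the $2r$ buckets. In fact, you are slightly more careful than the paper in noting that the ratio of the interval endpoints is $\alpha^2$ (hence $2\log\alpha+1$ bad values of $k_b$), whereas the paper writes ``$\log 2\alpha$''; both are $O(\log\alpha)$, so this does not affect the conclusion.
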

\begin{proof}
  Fix some $j \in [2r]$.
  A number $\beta$ does not satisfy the definition of $\alpha$-stability for this $j$ if $\frac{n}{\alpha \beta \rho^j} \leq b \leq \frac{n\alpha}{\beta \rho^j}$.
  Since $b = 2^{k_b}$, this implies that at most $\log 2\alpha$ values of $k_b$ could result in a fixed number falling in this range.
  Noting that there are $\Theta(\log n)$ values for $k_b$ and taking a union bound over all $2r$ values for $j$, the proposition follows.
\end{proof}

The following definition characterizes the sizes of query sets which have a probability mass far from the probability mass of any individual element. (For the sake of building intuition, the reader may replace $\nu$ in the following by the parameter $b$ of the distribution.)

\begin{definition}\label{def:incomparable}
  A vector of numbers $(\beta_1, \dots, \beta_\ell)$ is said to be \emph{$(\alpha,\tau)$-incomparable with respect to $\nu$} (for $\tau \geq 1$) if the two following conditions hold.
  \begin{itemize}
    \item $(\beta_1, \dots, \beta_\ell)$ is $\alpha$-stable.
    \item Let $\Delta_j$ be the minimum $\Delta \in \{0,\dots, 2r\}$ such that $\frac{\beta_j \nu \rho^{2r - \Delta}}{n} \leq \frac{1}{\alpha}$, or $2r$ if no such $\Delta$ exists.
      For all $i \in [2r]$, $\frac{1}{2rn} \sum_{j = 1}^\ell \beta_j \Delta_j \not \in \left[\frac{1}{\tau 2r\nu \rho^i}, \frac{\tau}{2r \nu \rho^i}\right]$.
  \end{itemize}
\end{definition}
\noindent Recall from the definition of $\alpha$-stability of a number that a random set of this size either has essentially no intersection with a bucket or ``concentrates over it'' (\ie, with high probability, the probability mass contained in the intersection with this bucket is very close to the expected value).
The above definition roughly captures the following. For any $j$, $\Delta_j$ is the number of buckets that will concentrate over a random set of size $\beta_j$.
The last condition asks that the total probability mass from $\D_1$ (or $\D_2$) enclosed in the union of $m$ random sets of size  $\beta_1,\dots,\beta_\ell$ be a multiplicative factor of $\tau$ from the individual probability weight $1/2r b \rho^i$ of a single element from any of the $2r$ buckets.

\begin{proposition}\label{prop:incomparable}
  Given that a vector of numbers of length $\ell$ is $\alpha$-stable, it is $(\alpha,q^2)$-incomparable with respect to $b$ with probability at least $1 - \bigO{\frac{r\log q}{\log n}}$.
\end{proposition}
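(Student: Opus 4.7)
The plan is to mimic the template of \autoref{prop:stable}: fix the (allegedly $\alpha$-stable) vector $(\beta_1, \dots, \beta_\ell)$ and, for each $i \in [2r]$, bound by $\bigO{\log q}$ the number of values of $k_b$ (where $b = 2^{k_b}$) for which the second condition of $(\alpha, q^2)$-incomparability fails at index $i$. Summing over the $2r$ choices of $i$, and recalling that $k_b$ is drawn uniformly from a set of size $\bigTheta{\log n}$, will then yield the claimed bound.

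The first step is to unpack $\Delta_j(b)$. By definition, $\Delta_j(b) = \min\{ \Delta \in \{0, \dots, 2r\} : \beta_j b \rho^{2r-\Delta}/n \leq 1/\alpha \}$, which, viewed as a function of $b$ for fixed $\beta_j$, is a non-decreasing integer-valued step function of $b$, saturating at $0$ for sufficiently small $b$ and at $2r$ for sufficiently large $b$. Consequently, setting $S(b) \eqdef \sum_{j=1}^{\ell} \beta_j \Delta_j(b)$, the product $b \cdot S(b)$ is non-decreasing in $b$ and in fact at least \emph{doubles} each time $k_b$ increases by one whenever $S(b) > 0$ (since $b$ doubles while $S$ cannot decrease). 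In the degenerate case $S(b) = 0$, we have $b \cdot S(b) = 0$, which lies strictly below any forbidden window and is therefore harmless.

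Now fix any $i \in [2r]$. The bad event at $i$ reads $\frac{1}{2rn} \sum_j \beta_j \Delta_j(b) \in \left[\frac{1}{2r q^2 b \rho^i},\, \frac{q^2}{2r b \rho^i}\right]$; multiplying through by $2rnb$, this rewrites as $b \cdot S(b) \in \left[n q^{-2} \rho^{-i},\, n q^2 \rho^{-i}\right]$, a window of multiplicative width $q^4$. Because $b \cdot S(b)$ at least doubles with each unit increment of $k_b$, the number of values of $k_b$ for which $b \cdot S(b)$ lies in this window is at most $\log_2(q^4) + \bigO{1} = \bigO{\log q}$.

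Finally, a union bound over the $2r$ possible values of $i$ shows that at most $\bigO{r \log q}$ of the $\bigTheta{\log n}$ possible values of $k_b$ cause the second condition of $(\alpha, q^2)$-incomparability to fail, yielding the desired $1 - \bigO{r \log q / \log n}$ bound. The main technical nuisance I anticipate lies in correctly handling the saturated regimes of $\Delta_j$: both the $\Delta_j \equiv 0$ case (where $b \cdot S(b) = 0$ stays below every window) and the $\Delta_j \equiv 2r$ case (where $S(b)$ is constant in $b$ but $b$ itself still doubles) must be treated, but in each situation the basic geometric-growth argument applies unchanged.
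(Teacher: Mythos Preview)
Your proposal is correct and takes a cleaner route than the paper. Both proofs share the same outer structure: for each fixed $i\in[2r]$, bound the number of bad values of $k_b$ by $\bigO{\log q}$, then take a union bound over $i$. The difference lies in how that inner bound is obtained.

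The paper explicitly solves for $\Delta_j$ in terms of $b$ (via the relation $\rho^{\Delta_j}/b \in [\beta_j\alpha\rho^{2r}/n,\ \beta_j\alpha\rho^{2r+1}/n)$), writes $b\cdot S(b)$ as $b\sum_j\beta_j(\lambda_j+\bigO{1}) + \frac{b\log b}{\log\rho}\sum_j\beta_j$, and then splits into three cases depending on which summand dominates; in each case it argues separately that $\bigO{\log q}$ values of $k_b$ land in the forbidden window. Your argument sidesteps all of this: you observe directly that each $\Delta_j(b)$ is non-decreasing in $b$, hence $S(b)$ is non-decreasing, hence $b\cdot S(b)$ at least doubles whenever $k_b$ increments by one (once $S(b)>0$). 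Since the forbidden window has multiplicative width $q^4$, at most $4\log_2 q + \bigO{1}$ consecutive values of $k_b$ can land in it, and the set of such $k_b$ is an interval by monotonicity. This monotonicity argument is shorter, avoids the case split, and handles the saturated regimes $\Delta_j\in\{0,2r\}$ uniformly (the paper's explicit formula for $\Delta_j$ is only valid in the unsaturated range, a point it glosses over). The price you pay is a slightly looser constant in one of the paper's subcases (where it gets $\bigO{\log q/\log\log q}$ from the $b\log b$ growth), but that refinement is irrelevant to the stated bound.
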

\begin{proof}
Fix any vector $(\beta_1,\dots,\beta_\ell)$. By the definition above, for each value $b$ such that $(\beta_1,\dots,\beta_\ell)$ is $\alpha$-stable, we have
\[
     \beta_j\cdot\frac{\alpha\rho^{2r}}{n} \leq \frac{\rho^{\Delta_j}}{b} < \beta_j\cdot\frac{\alpha\rho^{2r+1}}{n}, \quad j\in[\ell]
\]
or, equivalently,
\[
     \frac{\log\frac{\alpha\beta_j}{n}}{\log \rho} + 2r + \frac{\log b}{\log \rho}
     \leq \Delta_j
     < \frac{\log\frac{\alpha\beta_j}{n}}{\log \rho} + 2r + \frac{\log b}{\log \rho}   + 1, 
     \quad j\in[\ell].
\]
Writing $\lambda_j\eqdef \frac{\log\frac{\alpha\beta_j}{n}}{\log \rho} + 2r$ for $j\in[\ell]$, we obtain that 
\begin{equation}\label{eq:incomparable:stability:consequence}
    \sum_{j=1}^\ell \beta_j \Delta_j b = b \sum_{j=1}^\ell \beta_j (\lambda_j+\bigO{1}) + \frac{b\log b}{\log \rho}\sum_{j=1}^\ell \beta_j.
\end{equation}
\begin{itemize}
  \item If it is the case that $\log \rho \cdot \sum_{j=1}^\ell \beta_j (\lambda_j+\bigO{1}) \ll \log b\cdot \sum_{j=1}^\ell \beta_j$. Then, for any fixed $i\in[2r]$, to meet the second item of the definition of incomparability we need
$\sum_{j=1}^\ell \beta_j \Delta_j b \notin [n/(200q \rho^i), 200qn/\rho^i]$. This is essentially, with the assumption~\expeqref{equation}{eq:incomparable:stability:consequence} above, requiring that
\[
    b\log b \notin  \left [\frac{n\log \rho}{2q^2 \rho^i \sum_{j=1}^\ell \beta_j}, \frac{2q^2 n\log \rho}{\rho^i \sum_{j=1}^\ell \beta_j} \right].
\] 
Recalling that $b\log b = k_b 2^{k_b}$, this means that $\bigO{\log q/\log \log q}$ values of $k_b$ are to be ruled out. (Observe that this is the number of possible ``bad values'' for $b$ without the condition from the case distinction above; since we add an extra constraint on $b$, there are at most this many values to avoid.)
  \item Conversely, if $\log \rho \cdot \sum_{j=1}^\ell \beta_j (\lambda_j+\bigO{1}) \gg \log b \cdot \sum_{j=1}^\ell \beta_j$ the requirement becomes
  \[
    b \notin  \left [\frac{n\log \rho}{2q^2 \rho^i \sum_{j=1}^\ell \beta_j(\lambda_j+\bigO{1})}, \frac{2q^2 n\log \rho}{\rho^i \sum_{j=1}^\ell \beta_j(\lambda_j+\bigO{1})} \right].
  \] 
ruling out this time $\bigO{\log q}$ values for $k_b$.
  \item Finally, the two terms are comparable only if $\log b = \Theta\Big( \log \rho \cdot \sum_{j=1}^\ell \beta_j (\lambda_j+\bigO{1}) \cdot \left(\sum_{j=1}^\ell \beta_j\right)^{-1} \Big)$; given that $\log b=k_b$, this rules out this time $\bigO{1}$ values for $k_b$.
\end{itemize}
A union bound over the $2r$ possible values of $i$, and the fact that $k_b$ can take $\bigTheta{\log n}$ values, complete the proof.
\end{proof}

\noindent We put these together to obtain the following lemma.
\begin{lemma}\label{lem:banned}
  With probability at least $1 - \bigO{\frac{2^{6q^2 + q}(r \log \alpha + \varphi \log \rho) +  2^{6q^2}(r \log q)}{\log n} }$, the following holds for the decision tree corresponding to a $q$-query algorithm:
  \begin{itemize}
    \item the size of each atom is $\alpha$-stable and either large or small;
    \item the size of each atom, after excluding elements we have previously observed,\footnote{More precisely, we mean to say that for each $i \leq q$, for every atom $A$ defined by the partition of $(A_1, \dots, A_i)$, the values $k_i^A$ and $|A \setminus \{s_1^{(1)}, s_1^{(2)}, \dots, s_{i-1}^{(1)}, s_{i-1}^{(2)}\}| - k_i^A$ are $\alpha$-stable and either large or small; } is $\alpha$-stable and either large or small;
    \item for each $i$, the vector $(\newset)_{\atom\in\operatorname{At}(A_1,\dots, A_i)}$ is $(\alpha,q^2)$-incomparable (with respect to $b$).
  \end{itemize}
\end{lemma}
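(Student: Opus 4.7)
The plan is a direct union bound over the nodes of the decision tree, applying Propositions \ref{prop:tree-nodes}, \ref{prop:small-large}, \ref{prop:stable}, and \ref{prop:incomparable}.

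First, invoking Yao's Principle as in \autoref{ssec:prelim:yao}, I would fix a deterministic core adaptive tester. This means the decision tree is fixed, and at each node all quantities $\newset$, $K_i^{(1)}$, $K_i^{(2)}$ are deterministic functions of the observed configuration. Consequently, at any fixed node, the sizes of the atoms in $\operatorname{At}(A_1,\dots,A_i)$ are deterministic, as is the number of previously observed samples falling into each atom (which is encoded in the configuration). This is the key point: these sizes do not depend on the random scaling factor $b=2^{k_b}$, which is the sole source of randomness in Propositions \ref{prop:small-large}, \ref{prop:stable}, and \ref{prop:incomparable}.

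Next, by Proposition \ref{prop:tree-nodes}, the tree has at most $2^{6q^2+1}$ nodes. At any fixed node of depth $i\le q$, the partition $\operatorname{At}(A_1,\dots,A_i)$ contains at most $2^i\le 2^q$ atoms. For the first two bullets, I need each of $|\atom|$ and $|\atom\setminus \{s_j^{(k)}\}|$ (equivalently $\newset$ and $|\atom\setminus\{s_j^{(k)}\}|-\newset$ in the footnote's reformulation) to be simultaneously $\alpha$-stable and either small or large. By Propositions \ref{prop:small-large} and \ref{prop:stable}, each of these two conditions fails with probability $\bigO{(r\log\alpha+\varphi\log\rho)/\log n}$. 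Union-bounding over the $\bigO{2^q}$ sizes per node and $2^{6q^2+1}$ nodes yields a contribution of
\[
\bigO{\frac{2^{6q^2+q}(r\log\alpha+\varphi\log\rho)}{\log n}}
\]
to the total failure probability.

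Finally, for the third bullet, conditional on the $\alpha$-stability already guaranteed, Proposition \ref{prop:incomparable} shows that at each node the vector $(\newset)_{\atom\in\operatorname{At}(A_1,\dots,A_i)}$ fails to be $(\alpha,q^2)$-incomparable with respect to $b$ with probability $\bigO{r\log q/\log n}$. Union-bounding over the $2^{6q^2+1}$ nodes contributes an additional $\bigO{2^{6q^2}(r\log q)/\log n}$, which combined with the previous bound gives the stated probability. The only point requiring care is the reduction to deterministic sizes via Yao's Principle; once that is in place the rest is bookkeeping.
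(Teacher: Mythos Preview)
Your proposal is correct and follows essentially the same approach as the paper: a direct union bound over the at most $2^{6q^2+1}$ nodes of the decision tree (from \autoref{prop:tree-nodes}) and the at most $2^q$ atom sizes per node, invoking \mautoref{prop:small-large,prop:stable} for the first two bullets and \autoref{prop:incomparable} for the third. The paper's own proof is slightly terser and does not spell out the appeal to Yao's Principle (having established this context earlier), but the counting and the structure of the argument are identical.
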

\begin{proof}
  From~\expref{Proposition}{prop:tree-nodes}, there are at most $2^{6q^2 + 1}$ tree nodes, each of which contains one vector $(\newset)_A$, and at most $2^{q}$ atom sizes.
  The first point follows from~\expref{Proposition}{prop:small-large} and~\expref{Proposition}{prop:stable} and applying the union bound over all $2^{6q^2 + 1} \cdot 2 \cdot 2^q$ sizes, where we note the additional factor of $2$ comes from either including or excluding the old elements. 
  The latter point follows from~\expref{Proposition}{prop:incomparable} and applying the union bound over all $2^{6q^2 + 1}$ nodes in the tree (each containing a single $\newset$ vector).
\end{proof}
 
\subsubsection{Key lemma: bounding the variation distance between decision trees}\label{sssec:induction}
In this section, we prove a key lemma on the variation distance between the distribution on leaves of any decision tree, when given access  to either an instance from \dyes or \dno. This lemma will in turn directly yield~\expref{Theorem}{theo:main:lb:equiv}. Hereafter, we set the parameters $\alpha$ (the threshold for stability), $\varphi$ (the parameter for smallness and largeness) and $\gamma$ (an accuracy parameter for how well things concentrate over their expected value) as follows.\footnote{This choice of parameters is not completely arbitrary: combined with the setting of $q$, $r$ and $\rho$, they ensure a total bound $o(1)$ on variation distance and probability of ``bad events'' as well as a (relative) simplicity and symmetry in the relevant quantities.} $\alpha\eqdef q^7$, $\varphi\eqdef q^{5/2}$ and $\gamma\eqdef 1/\varphi = q^{-5/2}$. (Recall further that $q=\frac{1}{10}\sqrt{\log\log n}$.)
\begin{lemma}\label{lem:induction}
  Conditioned on the events of~\expref{Lemma}{lem:banned}, consider the distribution over leaves of any decision tree corresponding to a $q$-query adaptive algorithm when the algorithm is given a $\yes$-instance, and when it is given a $\no$-instance.
  These two distributions have total variation distance $\littleO{1}$.
\end{lemma}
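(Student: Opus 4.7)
The plan is a hybrid argument over the $q$ queries. Define $H_0, H_1, \dots, H_q$ where $H_i$ answers the first $i$ queries as in a \no-instance and the remaining $q-i$ as in a \yes-instance; then $H_0$ and $H_q$ are the two distributions of interest, and by the triangle inequality it suffices to show $\totalvardist{H_{i-1}}{H_i} = \littleO{1/q}$ for each $i$. Since $s^{(1)}_i\sim (\D_1)_{A_i}$ in both cases, and $s^{(1)}_i,s^{(2)}_i$ are independent given the previous configuration and the choice of $\Lambda_i$, the problem reduces to bounding, in expectation over the history and $\Lambda_i$, the total variation between the configuration-update induced by $s^{(2)}_i \sim (\D_1)_{A_i}$ and by $s^{(2)}_i \sim (\D_2)_{A_i}$. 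The configuration only reveals in which atom $\atom\in\operatorname{At}(A_1,\dots,A_i)$ the new sample falls, together with collisions against previous samples; collisions contribute an extra $\bigO{q^2/(2rb\rho)} = \littleO{1/q}$ term using the lower bound on per-element probabilities in the smallest bucket, so the crux is to compare the induced distributions on atoms.

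For each atom $\atom \subseteq A_i$ with $\atom$-size $\alpha$-stable (as guaranteed by~\autoref{lem:banned}), a standard concentration argument on the random subset $\Lambda_i$ shows that for every bucket $B_j$, either $\abs{\atom \cap B_j} = 0$ with probability $1-\exp(-\bigOmega{\gamma^2\alpha})$, or $\abs{\atom \cap B_j} = (1\pm\gamma)\abs{\atom}\abs{B_j}/n$ with the same probability. Let $\Delta_\atom$ be the (deterministic, given $\Lambda$'s) number of buckets over which $\atom$ concentrates: then $\D_1(\atom) = (1\pm\gamma)\frac{\abs{\atom}\Delta_\atom}{2rn}$, and $\D_2(\atom)-\D_1(\atom) = \frac{\abs{\atom}}{2rn}\sum_{j\in S_\atom}(c_j - 1)$, where $S_\atom$ is the set of buckets $\atom$ hits but whose \bpair partner it misses, and the $(c_j-1)\in\{\pm\half\}$ are independent signs determined by the $\pi$'s. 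Note that within a bucket-pair fully hit or fully missed, the contributions of $\D_2$ and $\D_1$ cancel, which is exactly why the incomparability definition excludes those terms via $\Delta_j$.

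Applying this to the whole query set $A_i$ (viewed as the union of its atoms, or using the vector $(\newset)_A$), incomparability with respect to $b$ ensures that $\D_1(A_i)$ is a factor of $q^2$ away from $\frac{1}{2rb\rho^j}$ for every $j$, and by Hoeffding over the independent $(c_j-1)$'s one gets $\D_2(A_i) = (1\pm 1/q)\D_1(A_i)$ with probability $1-\littleO{1/q}$. Combined with a similar statement per atom---where the fluctuation in the numerator $\D_2(\atom)$ is dominated by $\D_2(A_i)$ thanks to the $\alpha$-stability ruling out intermediate regimes---the ratio $\D_2(\atom)/\D_2(A_i)$ matches $\D_1(\atom)/\D_1(A_i)$ up to an additive $\littleO{1/q}$, summed over the at most $2^q$ atoms. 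Summing over the $q$ hybrids and absorbing every ``bad'' deviation into the $\littleO{1}$ error budget, one concludes that the two leaf distributions are $\littleO{1}$-close in total variation.

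The main obstacle is not the small-query regime (where $\Lambda_i$ misses the support entirely and the oracle is conventional), but the large-query one: many bucket-pairs contribute simultaneously, and the $\pi$-induced fluctuations on $\D_2(A_i)$ and on each $\D_2(\atom)$ are correlated because they share the same signs $c_j$. The delicate bookkeeping here is precisely what the $(\alpha,q^2)$-incomparability condition in \autoref{def:incomparable} was designed to control, ensuring that both numerator and denominator deviations are of the same order and cancel in the ratio.
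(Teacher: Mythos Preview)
Your hybrid argument couples at the wrong level of granularity, and this is a genuine gap. You bound the step-$i$ total variation only on the \emph{configuration update} (which atom $s^{(2)}_i$ falls in, plus collisions), but the configuration is not a sufficient statistic for the future: the sample $s^{(2)}_i$ may later be placed in some $\Gamma_j$, and then the mass $\D_1(s^{(2)}_i)$---which depends on the \emph{bucket} containing $s^{(2)}_i$, not merely its atom---directly affects the distribution of subsequent samples and hence the rest of the configuration sequence. So bounding $\totalvardist{H_{i-1}}{H_i}$ requires coupling $s^{(2)}_i$ at a finer resolution than atoms. (Coupling the actual element is too fine: $(\D_1)_{A_i}$ and $(\D_2)_{A_i}$ differ by $\Omega(1)$ in elementwise TV whenever $A_i$ hits even one bucket-pair.)

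The paper's proof supplies exactly this missing ingredient: it couples the joint variable (atom, \emph{bucket-pair}, collision). The structural fact that makes this work---and that your Hoeffding-over-$\pi$ argument does not exploit---is that $\D_1$ and $\D_2$ assign \emph{identical} total mass $1/r$ to every bucket-pair $\Pi_j$, so the induced distribution over bucket-pairs of the next fresh sample is the same (up to the $(1\pm i\gamma)$ concentration error) regardless of $\pi$. No averaging over the signs is needed. To make this go through, the paper also maintains two auxiliary invariants you omit: that all previously seen samples lie in \emph{distinct} bucket-pairs (their $\ihb{i}$, needed so that the heaviest element of $\Gamma_i$ dominates deterministically), and that the mass in $\Lambda_i$ versus $\Gamma_i$ is lopsided by a factor $q^2$ (the ``Furthermore'' part of their $\iha{i}$, which is where incomparability is actually used---it compares $\Lambda_i$'s mass to the per-element mass of a \emph{single} previously seen point, not to bucket masses as you wrote). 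Your sketch invokes incomparability on $A_i$ as a whole, but $A_i=\Lambda_i\cup\Gamma_i$ and the definition only controls $\Lambda_i$; the $\Gamma_i$ side needs the distinct-bucket-pair invariant separately.
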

\begin{proof}
  This proof is by induction on $1\leq i\leq q$.
  We will have three inductive hypotheses, $\iha{t}, \ihb{t},$ and $\ihc{t}$.
  Assuming all three hold for all $t < i$, we prove $\iha{i}$.
  Additionally assuming $\iha{i}$, we prove $\ihb{i}$ and $\ihc{i}$.

  Roughly, the first inductive hypothesis states that the query sets behave similarly to as if we had picked a random set of that size.
    It also implies that whether or not we get an element we have seen before is ``obvious'' based on past observances and the size of the query we perform.
    The second states that we never observe two distinct elements from the same bucket-pair.
    The third states that the next sample is distributed similarly in either a $\yes$-instance or a $\no$-instance.
    Note that this distribution includes both features which our algorithm can observe (\ie, the atom which the sample belongs to and if it collides with a previously seen sample), as well as those which it can not (\ie, which bucket-pair the observed sample belongs to).
    It is necessary to show the latter, since the bucket-pair a sample belongs to may determine the outcome of future queries.

  More precisely, the three inductive hypotheses are as follows:
  \begin{itemize}
    \item $\iha{i}$:
      In either a $\yes$-instance or a $\no$-instance, the following occurs:
      For an atom $S$ in the partition generated by $A_1, \dots, A_i$, let $S' = S\ \setminus\ \{s_1^{(1)}, s_1^{(2)}, \dots, s_{i-1}^{(1)}, s_{i-1}^{(2)}\}$.
      For every such $S'$, let $\ell^{S'}$ be the largest index $\ell \in \{0, \dots, 2r\}$ such that $\frac{|S'|b\rho^\ell}{n} \leq \frac{1}{\alpha}$, or $0$ if no such $\ell$ exists.
      We claim that $\ell^{S'} \in \{0, \dots, 2r - \varphi - 2\} \cup \{2r\}$, and say $S'$ is small if $\ell^{S'} = 2r$ and large otherwise.
      Additionally:
      \begin{itemize}
        \item for $j \leq \ell^{S'}$, $\abs{S' \cap B_j } = 0$;
        \item for $j > \ell^{S'}$, $\abs{S' \cap B_j}$ lies in  $\left[1 - i\gamma, 1 + i\gamma\right] \frac{\abs{S'}b\rho^j}{n}$.
      \end{itemize}
      Furthermore, let $p_1$ and $p_2$ be the probability mass contained in $\Lambda_i$ and $\Gamma_i$, respectively.
      Then $\frac{p_1}{p_1 + p_2} \leq \bigO{\frac{1}{q^2}}$ or $\frac{p_2}{p_1 + p_2} \leq \bigO{\frac{1}{q^2}}$ (that is, either almost all the probability mass comes from elements which we have not yet observed, or almost all of it comes from previously seen ones).
 
    \item $\ihb{i}$: 
      No two elements from the set $\{s_1^{(1)}, s_1^{(2)}, \dots, s_i^{(1)}, s_i^{(2)}\}$ belong to the same bucket-pair.
    \item $\ihc{i}$: 
      Let $T^{\yes}_i$ be the random variable representing the atoms and bucket-pairs\footnote{If a sample $s_i^{(k)}$ does not belong to any bucket (if the corresponding $i$-th query did not intersect the support), it is marked in $T^{\yes}_i$ with a ``dummy label'' to indicate so.} containing $(s_i^{(1)}, s_i^{(2)})$, as well as which of the previous samples they intersect with, when the $i$-th query is performed on a $\yes$-instance, and define $T^{\no}_i$ similarly for $\no$-instances.
      Then $\totalvardist{T^{\yes}_i }{ T^{\vphantom{yes}\no}_i } \leq \bigO{1/q^2 + {1}/{\rho} + \gamma + {1}/{\varphi}} =\littleO{1}$.
  \end{itemize}
  We will show that $\iha{i}$ holds with probability $1 - \bigO{2^i\exp(-{2\gamma^2\alpha}/{3})}$ and $\ihb{i}$ holds with probability $1 - \bigO{{i}/{\varphi}}$.
  Let $T^{\yes}$ be the random variable representing the $q$-configuration and the bucket-pairs containing each of the observed samples in a $\yes$-instance, and define $T^{\no}$ similarly for a $\no$-instance.
  We note that this random variable determines which leaf of the decision tree we reach, which $\ihc{q}$ bounds.
  We can then take a union bound over all $i \in [q]$ to upper bound the probability that $\iha{i}$ and $\ihb{i}$ do not hold, and use $\ihc{i}$ and the coupling interpretation of total variation distance to upper bound the probability that $T^{\yes}_i$ and $T^{\no}_i$ ever differ.
  Any of these ``failure'' events happens with probability at most         $O({2^q\exp(-\frac{2\gamma^2\alpha}{3}) + \frac{q^2}{\varphi} + \frac{1}{q} + \frac{q}{\rho} + q\gamma + \frac{q}{\varphi}}) = \littleO{1}$ (from our choice of $\alpha,\gamma,\varphi$).
  This upper bounds the total variation distance between $T^{\yes}$ and $T^{\no}$, giving the desired result.

   We proceed with the inductive proofs of $\iha{i}$, $\ihb{i}$, and $\ihc{i}$, noting that the base cases hold trivially for all three of these statements.
   Throughout this proof, recall that $\Lambda_i$ is the set of unseen support elements which we query, and $\Gamma_i$ is the set of previously seen support elements which we query.
  \begin{lemma}\label{lem:induction:2}
    Assume that $\iha{t}, \ihb{t}, \ihc{t}$ hold for all $1 \leq t \leq i-1$.
    Then $\iha{i}$ holds with probability at least 
    $1 - \bigO{2^i\exp\left(-\frac{2\gamma^2\alpha}{3}\right)} = 1 - 2^{i-\bigOmega{q^2}}$.
  \end{lemma}
  \begin{proof}
    We start with the first part of $\iha{i}$ (the statement prior to ``Furthermore'').
    Let $S$ (and the corresponding $S^\prime$) be any atom as in $\iha{i}$.
    First, we note that $\ell^{S'} \in \{0, \dots, 2r - \varphi - 2\} \cup \{2r\}$ since we are conditioning on~\expref{Lemma}{lem:banned}: $|S'|$ is $\alpha$-stable and either large or small, which enforces this condition. 

\noindent Next, suppose $S'$ is contained in some other atom $T$ generated by $A_1,\dots, A_{i-1}$, and let $T' = T \setminus \{s_1^{(1)}, s_1^{(2)}, \dots, s_{i-1}^{(1)}, s_{i-1}^{(2)}\}$.
    Since $|S'| \leq |T'|$, this implies that $\ell^{T'} \leq \ell^{S'}$.
    We argue about $|T' \cap B_{j}|$ for three regimes of $j$:
    \begin{itemize}
      \item The first case is $j \leq \ell^{T'}$.
      By the inductive hypothesis, $\abs{T' \cap B_{j}} = 0$, so $\abs{S' \cap B_{j}} = 0$ with probability $1$.
      \item The next case is $\ell^{T'} < j \leq \ell^{S'}$.
        Recall from the definition of a core adaptive tester that $S'$ will be chosen uniformly at random from all subsets of $T'$ of appropriate size.
        By the inductive hypothesis, $$\frac{\abs{T' \cap B_j}}{\abs{T'}} \in \left[1 - (i-1)\gamma, 1 + (i-1)\gamma\right] \frac{b\rho^j}{n},$$
        and therefore 
        \[
        \expect{\abs{S' \cap B_j}} \in \left[1 - (i-1)\gamma, 1 + (i-1)\gamma\right]\frac{\abs{S'}b\rho^j}{n},\; \text{ implying } \expect{\abs{S' \cap B_j}} \leq \frac{2}{\alpha\rho^{\ell^{S'} - j}};
        \]
        where the inequality is by the definition of $\ell^{S'}$ and using the fact that $(i-1)\gamma \leq 1$.
        Using a Chernoff bound for hypergeometric random variables (Lemma~\ref{lem:chernoff-hyp}), and writing $\mu\eqdef \expect{\abs{S' \cap B_j}} $ for conciseness,
        \begin{align*}
          \probaOf{ \abs{S' \cap B_j} \geq 1 } &= \probaOf{  \abs{S' \cap B_j} \geq \left(1 + \frac{1-\mu}{\mu}\right)\mu } \\
                                                   &\leq \exp\!\left(-\frac{(1-\mu)^2}{3\mu}\right) \\
                                                   &\leq \exp\!\left(-\frac{1}{12}\alpha\rho^{\ell^{S'}-j}\right),
        \end{align*}
        where the second inequality holds because $\mu \leq 2/\alpha\rho^{\ell^{S'}-j}$ and $(1 - \mu)^2 \geq 1/2$ for $n$ sufficiently large.
      \item The final case is $j > \ell^{S'}$.
        As in the previous one, 
        \[
        \expect{\abs{S' \cap B_j}} \in \left[1 - (i-1)\gamma, 1 + (i-1)\gamma\right] \frac{\abs{S'}b\rho^j}{n},\; \text{ implying } \expect{\abs{S' \cap B_j}} \geq \frac{\alpha \rho^{j - \ell^{S'} - 1}}{2};
        \]
        where the inequality is by the definition of $\ell^{S'}$, $\alpha$-stability, and using the fact that $(i-1)\gamma \leq 1/2$.
        Using again a Chernoff bound for hypergeometric random variables (Lemma~\ref{lem:chernoff-hyp}),
        \begin{align*}
          \probaOf{ \abs{S' \cap B_j} - \expect{\abs{S' \cap B_j}} \geq \gamma \frac{\abs{S'}b\rho^j}{n} } 
          &\leq \probaOf{  \abs{S' \cap B_j} - \expect{\abs{S' \cap B_j}} \geq \gamma 2\expect{\abs{S' \cap B_j}} } \\
          &\leq 2\exp\!\left(-\frac{(2\gamma)^2 \expect{\abs{S' \cap B_j}}}{3}\right) \\
          &\leq 2\exp\!\left(-\frac{2}{3}\gamma^2\alpha\rho^{j - \ell^{S'} - 1}\right),
        \end{align*}
        where the first inequality comes from $2(1 - (i-1)\gamma) \geq 1$, the second is from Chernoff bound, and the third follows from $\expect{\abs{S' \cap B_j}} \geq \alpha\rho^{j - \ell^{S'} - 1}/2$.
    \end{itemize}
    Since we wish to prove the statement for all buckets $B_j$ simultaneously, we take a union bound over all $j$. Recall that we want to bound the probability that $S'$ satisfies two conditions, for $j \leq \ell^{S'}$, $\abs{S' \cap B_j } = 0$; and  for $j > \ell^{S'}$, $\abs{S' \cap B_j}$ lies in  $\left[1 - i\gamma, 1 + i\gamma\right] \frac{\abs{S'}b\rho^j}{n}$.
    Using bounds from all three of these regimes of $j$, and a union bound, the probability that $S'$ does not satisfy the conditions of $\iha{i}$ is at most
    \[
    \sum_{j \leq \ell^{T'}} 0 + \sum_{\ell^{T'} < j \leq \ell^{S'}} \exp\!\left(-\frac{1}{12}\alpha\rho^{\ell^{S'}-j}\right) + \sum_{ j > \ell^{S'}}2\exp\!\left(-\frac{2}{3}\gamma^2\alpha\rho^{j - \ell^{S'} - 1}\right).
    \]
    This probability is maximized when $\ell^{S'} = \ell^{T'} = 0$, in which case it is
    $$\sum_{ j = 1}^{2r}2\exp\!\left(-\frac{2}{3}\gamma^2\alpha\rho^{j -  1}\right) \leq \sum_{ j = 1}^{\infty}2\exp\!\left(-\frac{2}{3}\gamma^2\alpha\rho^{j -  1}\right) \leq 3\exp\!\left(-\frac{2}{3}\gamma^2\alpha\right).$$
    Taking a union bound over at most $2^i$ sets gives us the desired probability bound.\medskip

    Finally, we prove the remainder of $\iha{i}$ (the statement following ``Furthermore''); this will follow from the definition of incomparability (Definition~\ref{def:incomparable}).
    \begin{itemize}
      \item First, we focus on $\Gamma_i$. Suppose that $\Gamma_i$ contains at least one element with positive probability mass (if not, the statement trivially holds).
    Let $p_2'$ be the probability mass of the heaviest element in $\Gamma_i$.
    Since our inductive hypothesis implies that $\Gamma_i$ has no elements in the same bucket pair, the maximum possible value for $p_2$ is
    \begin{align*}
      p_2 &\leq p_2' + \frac{3p_2'}{\rho} + \frac{3p_2'}{\rho^3} + \dots \leq p_2' + \frac{3p_2'}{\rho}\sum_{k=0}^\infty \frac{1}{\rho^{2k}} 
          = \left(1+ \frac{3}{\rho}\frac{\rho^2}{\rho^2 -1}\right)p_2' \\
          &\leq (1 + o(1))p_2'
    \end{align*}
    Therefore, $p_2 \in [p_2', (1 + o(1))p_2']$.
    Supposing this heaviest element belongs to bucket $j$, we can say that $p_2 \in [\frac{1}{2}, (1 + o(1))\frac{3}{2}]\frac{1}{2rb\rho^j}$.
    \item Next, we focus on $\Lambda_i$.
    Consider some atom $A$, from which we selected $k_A$ elements which have not been previously observed: call the set of these elements $A'$.
    In the first part of this proof, we showed that for each bucket $B_k$, either $\abs{ A' \cap B_k } = 0$ or $\abs{A' \cap B_k} \in \left[1 - i\gamma, 1 + i\gamma\right] {\abs{A'}b\rho^k}/{n}$.
    In the latter case, noting that $i\gamma \leq \frac12$ and that the probability of an individual element in $B_k$ is within $[1,3]\frac{1}{4rb\rho^k}$, the probability mass contained by $|A' \cap B_k|$ belongs to $[1,9]\frac{\abs{A'}}{8rn}$.
    Recalling the definition of $\Delta_A$ as stated in~\expref{Definition}{def:incomparable}, as shown earlier in this proof, this non-empty intersection happens for exactly $\Delta_A$ buckets.
    Therefore, the total probability mass in $\Lambda_i$ is in the interval $\left[\frac{1}{4},\frac{9}{4}\right]\frac{1}{2rn}\sum_{\atom\in\operatorname{At}(A_1,\dots, A_i)} \newset \Delta_A$.
    \end{itemize}
    Recall that we are conditioning on~\expref{Lemma}{lem:banned} which states that the vector $(\newset)_{\atom\in\operatorname{At}(A_1,\dots, A_i)}$ is $(\alpha, q^2)$-incomparable with respect to $b$.
    Applying this definition to the bounds just obtained on the probability masses in $\Lambda_i$ and $\Gamma_i$ gives~\expref{Lemma}{lem:induction:2}.
 \end{proof}

  \begin{lemma} 
    Assume that $\iha{t}, \ihb{t}, \ihc{t}$ hold for all $1 \leq t \leq i-1$, and additionally $\iha{i}$ holds.
    Then $\ihb{i}$ holds with probability at least $1 - \bigO{\frac{i}{\varphi}}$.
  \end{lemma}
 
 \begin{proof}
   We focus on $s_i^{(1)}$.
   If $s_i^{(1)} \in \Gamma_i$, the conclusion is trivial, so suppose $s_i^{(1)} \in \Lambda_i$.
   From $\iha{i}$, no small atom intersects any of the buckets, so let us condition on the fact that $s_i^{(1)}$ belongs to some large atom $S$.
   Since we want $s_i^{(1)}$ to fall in a distinct bucket-pair from $2(i-1) + 1$ other samples, there are at most $2i-1$ bucket-pairs which $s_i^{(1)}$ should not land in.
   Using $\iha{i}$, the maximum probability mass contained in the intersection of these bucket-pairs and $S$ is $(1+ i\gamma)(2i-1)|S|/rn$.
   Similarly, using the definition of a large atom, the minimum probability mass contained in $S$ is $(1 - i\gamma)\varphi|S|/rn$.
   Taking the ratio of these two terms gives an upper bound on the probability of breaking this invariant, conditioned on landing in $S$, as $\bigO{i/\varphi}$, where we note that $\frac{1 + i\gamma}{1 - i\gamma} = \bigO{1}$.
   Since the choice of which large atom was arbitrary, we can remove the conditioning.
   Taking a union bound for $s_i^{(1)}$ and $s_i^{(2)}$ gives the result.
 \end{proof}

 \begin{lemma}\label{lemma:inductioni} 
   Assume that $\iha{t}, \ihb{t}, \ihc{t}$ hold for all $1 \leq t \leq i-1$, and additionally $\iha{i}$ holds.
   Then $\ihc{i}$ holds.
 \end{lemma}
 \begin{proof}
   We fix some setting of the intersection history, \ie, the configuration and the bucket-pairs the past elements belong to, and show that the results of the next query will behave similarly, whether the instance is a $\yes$-instance or a $\no$-instance.
   We note that, since we are assuming the inductive hypotheses hold, certain settings which violate these hypotheses are not allowed.
   We also note that $s_i^{(1)}$ is distributed identically in both instances, so we focus on $s_i \eqdef s_i^{(2)}$ for the remainder of this proof.

   \noindent First, we condition that, based on the setting of the past history, $s_i$ will either come from $\Lambda_i$ or $\Gamma_i$ -- this event happens with probability $1 - \bigO{1/q^2}$.
   \begin{proposition}
     In either a $\yes$-instance or a $\no$-instance, $s_i$ will either come from $\Lambda_i$ with probability $1 - \bigO{\frac{1}{q^2}}$, or $\Gamma_i$ with probability $1 - \bigO{\frac{1}{q^2}}$, where the choice of which one is deterministic based on the fixed configuration and choice for the bucket-pairs of previously seen elements.
   \end{proposition}
   \begin{proof}
     This is simply a rephrasing of the portion of $\iha{i}$ following ``Furthermore.''
   \end{proof}
   We now try to bound the total variation distance between $T^{\yes}_i$ and $T^{\no}_i$ conditioning on this event.
   In the case when it does not hold, we trivially bound the total variation distance by $1$, incurring a cost of $\bigO{1/q^2}$ to the total variation distance between the unconditioned variables.
   Since our target was for this quantity was $\bigO{1/q^2 + {1}/{\rho} + \gamma + {1}/{\varphi}}$, it remains to show, in the conditioned space, that the total variation distance in either case is at most $\bigO{{1}/{\rho} + \gamma + {1}/{\varphi}} = O(1/q^{5/2})$. 
   We break this into two cases, the first being when $s$ comes from $\Gamma_i$.
   In this case, we incur a cost in total variation distance which is $\bigO{{1}/{\rho}}$:
   \begin{proposition}
     In either a $\yes$-instance or a $\no$-instance, condition that $s_i$ comes from $\Gamma_i$. Then one of the following holds:
     \begin{itemize}
       \item $|\Gamma_i \cap B_j| = 0$ for all $j \in [2r]$, in which case $s_i$ is distributed uniformly at random from the elements of $\Gamma_i$;
       \item or $|\Gamma_i \cap B_j| \neq 0$ for some $j \in [2r]$, in which case $s_i$ will be equal to some $s \in \Gamma_i$ with probability $1 - \bigO{1/\rho}$,
         where the choice of $s$ is deterministic based on the fixed configuration and choice for the bucket-pairs of previously seen elements.
     \end{itemize}
   \end{proposition}
   \begin{proof}
     The former case follows from the definition of the sampling model.      For the latter case, let $p$ be the probability mass of the heaviest element in $\Gamma_i$.
     Since our inductive hypothesis implies that $\Gamma_i$ has no elements in the same bucket-pair, the maximum possible value for the rest of the elements is 
     \[
     \frac{3p}{\rho} + \frac{3p}{\rho^3} + \frac{3p}{\rho^5} + \dots \leq \frac{3p}{\rho} \sum_{k=0}^\infty \frac{1}{\rho^{2k}} = \frac{3p}{\rho}\frac{\rho^2}{\rho^2 -1} = \bigO{\frac{p}{\rho}}.
     \]
     Since the ratio of this value and $p$ is $\bigO{1/\rho}$, with probability $1 - \bigO{1/\rho}$ the sample returned is the heaviest element in $\Gamma_i$.
   \end{proof}

   \noindent Finally, we examine the case when $s$ comes from $\Lambda_i$:
   \begin{proposition}\label{prop:s:from:lambdai}
     Condition that $s_i$ comes from $\Lambda_i$.
     Then either:
     \begin{itemize}
       \item $|\Lambda_i \cap B_j| = 0$ for all $j \in [2r]$, in which case $\dtv(T_i^{\yes},T_i^{\vphantom{\yes}\no}) = 0$;
       \item or $|\Lambda_i \cap B_j| \neq 0$ for some $j \in [2r]$, in which case $\dtv(T_i^{\yes},T_i^{\vphantom{\yes}\no}) \leq \bigO{\gamma + \frac{1}{\varphi}} = \bigO{\frac{1}{q^{5/2}}}$
     \end{itemize}
   \end{proposition}
   \begin{proof}
     The former case follows from the definition of the sampling model -- since $\Lambda_i$ does not intersect any of the buckets, the sample will be labeled as such.
     Furthermore, the sample returned will be drawn uniformly at random from $\Lambda_i$, and the probability of each atom will be proportional to the cardinality of its intersection with $\Lambda_i$, in both the $\yes$ and the $\no$-instances.

     We next turn to the latter case. Let $\mathcal{X}$ be the event that, if the intersection of $\Lambda_i$ and some atom $A$ has a non-empty intersection with an odd number of buckets, then $s_i$ does not come from the unpaired bucket.
     Note that $\iha{i}$ and the definition of a large atom imply that an unpaired bucket can only occur if the atom intersects at least $\varphi$ bucket-pairs: conditioned on the sample coming from a particular atom, the probability that it comes from the unpaired bucket is $\bigO{{1}/{\varphi}}$.
     Since the choice of $A$ was arbitrary, we may remove the conditioning, and note that $\Pr(\mathcal{X}) = 1 - \bigO{{1}/{\varphi}}$.\medskip

     \noindent Since
     \begin{align}
       \dtv(T_i^{\yes},T_i^{\no}) 
          &\leq \dtv(T_i^{\yes},T_i^{\vphantom{\yes}\no} \mid \mathcal{X})\proba(\mathcal{X}) 
              + \dtv(T_i^{\yes},T_i^{\vphantom{\yes}\no} \mid\mathcal{\bar X})\proba(\mathcal{\bar X}) \notag\\
          &\leq \dtv(T_i^{\yes},T_i^{\vphantom{\yes}\no} \mid \mathcal{X})+ \bigO{1/\varphi},
     \end{align}
     it remains to show that $\dtv(T_i^{\yes},T_i^{\vphantom{\yes}\no} \mid \mathcal{X}) \leq \bigO{\gamma}$.\medskip

     First, we focus on the distribution over atoms, conditioned on $\mathcal{X}$.
     Let $N^{A}$ be the number of bucket-pairs with which $A$ intersects both buckets, \ie, conditioned on $\mathcal{X}$, the sample could come from $2N^A$ buckets,
     and let $N \eqdef \sum_{\atom\in\operatorname{At}(A_1,\dots, A_i)} N^{A}$.
     By $\iha{i}$, the maximum amount of probability mass that can be assigned to atom $A$ is $\frac{(1+\gamma){|S|}N^A/{rn}}{(1-\gamma){|S|}N/{rn}}$, and the minimum is $\frac{(1-\gamma){|S|}N^A/{rn}}{(1+\gamma){|S|}N/{rn}}$, so the total variation distance in the distribution incurred by this atom is at most $O({\gamma {N^A}/N})$.
     Summing over all atoms, we get the desired result of $O(\gamma)$.\medskip

     Finally, we bound the distance on the distribution over bucket-pairs, again conditioned on $\mathcal{X}$.
     By $\iha{i}$ only large atoms will contain non-zero probability mass, so condition on the sample coming from some large atom $A$.
     Let $N^A$ be the number of bucket-pairs with which $A$ intersects both buckets, \ie, conditioned on $\mathcal{X}$, the sample could come from $2N^A$ buckets.
     Using $\iha{i}$, the maximum amount of probability mass that can be assigned to any intersecting bucket-pair is ${(1+\gamma)\frac{|A|}{rn}}({(1-\gamma)\frac{|A|}{rn}N^A})^{-1}$, and the minimum is ${(1-\gamma)\frac{|A|}{rn}}({(1+\gamma)\frac{|A|}{rn}N^A})^{-1}$, so the total variation distance in the distribution incurred by this bucket-pair is at most $O({\gamma /N^A})$.
     Summing this difference over all $N^A$ bucket-pairs, we get $\frac{2\gamma}{1 - \gamma^2} = O(\gamma)$.
     Since the choice of large atom $A$ was arbitrary, we can remove the conditioning on the choice of atom.
     The statement follows by applying the union bound on the distribution over bucket-pairs and the distribution over atoms. This concludes the proof of~\expref{Proposition}{prop:s:from:lambdai}.
   \end{proof}
   We note that in both cases, the cost in total variation distance which is incurred is $O({\frac{1}{\rho} + \gamma + \frac{1}{\varphi}})$, which implies $\ihc{i}$~--~proving~\expref{Lemma}{lemma:inductioni} .
   \end{proof}
\noindent This concludes the proof of~\expref{Lemma}{lem:induction}.
  \end{proof}

\noindent With~\expref{Lemma}{lem:induction} in hand, the proof of the main theorem is straightforward.
  \begin{proofof}{\expref{Theorem}{theo:main:lb:equiv}}
    Conditioned on~\expref{Lemma}{lem:banned},~\expref{Lemma}{lem:induction} implies that the distribution over the leaves in a $\yes$-instance vs. a $\no$-instance is $o(1)$.
    Since an algorithm's choice to accept or reject depends deterministically on which leaf is reached, this bounds the difference between the conditional probability of reaching a leaf which accepts.
    Since~\expref{Lemma}{lem:banned} occurs with probability $1 - o(1)$, the difference between the unconditional probabilities is also $o(1)$.
  \end{proofof}

\section{Lower Bounds for Non-Adaptive Algorithms}\label{sec:lb-uniform}
In this section, we prove our lower bounds for non-adaptive support-size estimation and uniformity testing, restated here.
  \thmsuppnalb*
  \thmunifnalb*
\noindent These two theorems follow from the same argument, which we outline below before turning to the proof itself. (Note that we will, in this section, establish~\expref{Theorem}{theo:uniform:na:lb} for \emph{constant} $\eps$, \ie, $\eps=1/4$; before explaining in~\expref{Section}{ssec:lb:unif:general} how to derive the general statement as a corollary.)

\paragraph{Structure of the proof.} By Yao's Minimax Principle, we consider deterministic tests and study their performance over random distributions, chosen to be uniform over a random subset of carefully picked size. The proof of~\expref{Theorem}{theo:support:na:lb:beta} then proceeds in 3 steps: in~\expref{Lemma}{lem:wishful}, we first argue that all queries made by the deterministic tester will (with high probability over the choice of the support size $s$) behave very ``nicely'' with regard to $s$, \ie, not be concentrated around it. Then, we condition on this to bound the total variation distance between the sequence of samples obtained in the two cases we ``oppose,'' a random distribution from a family $\mathcal{\D}_1$ and the corresponding one from a family $\mathcal{\D}_2$. In~\expref{Lemma}{lem:supp:na:queries:small} we show that the part of total variation distance due to samples from the small queries is zero, except with probability $o(1)$ over the choice of $s$. Similarly,~\expref{Lemma}{lem:supp:na:queries:small} allows us to say (comparing both cases to a third ``reference'' case, a honest-to-goodness uniform distribution over the whole domain, and applying a triangle inequality) that the remaining part of the total variation distance due  to samples from the big queries is zero as well, except again with probability $o(1)$. Combining these three lets us to conclude by properties of the total variation distance, as (since the queries are non-adaptive) the distribution over the sequence of samples is a product distribution. (Moreover, applying~\expref{Lemma}{lem:supp:na:queries:small} as a stand-alone enables us, with little additional work, to obtain~\expref{Theorem}{theo:uniform:na:lb}, as our argument in particular implies distributions from $\mathcal{\D}_1$ are hard to distinguish from uniform.) \medskip

\paragraph{The families $\mathcal{\D}_1$ and $\mathcal{\D}_2$.}
Fix $\gamma \geq \sqrt{2}$ as in~\expref{Theorem}{theo:support:na:lb:beta}; writing $\beta\eqdef \gamma^2$, we define the set
\begin{equation}
  \mathcal{S} \eqdef \setOfSuchThat{ \beta^k n^{1/4} }{ 0\leq k \leq \frac{\log n}{2\log\beta}} =  \{ n^{1/4}, \beta n^{1/4}, \beta^2 n^{1/4},\dots, , n^{3/4} \}
\end{equation}
A \no-instance $(\D_1,\D_2)\in\mathcal{\D}_1\times\mathcal{\D}_2$ is then obtained by the following process:
\begin{itemize}
\item Draw $s$ uniformly at random from $\mathcal{S}$.
\item Pick a uniformly random set $S_1\subseteq[n]$ of size $s$, and set $\D_1$ to be uniform on $S_1$. 
\item Pick a uniformly random set $S_2\subseteq[n]$ of size $\beta s$, and set $\D_2$ to be uniform on $S_2$. 
\end{itemize}
(Similarly, a \yes-instance is obtained by first drawing a \no-instance $(\D_1,\D_2)$, and discarding $\D_2$ to keep only  $(\D_1,\D_1)\in\mathcal{\D}_1\times\mathcal{\D}_1$.)

We will argue that no algorithm can distinguish with high probability between the cases $(\D_1,\D_2)\sim \mathcal{\D}_1\times\mathcal{\D}_2$ and $(\D_1,\D_2)\sim \mathcal{\D}_1\times\mathcal{\D}_1$, by showing that in both cases $\D_1$ and $\D_2$ generate transcripts indistinguishable from those the \emph{uniform} distribution $\uniformOn{[n]}$ would yield. This will imply~\expref{Theorem}{theo:support:na:lb:beta}, as any algorithm to estimate the support within a multiplicative $\gamma$ would imply a distinguisher between instances of the form $(\D_1,\D_1)$ and $(\D_1,\D_2)$ (indeed, the support sizes of $\D_1$ and $\D_2$ differ by a factor $\beta = \gamma^2$). (As for~\expref{Theorem}{theo:uniform:na:lb}, observe that any distribution $\D_1\in\mathcal{\D}_1$ has constant distance from the uniform distribution on $[n]$, so that a uniformity tester must be able to tell $\D_1$ apart from $\uniformOn{[n]}$.)

\paragraph{Small and big query sets.}
Let \Tester be any deterministic non-adaptive algorithm making $q_\Tester\leq q=\frac{1}{80000}\frac{\log n}{\log^2 \beta}$ queries. Without loss of generality, we can assume \Tester makes exactly $q$ queries, 
and denote them by $A_1, \dots, A_q\subseteq[n]$. Moreover, we let
$a_i=\abs{A_i}$, and (again without loss of generality) write
$a_1\geq \dots\geq a_q$. 

As a preliminary observation, note that for any $A\subset[n]$ and $0\leq s\leq n$ we have
\[
\shortexpect_S \abs{S\cap A} = \frac{\abs{A}s}{n}
\]
where the expectation is over a uniform choice of $S$ among all $\binom{n}{s}$ subsets of size $s$. This observation will lead us to divide the query sets $A_i$ in two groups, depending on the expected size of their intersection with the (random) support.

With this in mind, the following definition will be crucial to our proof. Intuitively, it captures the distribution of \emph{sizes of intersection} of various query sets with the randomly chosen set $S$. 
\begin{definition}
Let $s \geq 1$, and $\mathcal{A}=\{a_1, \dots, a_q\}$ be any set of $q$ integers. For a real number $t>0$, define
\begin{equation}\label{def:thitpoints}
  C_t(s)\eqdef \abs{ \setOfSuchThat{ i \in [q] }{ \frac{a_i s}{n}\in\left({\beta^{-t}}, \beta^t\right) } }
\end{equation}
to be the number of \emph{$t$-hit points of $\mathcal{A}$ (for $s$)}. 
\end{definition}
\noindent The next result will be crucial to prove our lower bounds: it roughly states that if we consider
the $a_i$  
and scale them by the random quantity $s/n$, then the distribution of the random variable generated has an exponential tail with respect to $t$. 
\begin{lemma}[Hitting Lemma]\label{lem:wishful}
  Fix $\mathcal{A}$ as in the previous definition. If $s$ is drawn uniformly at random from $\mathcal{S}$, then with probability at least $99/100$.
  \begin{align}\label{eqn:ct}
    \sup_{t>0}\frac{C_t(s)}{t}<\frac{2}{100}.
  \end{align}
\end{lemma}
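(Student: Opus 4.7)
The plan is to recast the supremum as a condition on the sorted distances from a random integer, and then to control the bad probability by an averaging argument. Writing $s = \beta^{k}n^{1/4}$ with $k\in\{0,1,\ldots,K\}$ for $K = \log n/(2\log\beta)$, and setting $c_i \eqdef (3\log n)/(4\log\beta) - (\log\alpha_i)/\log\beta$, the condition $\alpha_i s/n \in (\beta^{-t},\beta^{t})$ becomes exactly $|c_i - k| < t$. Thus $\mathcal{S}$ is identified with the $K+1$ integers $\{0,\ldots,K\}$ under the uniform distribution, and $C_t(s) = |\{i : |c_i - k| < t\}|$ counts how many of the $c_i$'s lie within distance $t$ from the uniformly random integer $k$.

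Since $C_t(s)$ is a non-decreasing step function of $t$ with unit jumps at the values $t_i(s):=|c_i-k|$, the supremum $\sup_{t>0}C_t(s)/t$ is attained just after a jump and equals $\max_{m\in[q]} m/t_{(m)}(s)$, where $t_{(1)}(s)\leq\ldots\leq t_{(q)}(s)$ is the sorted sequence. Hence \eqref{eqn:ct} is equivalent to requiring $t_{(m)}(s) > 50m$ for every $m\in[q]$: for every scale $m$, fewer than $m$ of the $c_i$'s lie in the integer interval $(k-50m, k+50m)$. For each pair $(i,m)$ this interval contains at most $100m+1$ integers of $\{0,\ldots,K\}$, so double counting yields $\shortexpect_k[N_m(k)] \leq q(100m+1)/(K+1)$, where $N_m(k) := |\{i : |c_i-k|\leq 50m\}|$. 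Plugging in $q = (\log n)/(100\log\beta)$ and $|\mathcal{S}| = K+1\geq (\log n)/(2\log\beta)$, so $q/|\mathcal{S}|\leq 1/50$, this expectation is $O(m)$.

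The main obstacle will be to combine the per-$m$ tail bounds into a single estimate on $\Pr_k[\exists m:N_m(k)\geq m]$: naively union-bounding the Markov estimates is too lossy, since each per-$m$ probability is only a small constant. My plan is to first reduce to dyadic scales $m=2^j$ --- within a dyadic window of $t$ the ratio $C_t(s)/t$ varies only by a constant factor, so controlling $N_{2^j}(k)$ at the dyadic endpoints suffices to control the whole supremum --- and then to exploit the strong spatial correlation among the events $\{N_{2^j}(k)\geq 2^j\}$ via a Vitali-style covering argument on the line (each bad $k$ lies in a short interval around a cluster of $c_i$'s, and overlapping clusters can be charged to a single representative). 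This covering step is where I expect the tight balance to appear: the constants $2/100$ and $99/100$ are determined by the ratio $q/|\mathcal{S}|=1/50$ together with the multiplicative loss of the cover, and I anticipate this balancing to be the most delicate part of the proof.
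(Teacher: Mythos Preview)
Your logarithmic reformulation and the rewriting of $\sup_{t>0}C_t(s)/t$ in terms of order statistics are correct and coincide with the paper's setup. The approaches diverge at the main step.

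The paper uses neither a dyadic reduction nor a Vitali covering. It splits the count into one-sided pieces: writing $\ell_j^x$ (resp.\ $r_j^x$) for the distance from $x$ to the $j$-th of the $a_i$'s on its left (resp.\ right), it proves by a short induction on $i$ that the set $\{x:\exists j\le i,\ \ell_j^x/j<c\}$ has Lebesgue measure at most $ci$. The inductive step is elementary: adding the $(i{+}1)$-st point enlarges this bad set by a single interval of length at most $c$, via a two-case analysis depending on whether $a_{i+1}$ already lies in the current bad set. This is, in effect, the rising-sun proof of the sharp weak-type $(1,1)$ inequality for the one-sided Hardy--Littlewood maximal function, with constant exactly $1$ per side and $2$ after the union.

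Your Vitali route is the standard argument for the two-sided maximal inequality, but it costs a factor of $3$ (and the dyadic step another factor of $2$). You correctly flag that the numerical balance here is dictated by $q/|\mathcal{S}|\approx 1/50$ with essentially no slack; that Vitali loss is therefore not cosmetic but decisive, and the ``delicate balancing'' you anticipate will not close with those constants. The cleanest fix is exactly what the paper does: pass to one-sided maximal functions and use the sharp constant-$1$ bound there, rather than attacking the two-sided quantity directly. Separately, the dyadic discretization is unnecessary: $C_t(s)$ is a step function with at most $q$ jumps, so the supremum is already a finite maximum over the order statistics and no scale reduction is needed.
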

\begin{proof}
Without loss of generality, assume that the $a_i$ are in decreasing order.
We will work in the logarithmic domain: a number $a_i$ contributes to
$C_t(s)$ if and only if $\log s \in[\log
(n/a_i)- t\log \beta,\log (n/a_i)+ t\log \beta]$.
Indeed, we can restate the lemma in an additive form. Let 
$\mathcal{A}=\{\alpha_1, \dots, \alpha_m\}$ be any set of numbers in $[0,\log n]$. 
These are defined as transformations of $a_i$'s: $\alpha_i \triangleq \log(n/a_i)$.
In the additive restating, $x$ will play the role of $\log s$, or equivalently, $s = 2^x$.
For a point $x\in[0,\log n]$, let $\ell_j^x$ and $r_j^x$ denote the distance of $x$
from the $j$th point to its left and right, respectively, from the set
$\mathcal{A}$. 
More precisely, if $\alpha_\gamma \leq x \leq \alpha_{\gamma+1}$, $\ell_j^x \triangleq x - \alpha_{\gamma+1-j}$.
If we consider only points to the left of $x$, $\ell_j^x/\log \beta$ is the least value of $t$ such that $C_{t}(2^x) = j$.
Therefore, if $t^x_j \triangleq \frac{1}{\log \beta} \min \left\{\ell_j^x, r_j^x\right\}$, then we are guaranteed that $j \leq C_{t^x_j}(2^x) \leq 2j$.

Observe that $C_t(2^x)$ is a piecewise-constant function which is monotone non-decreasing in $t$.
Therefore, the supremum of $\frac{C_t(2^x)}{t}$ is attained at one of these discontinuities:
\[
\sup_{t > 0} \frac{C_t(2^x)}{t} = \max_{\{t\ :\ \forall \delta > 0, C_{t-\delta}(2^x) < C_t(2^x)\}} \frac{C_t(2^x)}{t}.
\] 
We can in turn upper bound this by looking at the set of all $t^x_j$.
Note that this may ignore some discontinuous points: for instance, suppose that $\ell_j^x < r_j^x$, then $r_j^x/\log \beta$ will not be considered.
However, we note that in this case, $C_{r_j^x/\log \beta}(2^x) \leq 2C_{\ell_j^x/\log \beta}(2^x) = 2C_{t^x_j}(2^x)$:
\[
\frac{C_{r_j^x/\log \beta}(2^x)}{r_j^x/\log \beta} \leq \frac{C_{r_j^x/\log \beta}(2^x)}{\ell_j^x/\log \beta} \leq \frac{2C_{\ell_j^x/\log \beta}(2^x)}{\ell_j^x/\log \beta} = \frac{2C_{t_j^x}(2^x)}{t_j^x}.
\]
Therefore, 
\[
\sup_{t > 0} \frac{C_t(2^x)}{t} \leq \max_{\{t_j^x\ :\ j \in [q]\}}\frac{2C_{t_j^x}(2^x)}{t_j^x} \leq  \max_{\{t_j^x\ :\ j \in [q]\}}\frac{4j}{t_j^x} = \max_{\{t_j^x\ :\ j \in [q]\}}\frac{4j\log \beta}{\min\{\ell_j^x, r_j^x\}}
\]
We would like to upper bound this term by $2/100$.
Equivalently, we satisfy the lemma conditions if
\[
\min_j \left\{\frac{t_j^x}{j}\right\} \geq 200 \log \beta.
\]

For a constant $c>0$, let $S_c$ be the set of all points $x$ (where recall $s = 2^x$ is selected according to the distribution $\mathcal{S}$) such that violate this inequality for $c$:
\[
\min_j \left\{\frac{t_j^x}{j}\right\} \leq c.
\]
We would like to upper bound the probability that a randomly selected $x$ violates this inequality for $c = 200 \log \beta$ by $1/100$.
Equivalently, since $x$ is selected uniformly at random from $\frac{\log n}{2 \log \beta}$ different values, we would like to upper bound the size of $S_{200 \log \beta}$ by $\log n/200 \log \beta$.
We do this with the following claim.
\begin{claim}\label{claim:hitting:again}
$|S_c|\leq 2cq$.
\end{claim} 
Substituting in $c = 200 \log \beta$ and $q = \log n / 80000 \log^2 \beta$ will give the desired result.

\begin{proofof}{\expref{Claim}{claim:hitting:again}}
We consider the set of points in $S_{c,\ell}\subset S_c$ that satisfy
${\ell_j^x}/j<c$ for some $j$, and show that their measure is at
most $cq$. An identical bound holds for the set of points of $S_c$ for
which ${r_j^x}/j<c$. Let $S_{c,\ell}^{i}\subset S_{c,\ell}$ be
the set of points in $S_{c,\ell}$ that satisfy $\min_j \left\{\frac{t_j^x}{j}\right\}<c$ with respect to
the set $\{\alpha_1,\dots, \alpha_i\}$. We will show by induction that 
$
|S_{c,\ell}^{i}|<ci. 
$\smallskip

For the first point $\alpha_1$, the set $S_{c,\ell}^1=[\alpha_1,
\alpha_1+c]$. Suppose by induction that $|S_{c,\ell}^{i}|<ci.$ Let $x_i$ be
the right-most point in the set $S_{c,\ell}^{i}$. Then it is clear
that $x_i>\alpha_i$, in fact $x_i\geq \alpha_i+c$. Furthermore, either $x_i= \log n$,
or ${\ell_j^{x_i}}/j=c$ for some $j$. Moreover, we claim that $[\alpha_i,x_i]\in
S_{c,\ell}^i$. Indeed, for the same $j$ that ${\ell_j^{x_i}}/j<c$,
all points in $[\alpha_i,x_i]$ satisfy the condition. If $x_i=\log n$, then the
result holds trivially. We therefore consider the  point $\alpha_{i+1}$ and
prove the inductive step for $x_{i}<\log n$.  There are two cases:
\begin{description}
  \item[If {${\alpha_{i+1}\geq x_i}$:}] In this case, $S_{c,\ell}^{i+1}=S_{c,\ell}^i\cup
[\alpha_{i+1},x_{i+1}]$. We have to show that $x_{i+1}\leq \alpha_{i+1}+c$.  
Suppose to the contrary that $x_{i+1}>\alpha_{i+1}+c\geq x_i+c$. Then there is a
point $\alpha_h$ for $h\leq i$, such that 
$
\frac{x_{i+1}-\alpha_{h}}{i+2-h}<c, 
$
and then 
$
\frac{\alpha_{i+1}+c-\alpha_{h}}{i+2-h}<c, 
$
so that 
\[
\frac{\alpha_{i+1}-\alpha_{h}}{i+1-h}<c, 
\]
however, this implies that $\alpha_{i+1}\in S_{c,\ell}^i$, contradicting
the assumption of this case. 

\item[If {$\alpha_{i+1}< x_i$:}]  In this case, $S_{c,\ell}^{i+1}=S_{c,\ell}^i\cup
[x_{i},x_{i+1}]$. We have to show that $x_{i+1}\leq x_{i}+c$. Suppose
on the contrary that $x_{i+1}> x_{i}+c>\alpha_{i+1}+c$. Let $h$ be the
index such that 
$
\frac{x_{i+1}-\alpha_{h}}{i+2-h}<c, 
$
and therefore, 
$
\frac{x_{i}+c-\alpha_{h}}{i+2-h}<c, 
$
implying that 
\[
\frac{x_{i}-\alpha_{h}}{i+1-h}<c, 
\]
contradicting that $x_i$ is the rightmost point of $S_{c,\ell}^i$.
\end{description}  
\end{proofof}
This concludes the proof of the hitting lemma.
\end{proof}

We proceed to show how to use this lemma to bound the contribution of various types of queries to the distinguishability of $\D_1$ and $\D_2$.
In particular, we will apply~\expref{Lemma}{lem:wishful} to the set of query sizes $\{a_1,\dots,a_q\}$.

\noindent Recall that 
the $a_i$  
are non-increasing. If $a_{q^\prime}s/n\leq 1$ let $q^\prime\eqdef q+1$, otherwise define
$q^\prime$ as the largest integer such that $a_{q^\prime}s/n>1$. We partition the queries made by \Tester in two: $A_{1},\dots,A_{q^\prime}$ are said to be \emph{big}, while  $A_{q^\prime+1},\dots,A_{q}$ are \emph{small queries}.

\begin{lemma}\label{lem:supp:na:queries:small}
With probability at least $1-2^{-10}$, a random distribution from
$\mathcal{\D}_1$ or from $\mathcal{\D}_2$ does 
not intersect with any small query. \end{lemma}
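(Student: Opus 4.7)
The plan is to leverage the Hitting Lemma (\autoref{lem:wishful}) directly via a simple union bound. First I would condition on the event from \autoref{lem:wishful}, which holds with probability at least $99/100$ over the random choice of $s$ and guarantees that $C_t(s) < \frac{2t}{100}$ for every $t > 0$. The crucial observation is that $C_t(s)$ is \emph{integer-valued}: for any $t \leq 50$, the bound $C_t(s) < 1$ forces $C_t(s) = 0$. Thus, under this event, no query $A_i$ can have $\frac{a_i s}{n} \in (\beta^{-50}, \beta^{50})$, so every small query (being by definition in the low regime, i.e.\ $\frac{a_i s}{n} \leq 1$) must in fact satisfy the far stronger $\frac{a_i s}{n} \leq \beta^{-50}$.

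From here, the result is essentially a two-line union bound. For a random distribution from $\mathcal{\D}_1$ (resp.\ $\mathcal{\D}_2$), the underlying support is a uniformly random subset $S \subseteq [n]$ of size $s$ (resp.\ $\beta s$), and for any fixed small query $A_i$ each element of $A_i$ lies in $S$ with probability $|S|/n$. A union bound over the elements of $A_i$ then yields
\[
\probaOf{S \cap A_i \neq \emptyset} \;\leq\; \frac{|S|\cdot a_i}{n} \;\leq\; \beta\cdot\frac{a_i s}{n} \;\leq\; \beta^{-49}.
\]
Taking a second union bound over the at most $q \leq \frac{\log n}{100\log \beta}$ small queries, the probability that $S$ intersects any of them is at most $q\beta^{-49}$, which (using $\beta \geq 2$ and $q$ polylogarithmic in $n$) is comfortably bounded by $2^{-10}$ for any reasonable $n$.

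I do not anticipate any genuine obstacle here: the Hitting Lemma is doing all of the heavy lifting by producing an enormous multiplicative gap between what ``small'' and ``big'' actually mean under a typical draw of $s$, and the conclusion then falls out of two nested union bounds. The only minor subtlety is that the size of $S$ differs by a factor of $\beta$ between $\mathcal{\D}_1$ and $\mathcal{\D}_2$, but that single factor is completely swamped by the gap $\beta^{-50}$, so the very same argument handles both families simultaneously.
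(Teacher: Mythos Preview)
Your core observation---that integrality of $C_t(s)$ forces $C_{50}(s)=0$ under the Hitting Lemma event, and hence every small query satisfies $\frac{a_i s}{n}\le\beta^{-50}$---is correct and is in fact the paper's own argument specialized to the first small query. The gap is in your final step: union-bounding over all small queries yields $q\beta^{-49}$, and since $q=\frac{\log n}{100\log\beta}$ this quantity is \emph{not} uniformly bounded in $n$. Your hedge ``for any reasonable $n$'' concedes the point; for $n$ large enough (admittedly enormous, but the lemma carries no such restriction) the bound exceeds $2^{-10}$ and the argument as written does not establish the statement.

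The fix, which is exactly what the paper does, is to invoke the Hitting Lemma at every scale $t=50j$ rather than only at $t=50$. Because the $a_i$ are sorted decreasingly, if the $j$-th small query satisfied $\frac{a_{q'+j}s}{n}>\beta^{-50j}$ then all of $A_{q'+1},\dots,A_{q'+j}$ would have $\frac{a_{q'+i}s}{n}\in(\beta^{-50j},1)\subset(\beta^{-50j},\beta^{50j})$, giving $C_{50j}(s)\ge j$ and contradicting $C_{50j}(s)<\tfrac{2\cdot 50j}{100}=j$. Hence $\frac{a_{q'+j}s}{n}\le\beta^{-50j}$ for every $j\ge 1$, the expected total intersection with the support is bounded by the geometric series $\sum_{j\ge1}\beta^{-50j}=O(\beta^{-50})$ \emph{uniformly in $n$}, and Markov's inequality finishes (with an extra factor of $\beta$ for $\mathcal{D}_2$, harmlessly absorbed). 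So your plan is one step short of the paper's: you need the geometric decay across the ordered small queries, not merely a common bound on each of them.
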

\begin{proof}
    Let $s$ be the random size drawn for the definition of the instances. We first claim that $\shortexpect[\abs{ A_{q^\prime+j}\cap S}] \leq \beta^{-50j}$ for all $j\geq 1$, where the expectation is over the uniform choice of set $S_1$ for $\D_1$. Indeed, by contradiction suppose there is a $j\geq 1$ such that 
    $\shortexpect[\abs{A_{q^\prime+j}\cap S}]=\frac{a_{q^\prime+j}s}n > \beta^{-50j}$. 
    By definition of $q^\prime$, for $1\leq i\leq j$, 
    \[
    1\ge \frac{a_{q^\prime+i}s}n>\beta^{-50j}. 
    \]
    Therefore, the queries $A_{q^\prime}, A_{q^\prime+1}, \dots,
    A_{q^\prime+j}$ contribute to $C_{50j}$, and we obtain 
    $
    \frac{C_{50j}}{50j}\geq \frac{j}{50j}=\frac{2}{100},
    $
    contradicting~\expref{Lemma}{lem:wishful}. Thus, the expected
    intersection can be bounded as follows:
    \begin{align*}
      \shortexpect[\abs{(A_{q^\prime+1}\cup A_{q^\prime+2}\dots\cup A_{q})\cap S}]
      &\leq \shortexpect[\abs{A_{q^\prime+1}\cap S}]+ \shortexpect[\abs{A_{q^\prime+2}\cap S}]+\dots+\shortexpect[\abs{A_q\cap S}]\\
      &\leq \beta^{-50}+\beta^{-100}+\dots\\
      &\leq 2^{-12},
    \end{align*}
    since $\beta\geq 2$. 
    From this, we obtain the result holds for $\mathcal{\D}_1$ by Markov's inequality. 
    The same applies to $\mathcal{\D}_2$ with probability of intersection at
    most $2^{-10}$, proving the lemma. 
\end{proof}

We now turn our attention to the sets with \emph{large} intersections. We will show that under $\mathcal{\D}_1$ and
$\mathcal{\D}_2$, the results of querying the sets $A_1,\dots A_{q^\prime}$ are indistinguishable from simply picking
elements uniformly from the sets $A_1, \dots, A_{q^\prime}$. More precisely, we establish the following.
\begin{lemma}\label{lem:supp:na:queries:big}
Let $\eta^\ast=2^{-10}$ and $\eta_s=1/100$; and fix $\ell\in\{1,2\}$. At least an $1-\eta_s$ fraction of elements 
$s_1,\dots,s_{q^\prime}\in A_1\times A_2, \dots,A_{q^\prime}$ satisfy
\begin{equation}
\probaDistrOf{\ell}{(s_1,\dots, s_{q^\prime})}\in\left[ 1-5\eta^\ast, 1+5\eta^\ast\right]\cdot\frac{1}{\abs{A_1}\dots\abs{A_{q^\prime}}}, 
\end{equation}
where $\probaDistrOf{\ell}{(s_1,\dots, s_{q^\prime})}$ denotes the probability that $s_1\dots s_{q^\prime}$ are the results of the queries $A_1,\dots,A_{q^\prime}$ under $\COND_{\D_\ell}$.
\end{lemma}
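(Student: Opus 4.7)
The plan is to fix a tuple $\mathbf{s}=(s_1,\dots,s_{q^\prime})\in A_1\times\dots\times A_{q^\prime}$ and to express $\probaDistrOf{\ell}{\mathbf{s}}$ as an expectation over the random support set $S$ defining $\D_\ell$ (with $|S|=s$ if $\ell=1$, and $|S|=\beta s$ if $\ell=2$). Since the queries are non-adaptive and the samples are conditionally independent given $S$, writing $T=\{s_1,\dots,s_{q^\prime}\}$, up to a negligible correction coming from the (exponentially unlikely) empty-intersection fallback of the oracle, we have
\[
\probaDistrOf{\ell}{\mathbf{s}}=\shortexpect_S\!\left[\indic{T\subseteq S}\prod_{i=1}^{q^\prime}\frac{1}{|A_i\cap S|}\right].
\]
Writing $u\eqdef 1/(|A_1|\cdots|A_{q^\prime}|)$, the goal is to show that $\probaDistrOf{\ell}{\mathbf{s}}\in[1-5\eta^\ast,1+5\eta^\ast]u$ for a $(1-\eta_s)$ fraction of tuples $\mathbf{s}$.

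A tempting direct approach is per tuple: condition on $T\subseteq S$, apply a hypergeometric Chernoff concentration of $|A_i\cap S|$ around its mean $\mu_i\approx a_i|S|/n$, and obtain $\probaDistrOf{\ell}{\mathbf{s}}\approx\probaOf{T\subseteq S}\cdot\prod 1/\mu_i\approx(|S|/n)^{q^\prime}\cdot\prod n/(a_i|S|)=u$. This breaks down when $\prod 1/\mu_i$ is very small compared to $\eta^\ast$, since then even the rare event $\{|A_i\cap S|\ll\mu_i\}$ can contribute more than the main term to $\probaDistrOf{\ell}{\mathbf{s}}$ for a single $\mathbf{s}$. The cleaner route is a \emph{second-moment} argument: bound the variance of $p(\mathbf{s})\eqdef\probaDistrOf{\ell}{\mathbf{s}}$ under uniform $\mathbf{s}$ and apply Chebyshev's inequality.

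The first moment is $\shortexpect_{\mathbf{s}\sim\mathrm{Unif}}[p(\mathbf{s})]=u\sum_\mathbf{s} p(\mathbf{s})=u$. For the second moment, swapping the order of summation and expectation yields
\[
\sum_\mathbf{s} p(\mathbf{s})^2=\shortexpect_{S,S'}\!\left[\prod_{i=1}^{q^\prime}\frac{|A_i\cap S\cap S'|}{|A_i\cap S|\cdot|A_i\cap S'|}\right],
\]
with $S,S'$ i.i.d.\ copies of the support. Each factor has expectation $\approx 1/a_i$ because $|A_i\cap S\cap S'|\approx a_i(|S|/n)^2$ while $|A_i\cap S|,|A_i\cap S'|\approx a_i|S|/n$; the product concentrates around $\prod 1/a_i$ by joint hypergeometric concentration. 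Hence $\shortexpect_\mathbf{s}[p(\mathbf{s})^2]=u\cdot(1+o(1))\cdot u=u^2(1+o(1))$ and $\mathrm{Var}_\mathbf{s}[p(\mathbf{s})]=o(u^2)$.

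The main obstacle is the joint concentration of that product across all $i\leq q^\prime$, since the means $\mu_i$ span many orders of magnitude. I would leverage the Hitting Lemma (\autoref{lem:wishful}) exactly as in \autoref{lem:supp:na:queries:small}: for every $t>0$ at most $t/50$ big queries satisfy $\mu_i\in(1,\beta^t)$, so partitioning into exponential bands $\mathcal{B}_t=\{i:\mu_i\in[\beta^t,\beta^{t+1})\}$ and choosing error parameters $\delta_t$ that decay geometrically in $t$, a band-by-band Chernoff bound plus a union bound gives joint concentration of all $|A_i\cap S|$, $|A_i\cap S'|$, and $|A_i\cap S\cap S'|$ to within a $(1+o(1))$ factor, with total failure probability $o(1)$. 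With $\mathrm{Var}_\mathbf{s}[p(\mathbf{s})]=o(u^2)$ in hand, Chebyshev's inequality gives $\probaOf[\mathbf{s}\sim\mathrm{Unif}]{|p(\mathbf{s})-u|>5\eta^\ast u}\leq o(1)/(5\eta^\ast)^2=o(1)$; absorbing the $\bigO{q^2/\min_i|A_i|}=o(1)$ fraction of tuples with repeated coordinates into $\eta_s=1/100$ then completes the proof.
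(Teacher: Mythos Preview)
Your second-moment route is different from the paper's, and as sketched it has a real gap. The paper actually takes the ``direct'' approach you dismissed. The key observation you are missing is that one need only prove the \emph{lower bound} $p(\mathbf{s})\geq(1-o(1))u$ pointwise; the matching upper bound then follows for free from $\sum_{\mathbf{s}}p(\mathbf{s})=1$ by a Markov-type argument (if a $1-o(1)$ fraction already carries mass $\geq(1-o(1))$, there is no room left for more than an $o(1)$ fraction of tuples to have $p(\mathbf{s})>(1+5\eta^\ast)u$). For the lower bound the paper conditions on joint concentration of all $|A_i\cap S|$, with error parameters chosen band-by-band from the Hitting Lemma exactly as you suggested, and then runs a counting/Markov argument on the sets $S$ containing a fixed tuple. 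Your worry that the rare event $\{|A_i\cap S|\ll\mu_i\}$ swamps the main term is legitimate for the upper bound on a single $p(\mathbf{s})$, but the sum-to-one trick sidesteps it entirely.

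By contrast, your second-moment computation hinges on concentration of $|A_i\cap S\cap S'|$ around $a_i(|S|/n)^2=\mu_i\cdot(|S|/n)$, and this fails: since $|S|/n\leq n^{-1/4}$ while the Hitting Lemma only guarantees $\mu_i\geq\beta^{50}$ for the smallest big queries, the mean $\mu_i\cdot|S|/n$ can be $o(1)$, and then $|A_i\cap S\cap S'|=0$ with high probability rather than concentrating. One could try to rescue the second moment by bounding $\shortexpect\big[\prod_i|A_i\cap S\cap S'|\big]$ directly (without concentration), but that requires handling correlations among possibly overlapping $A_i$'s, and also controlling the contribution of the bad event $G^c$ for $S$ or $S'$; the trivial bound $\prod_i\frac{|A_i\cap S\cap S'|}{|A_i\cap S||A_i\cap S'|}\leq\prod_i\frac{1}{|A_i\cap S'|}$ on $G_S^c\cap G_{S'}$ gives a term of order $\Pr[G_S^c]\cdot\prod_i 1/\mu_i$, which is larger than the target $u=\prod_i 1/a_i$ by a factor $(n/|S|)^{q'}\gg 1$. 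So the variance bound $o(u^2)$ does not follow from the ingredients you listed.
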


As this holds for most distributions in both $\mathcal{\D}_1$ and $\mathcal{\D}_2$, this implies the queries are indistinguishable from the
product distribution over $ A_1\times A_2, \dots, A_{q^\prime}$ (which is the one induced by the same queries on the uniform distribution over $[n]$) in either case, with probability at least $1-\eta_s-5\eta^\ast$.

\begin{proofof}{\expref{Lemma}{lem:supp:na:queries:big}}
  From standard Chernoff-like 
  concentration bounds for hypergeometric random variables (Lemma~\ref{lem:chernoff-hyp}),  we obtain the claim below.
  \begin{claim}\label{claim:concent_inter}
  Suppose $A$ is a set of size $a$, and $S$ is a uniformly chosen random
  set of size $s$. Then, for all $\eta\in(0,1]$, we have 
  $
  \probaOf{\abs{A\cap S} >(1+\eta)\frac{as}{n} } < e^{-\eta^2\cdot \frac{as}{3n}}
  $
  and
  $
  \probaOf{\abs{A\cap S}<(1-\eta)\frac{as}{n} } < e^{-\eta^2\cdot \frac{as}{3n}}
  $.
  \end{claim}
  We use this to show that indeed all the $\abs{A_i\cap S}$  concentrate around their expected values for $1 \leq j\leq q^{\prime}$.
 First note that, as a consequence of~\expref{Lemma}{lem:wishful}, it is the case that these expected values satisfy $a_{q^\prime-j}s/n\ge \beta^{50(j+1)}$ for every $0\leq j\leq q^\prime-1$ (with probability at least $99/100$). Conditioning on this, we first invoke~\expref{Claim}{claim:concent_inter} on $A_j$ with $\eta=3\cdot\beta^{20(j+1)}$, and then apply a union bound to obtain
  \begin{equation}\label{claim:concent_inter-new}
  \probaOf{\exists j\in [q^\prime] \text{ s.t. } \abs{A_j\cap S} \notin \Big[1-4\cdot\beta^{-20(j+1)}, 1+4\cdot\beta^{-20(j+1)} \Big] \cdot \frac{a_js}{n} } <
  e^{-\beta^{10}}
  \end{equation}
  \noindent \ie, with high probability all intersections simultaneously concentrate around their expected values. \smallskip

  \noindent Note that since $s$ is at most $n^{3/4}$, each $A_i$ under
  consideration has size at least $n\beta^{50}/n^{3/4}>n^{1/4}$. Therefore, the probability that a random
  selection of elements from $A_1,\dots, A_{q^\prime}$ exhibits no collision is at least
  \[
  \prod_{i=1}^{q^\prime}\frac{|A_i|-q^\prime}{|A_i|}\ge
  \left(1-\frac{q^\prime}{n^{1/4}}\right)^{q^\prime} \ge
  1-\frac{(q^{\prime})^2}{n^{1/4}}>1-\frac{\log^2 n}{n^{1/4}}.
  \]
  We henceforth condition on this event.

  Let $N={n\choose s}$ be the number of outcomes for the set $S$. 
  We write $N_0\ge N(1-e^{-\beta^{10}})$ for the number of such sets for which~\expeqref{equation}{claim:concent_inter-new} holds. Let $s_1^{q^\prime}$ denote
  $s_1\dots s_{q^\prime}$.  
  For a set of distinct 
  $(s_1,\dots, s_{q^\prime})\in A_1\times\dots \times A_{q^\prime}$, let
  $N(s_1^{q^\prime})={n-q^\prime\choose s-q^\prime}$ be the number of
  sets of size $s$ that contain $s_1^{q^\prime}$, and let $N_0(s_1^{q^\prime})$ of
  them satisfy~\expeqref{equation}{claim:concent_inter-new}. 

  By Markov's inequality, with probability at least $1-e^{-\beta^{9}}$, for a
  randomly chosen $s_1^{q^\prime}$ we have
  $N_0(s_1^{q^\prime})/N(s_1^{q^\prime})>1-e^{2-\beta^{9}}$.  For any such
  $s_1^{q^\prime}$, 
  \begin{align*}
    \probaOf{s_1^{q^\prime}}&\ge  \frac{N_0(s_1^{q^\prime})}{N}\cdot\prod_{i=1}^{q^\prime} \frac{1}{\abs{A_i\cap S}}
    \geq
    (1-e^{2-\beta^{9}})\frac{s(s-1)\dots(s-q^\prime+1)}{n(n-1)\dots(n-q^\prime+1)}
    \cdot (1-4\cdot \beta^{-19}) \prod_{i=1}^{q^\prime}\frac n{a_i s}\\
    &\geq (1-6\cdot \beta^{-19}) \prod_{i=1}^{q^\prime}\frac 1{a_i},
  \end{align*}
  for large $n$ and as $\abs{S} > n^{1/4}$.
  Since the sum of probabilities of elements is at most 1, the other
  side of the inequality in~\expref{Lemma}{lem:supp:na:queries:big} follows. 
\end{proofof}

\begin{proofof}{\expref{Theorem}{theo:support:na:lb:beta} and~\expref{Theorem}{theo:uniform:na:lb}}
Let $T_1$ (resp. $T_2$, $T_U$) be the distribution over transcripts generated by the queries $A_1,\dots,A_q$ when given conditional access to the distribution $\D_1$ from a $\no$-instance (resp. $\D_2$, resp. uniform $\uniformOn{[n]}$); that is, a distribution over $q$-tuples in $A_1\times\dots\times A_q$. Since the queries were non-adaptive, we can break $T_1$ (and similarly for $T_2$, $T_U$) in $T^{\rm big}_1\times T^{\rm small}_1$ according to $q^\prime$, and use~\expref{Lemma}{lem:supp:na:queries:big} and~\expref{Lemma}{lem:supp:na:queries:small} separately to obtain
$\totalvardist{T_1}{T_2} \leq \eta_s+\eta^\ast+ 2^{-10} < 1/50$ and $\totalvardist{T_1}{T_U} \leq \eta_s+\eta^\ast+ 2^{-10} < 1/50$ (for the latter, recalling that queries that do not intersect the support receive samples exactly uniformly distributed in the query set) -- thus establishing both theorems.
\end{proofof}

\subsection{On the dependence on $\eps$ in~\expref{Theorem}{theo:uniform:na:lb}}\label{ssec:lb:unif:general}
We remark that~\expref{Theorem}{theo:uniform:na:lb}, by establishing a lower bound of $\bigOmega{\log n}$ queries for non-adaptive testing of uniformity with constant distance parameter $1/4$, immediately implies, by a standard argument, an $\bigOmega{(\log n)/\eps}$ lower bound for distance parameter $\eps\in(0,1/4)$. In more detail, this is a consequence of the following reduction: any $q(n,\eps)$-query non-adaptive tester for uniformity $\Tester$ can be used, given conditional access to some distribution $\D$ on $[n]$, on the mixture distribution
\begin{equation}
    \D_\eps \eqdef 4\eps \D + (1-4\eps)\uniformOn{[n]}\,,
\end{equation}
for which a conditional oracle can be easily simulated given a conditional oracle for $\D$. Moreover, answering $q(n,\eps)$ to $\D_\eps$ can be done with an expected $4\eps q(n,\eps)$ conditional queries to $\D$. As it is immediate to see that $\dtv(\D_\eps, \uniformOn{[n]} ) = 4\eps \dtv( \D, \uniformOn{[n]} )$, we get that $\Tester$ can be used to obtain a tester for non-adaptive testing of uniformity with constant distance parameter $1/4$, with query complexity $O(\eps q(n,\eps))$ for every $\eps < 1/4$ (converting the expected query complexity to a worst-case one is straightforward via Markov's inequality followed by success probability amplification by a constant number of repetitions). Therefore, the lower bound of~\expref{Theorem}{theo:uniform:na:lb} implies that $q(n,\eps) = \bigOmega{(\log n)/\eps},$ as claimed. 
\noindent It is also worth noting that the above argument does \emph{not} yield an analogue statement for support-size estimation via~\expref{Theorem}{theo:support:na:lb:beta}. Indeed, mixing the distribution $\D$ with the uniform distribution does not preserve the support size in that case (nor the guarantee that every point of the support has probability mass at least $\tau/n$).

\section{An Upper Bound for Support-Size Estimation}\label{sec:ub-supp-size}
  In this section, we prove our upper bound for constant-factor support-size estimation, reproduced below.
  \thmsuppsize*

Before describing and analyzing our algorithm, we shall need the following results, that we will use as subroutines: the first one will help us detecting when the support is already dense. The second, assuming the support is sparse enough, will enable us to find an element with zero probability mass, which can afterwards be used as a ``reference'' to verify whether any given element is inside or outside the support. Finally, the last one will use such a reference point to check whether a candidate support size $\sigma$ is smaller or significantly bigger than the actual support size.

\begin{lemma}\label{lemma:testsmallsupport}
  Given $\tau > 0$ and \COND access to a distribution $\D$ such that each support element has probability at least $\tau/n$, as well as parameters $\eps\in(0,1/2), \delta\in(0,1)$, there exists an algorithm  \textsc{TestSmallSupport} (\expref{Algorithm}{algo:distinguish:epssmall:support}) that makes $\tildeO{1/(\tau\eps^2)+1/\tau^2}\cdot\log(1/\delta)$ queries to the oracle, and satisfies the following. \textsf{(i)} If $\supp{\D} \geq (1-\eps/2)n$, then it returns \accept with probability at least $1-\delta$; \textsf{(ii)} if $\supp{\D} \leq (1-\eps)n$, then it returns \reject with probability at least $1-\delta$.
\end{lemma}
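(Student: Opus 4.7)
The plan is to implement \textsc{TestSmallSupport} in two phases. Phase~1 locates a \emph{light reference element} $y^\ast \in \supp{\D}$ satisfying $\D(y^\ast) \leq C/n$ for some absolute constant $C$, using $\tildeO{1/\tau^2}\cdot\log(1/\delta)$ queries. Phase~2 uses $y^\ast$ as an anchor to classify membership in $\supp{\D}$ for $m = \bigTheta{\log(1/\delta)/\eps^2}$ uniformly random elements $x_1,\ldots,x_m \in \domain$, and accepts iff the empirical fraction classified as ``in support'' exceeds $1 - 3\eps/4$. Phase~2 will cost $\tildeO{1/(\tau\eps^2)}\cdot\log(1/\delta)$ queries, so summing with Phase~1 gives the claimed bound.

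For Phase~2, the test is direct: for each candidate $x_i$, we query $\COND_\D(\{x_i, y^\ast\})$ for $T = \bigTheta{\log(m/\delta)/\tau}$ rounds, classifying $x_i$ as ``in support'' iff $x_i$ is returned at least once. If $x_i \notin \supp{\D}$, then $\D(x_i) = 0$ and $x_i$ is never returned. If $x_i \in \supp{\D}$, then $\D(x_i) \geq \tau/n$, so each round returns $x_i$ with probability at least $(\tau/n)/(\tau/n + C/n) = \bigOmega{\tau}$; by the choice of $T$, at least one round catches it with probability $\geq 1 - \delta/(2m)$. A union bound then guarantees all $m$ classifications are simultaneously correct with probability $\geq 1-\delta/2$, and a Hoeffding bound places the empirical fraction within $\eps/4$ of $|\supp{\D}|/n$, yielding the required distinction between the two cases.

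Phase~1 is driven by the pigeonhole observation that at most $n/C$ elements can carry $\D$-mass larger than $C/n$ (else total mass would exceed $1$). Hence whenever $|\supp{\D}| \geq (1-\eps/2)n$, at least $(1-\eps/2)n - n/C = \bigOmega{n}$ support elements are ``medium,'' with $\D$-mass in $[\tau/n, C/n]$, and these medium elements collectively carry $\D$-mass $\bigOmega{\tau}$. Drawing $s = \bigTheta{\log(1/\delta)/\tau}$ independent samples from $\D$ via $\COND_\D(\domain)$ thus yields at least one medium element with probability $\geq 1 - \delta/4$. We extract a light $y^\ast$ by running a tournament of pairwise comparisons (each invoking $\COND_\D(\{y^\ast_i, y^\ast_j\})$ for $\bigO{\log(s/\delta)}$ rounds), and verify the surviving $y^\ast$ against a few additional comparisons; the total cost absorbs into $\tildeO{1/\tau^2}\cdot\log(1/\delta)$.

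The main obstacle is ensuring correctness in the \reject regime when $\D$ is highly concentrated, so that all samples drawn from $\D$ may turn out to be heavy (i.e., have mass $\gg C/n$). In that case Phase~1 will detect the absence of a light reference and reject; this verdict is correct, because a $\D$ whose medium mass is below the $\bigOmega{\tau}$ threshold must satisfy $|\supp{\D}| \leq n/C + O(1/\tau) \leq (1-\eps)n$ for suitable $C$. Symmetrically, if Phase~1 succeeds but the reference is only moderately light, the Phase~2 classifications will \emph{undercount} $\supp{\D}$ (as heavy $\D(y^\ast)$ makes genuine support elements appear ``not in support''), so the algorithm still correctly rejects. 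Stitching these regimes together along the lines of the case analyses of \cite{CRS:12, UCSD:15} yields the claimed $1-\delta$ correctness across all inputs.
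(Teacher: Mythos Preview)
Your approach is essentially the paper's: sample a handful of points from $\D$, use pairwise conditional queries to isolate a lightest one as reference, then classify $\Theta(1/\eps^2)$ uniform points as support/non-support by comparing against that reference and thresholding at $1-3\eps/4$. The paper's Fact (your pigeonhole observation) is the same: in case~\textsf{(i)} the set $\{x:\D(x)\in[\tau/n,2/n]\}$ carries $\Omega(\tau)$ mass, so $\Theta(1/\tau)$ draws from $\D$ hit it.

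Two points where the paper is cleaner. First, the ``detect the absence of a light reference and reject'' step is both unspecified (you never say how to verify $\D(y^\ast)\le C/n$ without an external anchor) and unnecessary. The paper simply skips it: whatever reference $t$ Phase~1 outputs, it lies in $\supp{\D}$, so any $x_i\notin\supp{\D}$ has $\D(x_i)=0$ and is \emph{never} returned by $\COND_\D(\{x_i,t\})$. Hence the classification has one-sided error, the count of ``in support'' points is at most the true number of support points among the $x_i$'s, and in case~\textsf{(ii)} you reject regardless of whether $t$ is light. Your ``undercount'' sentence already captures this; just drop the verification step. (Also, your bound $|\supp{\D}|\le n/C+O(1/\tau)$ is a slip: the second term should be $cn$ for the threshold constant $c$, not $O(1/\tau)$.)

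Second, your direct boosting---taking $s=\Theta(\log(1/\delta)/\tau)$ samples in Phase~1 and $m=\Theta(\log(1/\delta)/\eps^2)$ in Phase~2, with per-comparison failure $\delta/\text{poly}(s,m)$---yields $\log^2(1/\delta)$ or $\log^3(1/\delta)$ overall, not the stated $\log(1/\delta)$. The paper instead runs the whole procedure with constant failure probability and wraps it in $O(\log(1/\delta))$ independent repetitions with a majority vote, which gives the clean linear dependence on $\log(1/\delta)$.
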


\begin{lemma}\label{lemma:getnonsupport}
  Given \COND access to a distribution $\D$, an upper bound $m < n$ on $\supp{\D}$, as well as parameter $\delta\in(0,1)$, there exists an algorithm  \textsc{GetNonSupport} (\expref{Algorithm}{algo:get:non:support}) that makes $\tildeO{\log^2\frac{1}{\delta}\log^{-2}\frac{n}{m}}$ queries to the oracle, and returns an element $r\in\domain$ such that $r\notin\supp{\D}$ with probability at least $1-\delta$.
\end{lemma}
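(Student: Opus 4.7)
The plan is a sample-and-verify approach, which balances the number of random candidates against the number of conditional queries used to certify each one. Set $s \eqdef \lceil C \log(1/\delta)/\log(n/m) \rceil$ for a suitable constant $C$, and draw $s$ candidates $y_1,\dots,y_s$ independently and uniformly from $\domain$. Since $\abs{\supp{\D}} \leq m$, the probability that every $y_i$ lies in the support is at most $(m/n)^s \leq \delta/2$, so with probability at least $1-\delta/2$ at least one $y_i$ is truly non-support. This first step produces a short list that is guaranteed, with high probability, to contain a valid answer.

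To isolate such a $y_i$, I would first obtain a reference support element $x^\ast \in \supp{\D}$ by a single $\COND_\D(\domain)$ query, and then for each candidate perform $k \eqdef \tildeO{\log(1/\delta)/\log(n/m)}$ independent $\COND_\D(\{x^\ast, y_i\})$ queries. The key observation is that if $y_i \notin \supp{\D}$ then $\D(\{x^\ast,y_i\}) = \D(x^\ast) > 0$, so the oracle returns $x^\ast$ deterministically in every query and $y_i$ is never observed; thus every truly non-support candidate is correctly flagged. The algorithm outputs the first $y_i$ that is never returned in its $k$ tests; the total query cost is $1 + sk = \tildeO{\log^2(1/\delta)/\log^2(n/m)}$, matching the claimed bound.

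The main obstacle is the converse direction, namely ruling out false positives: a support candidate $y_i$ with $\D(y_i)\ll\D(x^\ast)$ might also fail to appear in $k$ queries purely by chance, causing the algorithm to output a support element. My plan for handling this would be to combine the minimum-mass promise in force in the broader support-size estimation algorithm (so that $\D(y_i)\geq \tau/n$ for support elements, making the probability $y_i$ is returned in a single query at least a polynomial function of $\tau/n$) with a careful tuning of $k$, and, if necessary, to replace the single reference $x^\ast$ by a pool of $\tildeO{1}$ references sampled from $\D$ so that at least one has mass comparable to the lightest support element. A union bound over the $s$ candidates then contributes at most $\delta/2$ to the failure probability, yielding the advertised $1-\delta$ correctness guarantee.
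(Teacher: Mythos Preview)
Your sample-and-verify approach is natural, but the false-positive issue you flag is fatal to it and your proposed repairs do not close the gap. Even granting the minimum-mass promise $\D(y_i)\geq \tau/n$ for support elements, the reference $x^\ast$ you draw from $\D$ may have mass close to $1$, so a single $\COND_\D(\{x^\ast,y_i\})$ query returns $y_i$ with probability only $\Theta(\tau/n)$. To guarantee a support candidate $y_i$ appears at least once with probability $1-\delta/(2s)$ you would need $k = \Omega((n/\tau)\log(s/\delta))$ queries per candidate, which destroys the claimed complexity. The pool-of-references idea does not help: references drawn from $\D$ are biased toward heavy elements, and in general $\tildeO{1}$ samples from $\D$ cannot be expected to include an element of near-minimum mass (consider a distribution with one element of mass $1-\epsilon$ and the rest sharing $\epsilon$).

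The paper's proof avoids this obstacle entirely by dispensing with any fixed support reference. It draws the same $k=\Theta(\log(1/\delta)/\log(n/m))$ uniform candidates $s_1,\dots,s_k$, but then uses \textsc{Compare} on every pair $\{s_i,s_j\}$ (with constant parameters $\eta=1/2$, $K=2$) to build a total preorder $\preceq$ by probability mass, and returns any $\preceq$-minimal candidate. The crucial point is that if $s_i\notin\supp{\D}$ and $s_j\in\supp{\D}$, then $\D_{\{s_i,s_j\}}$ places all its mass on $s_j$, so \textsc{Compare} reliably outputs the correct order with only $O(\log(k^2/\delta))$ queries, \emph{regardless} of how light $s_j$ is. Thus as soon as some candidate is non-support (probability $\geq 1-\delta/2$), it will be ranked minimal. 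The $\binom{k}{2}$ pairwise calls give the $\tildeO{\log^2(1/\delta)/\log^2(n/m)}$ bound. The key idea you are missing is to compare the uniform candidates \emph{against each other} rather than against an external reference.
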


\begin{lemma}\label{lemma:isatmostsupportsize}
  Given \COND access to a distribution $\D$, inputs $\sigma \geq 2$ and $r\notin\supp{\D}$, as well as parameters $\eps\in(0,1/2), \delta\in(0,1)$, there exists an algorithm  \textsc{IsAtMostSupportSize} (\expref{Algorithm}{algo:iamss}) that makes $\tildeO{1/\eps^2}\log(1/\delta)$ queries to the oracle, and satisfies the following. The algorithm returns either \yes or \no, and \textsf{(i)} if $\sigma \geq \supp{\D}$, then it returns \yes with probability at least $1-\delta$; \textsf{(ii)} if $\sigma > (1+\eps)\supp{\D}$, then it returns \no with probability at least $1-\delta$.
\end{lemma}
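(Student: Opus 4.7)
The plan is to estimate the probability $p \eqdef \probaOf{T \cap \supp{\D} = \emptyset}$, where $T \subseteq \domain$ is a uniformly random subset of size $t \eqdef \lceil n/\sigma \rceil$, and threshold the result. Writing $s \eqdef \abs{\supp{\D}}$, one computes $p = \binom{n-s}{t}/\binom{n}{t} = (1\pm o(1))\,e^{-st/n}$ in the regime of interest (the assumption $\sigma \geq 2$ guarantees $t \leq \lceil n/2 \rceil + 1$). Under the natural reading of the statement consistent with the name \textsc{IsAtMostSupportSize}, namely that case (i) corresponds to $\sigma \leq \supp{\D}$, we have $st/n \geq 1$ and hence $p \leq e^{-1}(1+o(1))$; while in case (ii), $\sigma > (1+\eps)\supp{\D}$ gives $st/n < 1/(1+\eps)$ and thus $p \geq e^{-1/(1+\eps)}(1-o(1)) \geq e^{-1}(1 + \bigOmega{\eps})$. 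The two regimes are therefore separated by an $\bigOmega{\eps}$ additive gap in $p$, so it is enough to estimate $p$ within additive accuracy $\bigO{\eps}$ with confidence $1-\delta$.

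To test whether a given $T$ intersects $\supp{\D}$, I would use the key property that if $\D(T) > 0$ then $\COND_\D(T)$ returns some $x \in T \cap \supp{\D}$, whereas if $\D(T) = 0$ it returns a uniform $x \in T$ (by the modeling convention), which is necessarily outside the support. A single trial thus draws $T$ of size $t$, queries $\COND_\D(T)$ to obtain $x$, and checks whether $x \in \supp{\D}$ by issuing $k = \bigO{\log(m/\delta)}$ auxiliary queries $\COND_\D(\{x,r\})$: if $r$ is ever returned then $x \notin \supp{\D}$ (and hence $T \cap \supp{\D} = \emptyset$), otherwise $x \in \supp{\D}$ (so $T \cap \supp{\D} \neq \emptyset$) except with probability $2^{-k}$.

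Repeating this trial $m = \bigTheta{\log(1/\delta)/\eps^2}$ times, I would record the empirical frequency $\hat p$ of trials declared empty and output \yes if $\hat p < e^{-1}(1 + c\eps)$ for a suitable small constant $c>0$, and \no otherwise. A Chernoff bound yields $\abs{\hat p - p} \leq c\eps/4$ with probability $1-\delta/2$, while a union bound over the $m$ inner tests shows they all correctly classify their trials simultaneously with probability $1-\delta/2$. Combining these two events with the gap analysis above gives correctness with probability at least $1-\delta$ in both cases, and the total query complexity is $m\cdot(1+k) = \tildeO{1/\eps^2}\log(1/\delta)$, as claimed.

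The step I expect to require the most care is controlling the approximation $p \approx e^{-st/n}$ across the full range of $s$: when $s/n$ is bounded away from $1$ the standard estimates $(1-s/n)^t \leq p \leq e^{-st/n}$ give a clean $\bigOmega{\eps}$ gap, but the edge case where $s$ is close to $n$ (which can only arise in case (i), and where $p$ is essentially zero and thus safely below the threshold) should be handled by a separate simple argument. I would also double-check that the rounding in $t = \lceil n/\sigma \rceil$ does not erode the gap when $\sigma$ is very close to $n$; a slight tweak to $t$ or a small extra slack in the threshold should suffice without changing the stated complexity.
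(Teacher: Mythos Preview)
Your proposal is correct and follows essentially the same approach as the paper: estimate the probability that a random subset of the domain (of ``size'' roughly $n/\sigma$) misses the support, by sampling from the conditional distribution on that subset and comparing the sample to the reference point $r$; the $\Theta(\eps)$ gap in this probability between the two cases is then resolved by $O(1/\eps^2)$ repetitions and a Chernoff bound. The only differences are cosmetic: the paper includes each element independently with probability $1/\sigma$ (so the non-intersection probability is exactly $(1-1/\sigma)^\omega$, avoiding your edge-case analysis for fixed-size $T$), uses its \textsc{Compare} primitive rather than raw $\{x,r\}$ queries, and separates the $O(1/\eps^2)$ inner repetitions from an outer $O(\log(1/\delta))$ majority vote; none of this changes the argument or the complexity.
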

We defer the proofs of these 3 lemmata to the next subsections, and now turn to the proof of the theorem. 

\begin{proof}
The algorithm is given in~\expref{Algorithm}{algo:estimate:support}, and at a high-level works as follows: if first checks whether the support size is big (an $1-\bigO{\eps}$ fraction of the domain), in which case it can already stop and return a good estimate. If this is not the case, however, then the support is sparse enough to efficiently find an element $r$ \emph{outside} the support, by taking a few uniform points, comparing and ordering them by probability mass (and keeping the lightest). This element $r$ can then be used  as a reference point in a (doubly exponential) search for a good estimate: for each guess $\tilde{\omega}$, a random subset $S$ of size roughly $\tilde{\omega}$ is taken, a point $x$ is drawn from $\D_S$, and $x$ is compared to $r$ to check if $\D(x) > 0$. If so, then $S$ intersects the support, meaning that $\tilde{\omega}$ is an upper bound on $\omega$; repeating until this is no longer the case results in an accurate estimate of $\omega$.
\begin{algorithm}[H]
  \begin{algorithmic}[1]
        \If{$\textsc{TestSmallSupport}_\D(\eps,\frac{1}{10})$ returns \accept}
      \Return $\tilde{\omega} \gets (1-\eps^2)n$ \label{algo:es:step:firstcheck}
    \EndIf
        \State Call $\textsc{GetNonSupport}_\D((1-\frac{\eps}{2})n,\frac{1}{10})$ to obtain a non-support reference point $r$. \label{algo:es:step:gns}
    \For{$j$ \textbf{from} $0$ \textbf{to} $\log_{1+\eps}\log_{1+\eps} n$}
      \State Set $\tilde{\omega}\gets (1+\eps)^{(1+\eps)^j}$.
      \State Call $\textsc{IsAtMostSupportSize}_\D(\tilde{\omega}, r, \eps, \frac{1}{100\cdot (j+1)^2})$ to check if $\tilde{\omega}$ is an upper bound on $\omega$. \label{algo:es:step:ialss:1}
      \If{ the call returned \no }
          \State Perform a binary search on $\{(1+\eps)^{j-1},\dots, (1+\eps)^j\}$ to find $i^\ast$, the smallest $i \geq 2$ such that $\textsc{IsAtMostSupportSize}_\D((1+\eps)^{i}, r, \eps, \frac{1}{10 (j+1)})$ returns \no. \label{algo:es:step:ialss:2}
          \State \Return  $\tilde{\omega}\gets (1+\eps)^{i^\ast-1}$.
      \EndIf
    \EndFor
  \end{algorithmic}
  \caption{\label{algo:estimate:support}$\textsc{EstimateSupport}_\D$}
\end{algorithm}

In the rest of this section, we formalize and rigorously argue the above. Conditioning on each of the calls to the subroutines \textsc{TestSmallSupport}, \textsc{GetNonSupport} and \textsc{IsAtMostSupportSize} being correct (which overall happens except with probability at most $1/10+1/10+\sum_{j=1}^\infty 1/(100j^2) + 1/10 < 1/3$ by a union bound), we show that the output $\tilde{\omega}$ of \textsc{EstimateSupport} is indeed within a factor $(1+\eps)$ of $\omega$.
\begin{itemize}
  \item If the test on Step~\ref{algo:es:step:firstcheck} passes, then by~\expref{Lemma}{lemma:testsmallsupport} we must have $\supp{\D} > (1-\eps)n$. Thus, the estimate we return is correct, as $[(1-\eps)n,n]\subseteq[\tilde{\omega}/(1+\eps), (1+\eps)\tilde{\omega}]$.
  \item Otherwise, if it does not then by~\expref{Lemma}{lemma:testsmallsupport} it must be the case that $\supp{\D} < (1-\eps/2)n$.
\end{itemize}
Therefore, if we reach Step~\ref{algo:es:step:gns} then $(1-\eps/2)n$ is indeed an upper bound on $\omega$, and $\textsc{GetNonSupport}$ will return a point $r\notin\supp{\D}$ as expected. The analysis of the rest of the algorithm is straightforward: from the guarantee of \textsc{IsAtMostSupportSize}, the binary search will be performed for the first index $j$ such that $\omega\in[(1+\eps)^{(1+\eps)^{j-1}}, (1+\eps)^{(1+\eps)^{j}}]$; and will be on a set of $(1+\eps)^{j-1}$ values. Similarly, for the value $i^\ast$ eventually obtained, it must be the case that $(1+\eps)^{i^\ast} > \omega$ (by contrapositive, as \no was returned by the subroutine) but $(1+\eps)^{i^\ast-1} \leq (1+\eps)\omega$ (again, as the subroutine returned \yes). But then, $\tilde{\omega}=(1+\eps)^{i^\ast-1}\in(\omega/(1+\eps),(1+\eps)\omega]$ as claimed.

\paragraph{Query complexity.} The query complexity of our algorithm originates from the following different steps:
\begin{itemize}
  \item the call to \textsc{TestSmallSupport}, which from~\expref{Lemma}{lemma:testsmallsupport} costs $\tildeO{1/\eps^2}$ queries;
  \item the call to \textsc{GetNonSupport}, on Step~\ref{algo:es:step:gns}, that from the choice of the upper bound also costs $\tildeO{1/\eps^2}$ queries;
  \item the (at most) $\log_{1+\eps}\log_{1+\eps} n = \bigO{(\log\log n)/\eps}$ calls to \textsc{IsAtMostSupportSize} on Step~\ref{algo:es:step:ialss:1}. Observing that the query complexity of \textsc{IsAtMostSupportSize} is only $\tildeO{1/\eps^2}\cdot\log(1/\delta)$, and from the choice of $\delta=\frac{1}{(j+1)^2}$ at the $j$-th iteration this step costs at most
  \[
      \tildeO{\frac{1}{\eps^2}}\cdot\sum_{j=1}^{\log_{1+\eps}\log_{1+\eps} n} \bigO{\log(j^2)} = \tildeO{\frac{1}{\eps^2}\log_{1+\eps}\log_{1+\eps} n}= \tildeO{\frac{1}{\eps^3}\log_{1+\eps}\log_{1+\eps} n}
  \] 
  queries.
  \item Similarly, Step~\ref{algo:es:step:ialss:2} results in at most $j\leq {\log\log n}$ calls to \textsc{IsAtMostSupportSize} with $\delta$ set to ${1/(10(j+1))}$, again costing $\tildeO{\frac{1}{\eps^2}}\cdot\log j = \tildeO{\frac{1}{\eps^2}\log_{1+\eps}\log_{1+\eps} n} = \tildeO{\frac{1}{\eps^3}\log \log n} $ queries.
\end{itemize}
Gathering all terms, the overall query complexity is $\tildeO{\frac{\log\log n}{\eps^3}}$, as claimed.
\end{proof}

\subsection{Proof of~\expref{Lemma}{lemma:testsmallsupport}}
Hereafter, we assume without loss of generality that $\tau < 2$: indeed, if $\tau \geq 2$ then the support is of size at most $n/2$, and it suffices to return \reject to meet the requirements of the lemma. We will rely on the (easy) fact below, which ensures that any distribution with dense support and minimum non-zero probability $\tau/n$ put significant mass on ``light'' elements.
\begin{fact}\label{fact:big:mass:on:small:mass}
Fix any $\eps\in[0,1)$. Assume $\D$ satisfies both $\supp{\D} \geq (1-\eps)n$ and $\D(x)\geq \tau/n$ for $x\in\supp{\D}$. Then, setting $L_\eps\eqdef\setOfSuchThat{ x\in\domain }{ \D(x)\in [\tau/n,2/n]}$, we have $\abs{L_\eps}\geq (1/2-\eps)n$ and $\D(L_\eps)\geq (1/2-\eps)\tau$.
\end{fact}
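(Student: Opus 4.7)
The plan is a short two-step argument, first bounding the size of $L_\eps$ and then using the minimum-mass assumption to convert the cardinality bound into a probability bound.

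First I would define the set of ``heavy'' elements $H \eqdef \setOfSuchThat{x\in\domain}{\D(x) > 2/n}$, and argue by a Markov-type counting that $\abs{H} < n/2$. Indeed, since $\sum_{x\in\domain}\D(x) = 1$, one has $\frac{2}{n}\abs{H} < \sum_{x\in H}\D(x) \leq 1$, which yields $\abs{H} < n/2$. Intersecting with the support, the number of elements of $\supp{\D}$ not in $H$ is at least $(1-\eps)n - n/2 = (1/2-\eps)n$ by the assumption $\abs{\supp{\D}} \geq (1-\eps)n$. By definition every such element $x$ satisfies $\D(x) \leq 2/n$ (as $x\notin H$) and $\D(x) \geq \tau/n$ (as $x\in\supp{\D}$), so it lies in $L_\eps$. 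This establishes $\abs{L_\eps} \geq (1/2-\eps)n$.

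The second step is immediate: since every element of $L_\eps$ has probability at least $\tau/n$, summing over $L_\eps$ gives $\D(L_\eps) \geq \abs{L_\eps}\cdot \tau/n \geq (1/2-\eps)\tau$, as claimed.

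There is no real obstacle here: the only subtlety is that the argument relies simultaneously on an upper tail bound (few elements can exceed $2/n$, by a pigeonhole/Markov step) and on the two-sided assumption defining $L_\eps$, together with the support-density hypothesis to convert ``not heavy'' into ``moderate.'' The constant $2$ in the threshold $2/n$ is precisely what makes the Markov step yield the factor $1/2$ appearing in the conclusion.
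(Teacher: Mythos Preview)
Your argument is correct and essentially identical to the paper's: both isolate the ``heavy'' elements with mass exceeding $2/n$ and use the total-mass constraint to bound their number by $n/2$, then subtract from the support size. The paper phrases this as $1 \geq \D(\domain\setminus L_\eps) \geq (\abs{\supp{\D}} - \abs{L_\eps})\cdot\frac{2}{n}$ and rearranges, while you name the heavy set $H$ explicitly and bound $\abs{H}$ first---but these are the same computation in different order.
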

\begin{proof}
  As the second claim follows directly from the first and the minimum mass of elements of $L_\eps$, it suffices to prove that $\abs{L_\eps}\geq (1/2-\eps)n$. This follows from observing that
  \[
    1=\D(\domain) \geq \D(\domain\setminus L_\eps) \geq (\abs{\supp{\D}} - \abs{L_\eps})\frac{2}{n} \geq 2(1-\eps) - \frac{2\abs{L_\eps}}{n}
  \]
  and rearranging the terms.
\end{proof}

\paragraph{Description and intuition.} The algorithm (as described in~\expref{Algorithm}{algo:distinguish:epssmall:support}) works as follows: it first takes enough uniformly distributed samples $s_1,\dots,s_m$ to get (with high probability) an accurate enough fraction of them falling in the support to distinguish between the two cases. The issue is now to detect 
those $m_j$      
which indeed are support elements; note that we do not care about underestimating this fraction in case $\textsf{(b)}$ (when the support is at most $(1-\eps)n$, but importantly do not want to underestimate it in case  $\textsf{(a)}$ (when the support size is at least $(1-\eps/2)n$).
To perform this detection, we take constantly many samples \emph{according to $\D$} (which are therefore ensured to be in the support), and use pairwise conditional queries to sort them by increasing probability mass (up to approximation imprecision), and keep only the lightest of them, $t$. In case $\textsf{(a)}$, we now from~\expref{Fact}{fact:big:mass:on:small:mass} that with high probability our $t$ has mass in $[1/n,2/n]$, and will therefore be either much lighter than or comparable to \emph{any} support element: this will ensure that in case $\textsf{(a)}$ we do detect all of
the $m_j$   
that are in the support.

This also works in case $\textsf{(b)}$, even though~\expref{Fact}{fact:big:mass:on:small:mass} does not give us any guarantee on the mass of $t$. Indeed, either $t$ turns out to be light (and then the same argument
ensures that  
our estimate of the number of ``support''  
elements $m_j$    
is good), or $t$ is too heavy -- and then our estimate will end up being smaller than the true value. But this is fine, as the latter
only means we will reject the distribution (as we should, since we are in
the small-support case).

\begin{algorithm}[h]
  \begin{algorithmic}[1]
    \Require \COND access to $\D$; accuracy parameter $\eps\in(0,1/2)$, threshold $\tau > 0$, probability of failure $\delta$
    \State Repeat the following $\bigO{\log(1/\delta)}$ times and return the majority vote.
    \Loop
    \State Draw $m\eqdef \bigTheta{\frac{1}{\eps^2}}$ independent samples $s_1,\dots,s_m\sim\uniformOn{[n]}$.
    \State Draw $k\eqdef \bigTheta{\frac{1}{\tau}}$ independent samples $t_1,\dots,t_k\sim\D$.
  \ForAll{ $1\leq i < j\leq k$ } 
                                  \Comment{Order the $t_j$}   
      \State Call $\textsc{Compare}(\{t_i\},\{t_j\},\eta=\frac{1}{2},K=2,\frac{1}{4k^2})$ to get a $2$-approx. $\rho$ of $\frac{\D(t_j)}{\D(t_i)}$, \textsf{High} or \textsf{Low}. \label{algo:tss:step:compare} 
      \If{ $\textsc{Compare}$ returned \textsf{High} or a value $\rho$ }
        \State Record $t_i \preceq t_j$
      \Else
        \State Record $t_j \prec t_j$
      \EndIf
    \EndFor
    \State Set $t$ to be (any of the) smallest
 elements $t_j$    
    according to $\preceq$. 
    \ForAll{ $1\leq j \leq m$ } \Comment{Find the fraction of support elements among
the $m_j$}       
      \State Call $\textsc{Compare}(\{t\},\{s_j\},\eta=\frac{1}{2},K=\frac{2}{\tau},\frac{1}{4m})$ to get either a value $\rho$, \textsf{High} or \textsf{Low}. \label{algo:gns:step:compare:2}       \If{ $\textsc{Compare}$ returned \textsf{High} or a value $\rho \geq 1/2$ }
        \State Record $s_j$ as ``support.''
      \EndIf
    \EndFor 
    \If{ the number of
elements $s_j$   
      marked ``support'' is at least $(1-\frac{3}{4}\eps)m$}
      \Return \accept \label{algo:gns:step:accept}
    \Else\ \Return \reject
    \EndIf
    \EndLoop
  \end{algorithmic}
  \caption{\label{algo:distinguish:epssmall:support}$\textsc{TestSmallSupport}_\D$}
\end{algorithm}

\paragraph{Correctness.} Let $\eta$ be the fraction of the
elements $s_j$   
that are in the support of the distribution. By a multiplicative Chernoff bound and a suitable constant in our choice of $m$, we get that \textsf{(i)} if $\supp{\D} \geq 1-\eps/2$, then $\probaOf{\eta < 1-3\eps/4} \leq 1/12$, while  \textsf{(ii)} if $\supp{\D} \leq 1-\eps/2$, then $\probaOf{\eta \geq 1-3\eps/4} \leq 1/12$. We hereafter condition on this (\ie, $\eta$ being a good enough estimate). We also condition on all calls to $\textsc{Compare}$ yielding results as per specified, which by a union bound overall happens except with probability $1/12+1/12 = 1/6$, and break the rest of the analysis in two cases.

\begin{enumerate}[(a)]
\item Since the support size $\omega$ is in this case at least $(1-\eps/2)n$, from~\expref{Fact}{fact:big:mass:on:small:mass} we get that $\D(L_{\eps/2})\geq \frac{1-\eps}{2}\tau \geq \frac{\tau}{4}$. Therefore, except with probability at most $(1-\tau/4)^k < 1/12$, at least one of
the $t_j$        
will belong to $L_{\eps/2}$. When this happens, and by the choice of parameters in the calls to $\textsc{Compare}$, we get that $t\in L_{\eps/2}$; that is $\D(t)\in[\tau/n,2/n]$. But then the calls to the routine on Step~\ref{algo:gns:step:compare:2} will always return either a value (since $t$ is ``comparable'' to all $x\in L_{\eps/2}$ -- \ie, has probability within a factor $2/\tau$ of them) or $\textsf{High}$ (possible for 
those $s_j$      
that have weight greater than $2/n$), unless $s_j$ has mass $0$ (that is, is not in the support). Therefore, the fraction of points marked as support is exactly $\eta$, which by the foregoing discussion is at least $1-3\eps/4$: the algorithm returns \accept at Step~\ref{algo:gns:step:accept}.
\item Conversely, if $\omega \leq (1-\eps)n$, there will be
a fraction $1-\eta  > 3\eps/4$ of
the $s_j$        
having mass 0. However, no matter what $t$ is it will still be in the support and therefore have $\D(t) \geq \tau/n$: for
these $s_j$,    
the call to \textsc{Compare} on Step~\ref{algo:gns:step:compare:2} can thus only return $\textsf{Low}$. This means that there can only be less than $(1-\frac{3}{4}\eps)m$ points marked ``support'' among
the $s_j$,       
and hence that the algorithm will return \reject as it should.
\end{enumerate}
Overall, the inner loop of the algorithm thus only fails with probability
at most $1/12+1/6+1/12=1/3$,
(where the 3 events contributing to the union bound are (i) when $\eta$ fails to be a good estimate,
(ii) when the calls to \textsc{Compare} fail to yield results as
claimed, and (iii)  
when no $t_j$ hits $L_{\eps/2}$ in case \textsf{(a)}).  
Repeating independently $\log(1/\delta)$ times and taking the
majority vote boosts the probability of success to $1-\delta$.

\paragraph{Query complexity.} The sample complexity comes from the $k^2$ calls on Step~\ref{algo:gns:step:compare}  (each costing $\bigO{\log k}$ queries) and the $m$ calls on Step~\ref{algo:gns:step:compare:2} (each costing $\bigO{\frac{1}{\tau}\log m}$ queries). By the setting of $m$ and because of the $\log(1/\delta)$ repetitions, this results in an overall query complexity $\bigO{\left(\frac{1}{\tau^2}\log\frac{1}{\tau}+\frac{1}{\tau\eps^2}\log \frac{1}{\eps}\right)\log \frac{1}{\delta}}$.\qed

\subsection{Proof of~\expref{Lemma}{lemma:getnonsupport}}

As described in~\expref{Algorithm}{algo:get:non:support}, the subroutine is fairly simple: using its knowledge of an upperbound on the support size, it takes enough uniformly distributed samples to have (with high probability) at least one falling outside the support. Then, it uses the conditional oracle to ``order'' these samples according to their probability mass, and returns the lightest of them -- \ie, one with zero probability mass.

\begin{algorithm}[h]
  \begin{algorithmic}[1]
    \Require \COND access to $\D$; upper bound $m$ on $\supp{\D}$, probability of failure $\delta$
    \Ensure Returns $r\in\domain$ such that, with probability at least $1-\delta$,  $r\notin\supp{\D}$
    \State Set $k\eqdef \clg{\log \frac{2}{\delta}\log^{-1}\frac{n}{m}}$.
    \State Draw independently $k$ points $s_1,\dots, s_{k}\sim \uniformOn{\domain}$
    \ForAll{ $1\leq i < j\leq k$ }
      \State Call $\textsc{Compare}(\{s_i\},\{s_j\},\eta=\frac{1}{2},K=2,\frac{\delta}{2k^2})$ to get a $2$-approx. $\rho$ of $\frac{\D(s_j)}{\D(s_i)}$, \textsf{High} or \textsf{Low}. \label{algo:gns:step:compare}       \If{ $\textsc{Compare}$ returned \textsf{High} or a value $\rho$ }
        \State Record $s_i \preceq s_j$
      \Else
        \State Record $s_j \prec s_j$
      \EndIf
    \EndFor
    \State \Return $\arg\min_{\preceq} \{s_1,\dots,s_k\}$ \Comment{Return (any) minimal element for $\preceq$.}
  \end{algorithmic}
  \caption{\label{algo:get:non:support}$\textsc{GetNonSupport}_\D(m, \delta)$}
\end{algorithm}

\paragraph{Correctness.} It is straightforward to see that provided at least one of
the $s_j$       
falls outside the support and that all calls to  $\textsc{Compare}$ behave as expected, then the procedure returns one of the ``lightest''
elements $s_j$,   
\ie\ a non-support element. By a union bound, the latter holds with probability at least $1-\delta/2$; as for the former, since $m$ is by assumption an upper bound on the support size it holds with probability at least $1-(m/n)^k \geq 1-\delta/2$ (from our setting of $k$). Overall, the procedure's output is correct with probability at least $1-\delta$, as claimed.

\paragraph{Query complexity.} The query complexity of $\textsc{GetNonSupport}$ is due to the $\binom{k}{2}$ calls to $\textsc{Compare}$, and is therefore $\bigO{k^2 \log \frac{k}{\delta}}$ because of our setting for $\eta$ and $K$ (which is in turn $\tildeO{\log^2\frac{1}{\delta}\log^{-2}\frac{n}{m}}$). (In our case, we shall eventually take $m=(1-\eps/2)n$ and $\delta=1/10$, thus getting $k=\bigO{1/\eps}$ and a query complexity of $\tildeO{1/\eps^2}$.) \qed

\subsection{Proof of~\expref{Lemma}{lemma:isatmostsupportsize}}

Our final subroutine, described in~\expref{Algorithm}{algo:iamss}, essentially derives from the following observation: a random set $S$ of size (approximately) $\sigma$ obtained by including independently each element of the domain with probability $1/\sigma$ will intersect the support on $\omega/\sigma$ points on expectation. What we \emph{can} test given our reference point $r\notin\supp{\D}$, however, is only whether $S\cap\supp{\D} = \emptyset$. But this is enough, as by repeating sufficiently many times (taking a random $S$ and testing whether it intersects the support at all) we can distinguish between the two cases we are interested in. Indeed, the expected fraction of times $S$ includes a support element in either cases is known to the algorithm and differs by roughly $\bigOmega{\eps}$, so $\bigO{1/\eps^2}$ repetitions are enough to tell the two cases apart.

\begin{algorithm}[H]
  \begin{algorithmic}[1]
    \Require \COND access to $\D$; size $\sigma \geq 2$, non-support element $r$, accuracy \eps, probability of failure $\delta$
    \Ensure Returns, with probability at least $1-\delta$, \yes if $\sigma \leq \abs{\supp{\D}}$ and  \no if $\sigma > (1+\eps)\abs{\supp{\D}}$.
    \State Set $\alpha \gets \left(1-\frac{1}{\sigma}\right)^\sigma\in[\frac{1}{4},e^{-1}]$, $\tau\gets \alpha(\alpha^{-\frac{\eps}{2}}-1) = \bigTheta{\eps}$.
    \State Repeat the following $\bigO{\log(1/\delta)}$ times and return the majority vote.
    \Loop
      \For{$m=\bigO{\frac{1}{\tau^2}}$ times} \label{algo:iamss:atomic}
          \State Draw a subset $S\subseteq\domain$ by including independently each $x\in\domain$ with probability $1/\sigma$.
          \State Draw $x\sim \D_S$.
          \State\label{algo:iatss:step:compare} Call $\textsc{Compare}(\{x\},\{r\},\eta=\frac{1}{2},K=1,\frac{1}{100m})$
           \Comment{\textsf{Low} if $S\cap\supp{\D}\neq \emptyset$; $\rho\in[\frac{1}{2},2)$ o.w.}
          \State\label{algo:iamss:record:yes} Record \yes if $\textsc{Compare}$ returned \textsf{Low}, \no otherwise.
        \EndFor
        \State\label{algo:iamss:thresholding} \Return \yes if at least $m\left(\alpha+\frac{\tau}{2}\right)$ ``\yes'''s were recorded, \no otherwise. \Comment{Thresholding.}
    \EndLoop
  \end{algorithmic}
  \caption{\label{algo:iamss}$\textsc{IsAtMostSupportSize}_\D(\sigma, r, \eps, \delta)$}
\end{algorithm}

\paragraph{Correctness.} We condition on all calls to $\textsc{Compare}$ being correct: by a union bound, this overall happens with probability at least $99/100$. We shall consider the two cases $\sigma \leq \omega$ and $\sigma > (1+\eps)\omega$, and focus on the difference of probability $p$ of recording \yes on Step~\ref{algo:iamss:record:yes} between the two, in any fixed of the $m$ iterations. In both cases, note $p$ is exactly $(1-1/\sigma)^\omega$.
\begin{itemize}
  \item If $\sigma \leq \omega$, then we have $p \leq \left(1-\frac{1}{\sigma}\right)^\sigma = \alpha$.
  \item If $\sigma > (1+\eps)\omega$, then $p > \left(1-\frac{1}{\sigma}\right)^{\sigma/(1+\eps)} > \left(1-\frac{1}{\sigma}\right)^{\sigma(1-\eps/2)} = \alpha^{1-\eps/2}$.
\end{itemize}
As $\alpha\in[\frac{1}{4},e^{-1}]$, the difference between the two is $\tau=\alpha(\alpha^{-\eps/2} -1 )=\bigTheta{\eps}$. Thus, repeating the atomic test of Step~\ref{algo:iamss:atomic} $\bigO{1/\tau^2}$ before thresholding at Step~\ref{algo:iamss:thresholding} yields the right answer with constant probability, then brought to $1-\delta$ by the outer repeating and majority vote.

\paragraph{Query complexity.} Each call to $\textsc{Compare}$ at Step~\ref{algo:iatss:step:compare} costs $\bigO{\log m}$ queries, and is in total repeated $\bigO{m\log(1/\delta}$ times. By the setting of $m$ and $\tau$, the overall query complexity is therefore $\bigO{\frac{1}{\eps^2}\log\frac{1}{\eps}\log\frac{1}{\delta}}$. \qed

\subsection{A Non-Adaptive Upper Bound}\label{ssec:nonadaptive:support:size:ub}
In this section, we sketch how similar -- yet less involved -- ideas can be used to derive a non-adaptive upper bound for support-size estimation.
For simplicity, we describe the algorithm for $2$-approximation: adapting it to general $(1+\eps)$-approximation is straightforward.

The high-level idea is to perform a simple binary search (instead of the double exponential search from the preceding section) to identify the greatest lower bound on the support size of the form $k=2^j$. For each guess $k\in\{2,4,8\,\dots,n\}$, we pick uniformly at random a set $S\subseteq[n]$ of cardinality $k$, and check whether $\D_S$ is uniform using the non-adaptive tester of Chakraborty et al.~\cite[Theorem 4.1.2]{CFGM:13}. If $\D_S$ is found to be uniform for all values of $k$, we return $n$ as our estimate (as the distribution is close to uniform on $[n]$); otherwise, we return $n/k$, for the smallest $k$ on which $\D_S$ was found to be far from uniform. Indeed, $\D_S$ can only be far from uniform if $S$ contains points from the support of $\D$, which intuitively only happens if $n/k=\bigOmega{1}$.\medskip 

\noindent To be more precise, the algorithm proceeds as follows, where $\tau >0$ is an absolute constant.
\begin{algorithmic}
  \ForAll{$k\in\{2,4,\dots,n\}$}
    \State Set a counter $c_k\gets 0$.
    \For{$m=\bigO{\log\log n}$ times}
      \State Pick uniformly at random a set $S\subseteq[n]$ of $k$ elements.
      \State Test (non-adaptively) uniformity of $\D_S$ on $S$, with the tester of~\cite{CFGM:13}.
      \If{the tester rejects}
        increment $c_k$.
      \EndIf
    \EndFor
    \If{$c_k > \tau\cdot m$} \Return $\tilde{\omega}\gets \frac{n}{k}$.
    \EndIf
  \EndFor
  \State \Return $\tilde{\omega}\gets n$.
\end{algorithmic}
The query complexity is easily seen to be $\poly{\log n}$, from the $\tildeO{\log n}$ calls to the $\poly(\log n)$ tester of~\cite[Theorem 4.1.2]{CFGM:13}. As for correctness, it follows from the fact that for any set $S$ with mass $\D(S)>0$ which contains at least an $\eta$ fraction of points outside the support, it holds that $\D_S$ is $\eta$-far from $\uniformOn{S}$.

\paragraph{Acknowledgments.}
Cl\'ement Canonne would like to thank Dana Ron and Rocco Servedio for the many helpful discussions and remarks that influenced the lower bound construction of~\expref{Section}{sec:lb-equiv}. The authors are grateful to L\'aszl\'o Babai, Robert Krauthgamer, and the anonymous reviewers for their helpful and detailed comments.

\nocite{ValiantValiant:10ub}

\bibliographystyle{alpha} 
\bibliography{v014a988}

\end{document}